\DeclareMathOperator{\Rk}{Rk}
\DeclareMathOperator{\diag}{Diag}
\DeclareMathOperator{\dd}{d}
\DeclareMathOperator{\wt}{wt}
\theoremstyle{definition}
\newtheorem{definition}{Definition}
\newtheorem{construction}{Construction}
\newtheorem{example}{Example}
\theoremstyle{plain}
\newtheorem{theorem}{Theorem}
\newtheorem{proposition}{Proposition}
\newtheorem{lemma}{Lemma}
\newtheorem{remark}{Remark}
\newtheorem{corollary}{Corollary}
\title{Locally Repairable Convolutional Codes with \\Sliding Window Repair}
\author[1]{Umberto Mart{\'i}nez-Pe\~{n}as \thanks{umberto@ece.utoronto.ca; umberto.martinez@unine.ch}}
\author[2]{Diego Napp \thanks{diego.napp@ua.es}}
\affil[1]{Dept.\ of Electrical \& Computer Engineering,
University of Toronto, Canada}
\affil[2]{Dept.\ of Mathematics, University of Alicante, Spain}
\date{}
\begin{document}

\lstset{numbers=left, numberstyle=\tiny, stepnumber=1, numbersep=5pt}

\maketitle

\begin{abstract}
Locally repairable convolutional codes (LRCCs) for distributed storage systems (DSSs) are introduced in this work. They enable local repair, for a single node erasure (or more generally, $ \partial - 1 $ erasures per local group), and sliding-window global repair, which can correct erasure patterns with up to $ \dd^c_j - 1 $ erasures in every window of $ j+1 $ consecutive blocks of $ n $ nodes, where $ \dd^c_j $ is the $ j $th column distance of the code. The parameter $ j $ can be adjusted, for a fixed LRCC, according to different catastrophic erasure patterns, requiring only to contact $ n(j+1) - \dd^c_j + 1 $ nodes, plus less than $ \mu n $ other nodes, in the storage system, where $ \mu $ is the memory of the code. A Singleton-type bound is provided for $ \dd^c_j $. If it attains such a bound, an LRCC can correct the same number of catastrophic erasures in a window of length $ n(j+1) $ as an optimal locally repairable block code of the same rate and locality, and with block length $ n(j+1) $. In addition, the LRCC is able to perform the flexible and somehow local sliding-window repair by adjusting $ j $. Furthermore, by adjusting and/or sliding the window, the LRCC can potentially correct more erasures in the original window of $ n(j+1) $ nodes than an optimal locally repairable block code of the same rate and locality, and length $ n(j+1) $. Finally, the concept of partial maximum distance profile (partial MDP) codes is introduced. Partial MDP codes can correct all information-theoretically correctable erasure patterns for a given locality, local distance and information rate. An explicit construction of partial MDP codes whose column distances attain the provided Singleton-type bound, up to certain parameter $ j=L $, is obtained based on known maximum sum-rank distance convolutional codes.

\textbf{Keywords:} Convolutional Codes, Distributed Storage, Locally Repairable Codes, Locally Repairable Convolutional Codes, Sliding-Window Repair, Sum-Rank Metric.

\end{abstract}

\section{Introduction} \label{sec intro}

Locally repairable codes (LRCs) \cite{gopalan} are an important class of codes for Distributed Storage Systems (DSSs), since they allow to repair a single node by contacting and downloading the content of a small number (called \textit{locality}) of other nodes (in contrast with MDS codes), while still being able to repair a large number of nodes in case of catastrophic erasures (in contrast with Cartesian products). LRCs are thus natural hybrids between MDS codes and Cartesian products of codes that enjoy both global and local erasure-correction capabilities simultaneously, given by global and local distances, respectively. Note that \textit{repair} is typically used interchangeably with \textit{erasure correction} in the storage literature. We will use both terms throughout this work.

LRCs have already been implemented in practice (see \cite{azure, xoring} for instance). \textit{Optimal} LRCs (meaning LRCs attaining optimal global distance for a given locality, local distance and information rate) for general parameters and field sizes that are linear in the code length were first obtained in \cite{tamo-barg}. LRCs capable of correcting all information-theoretically correctable global erasure patterns, for a given locality, local distance and information rate, were introduced in \cite{blaum-RAID, gopalan-MR} (where they are called \textit{partial MDS} and \textit{maximally recoverable} LRCs, respectively). As expected, maximally recoverable LRCs also attain optimal global distance. However, they can correct strictly more global erasure patterns than general optimal LRCs (see Remark \ref{remark optimal LRCs are not MR LRCs}) for the same parameters. Constructions of maximally recoverable LRCs with relatively small field sizes for general parameters have been given in \cite{gabrys, lrc-sum-rank} (see also the references therein). 

On the other hand, it is shown in \cite{erasure-convolutional} that maximum distance profile (MDP) convolutional codes provide an interesting alternative to MDS block codes since they admit \textit{sliding-window} erasure correction: They can correct any erasure pattern such that there are no more than $ (n-k)(j+1) $ erasures in any consecutive $ j+1 $ blocks of $ n $ symbols (that is, $ n(j+1) $ consecutive symbols), where $ k/n $ is the rate of the code (see \cite[Th. 3.1]{erasure-convolutional} or Fig. \ref{fig sliding window}). Furthermore, the correction is performed somehow locally by sliding recursively the window of $ j+1 $ blocks, and the parameter $ j $ may vary arbitrarily up to a certain constant $ L $ determined by the degree (thus memory) of the convolutional code (see (\ref{eq def of L})). Therefore MDP convolutional codes already enable certain local and flexible repair, since the window size $ n(j+1) $ can be chosen according to how catastrophic the erasure pattern is. Moreover, by adjusting and/or sliding a window of $ j + 1 $ blocks (see Fig. \ref{fig moving back}), an MDP code can potentially correct in a window of size $ n(j+1) $ more erasures than an MDS block code of the same rate and of block length $ n(j+1) $. Unfortunately, in case of one single node erasure (most common case), sliding-window repair with $ j = 0 $ still requires contacting and downloading the content of $ \mu n $ extra symbols, where $ \mu $ is the memory of the code, due to its convolutional nature.

Motivated by the discussion in the previous paragraph, we introduce in this work \textit{locally repairable convolutional codes} (LRCCs). When being optimal in terms of global distance or maximal recoverability, LRCCs can repair a single node (or more generally, $ \partial - 1 $ erasures per local group) by contacting $ r < n $ (or even $ r < k $) other nodes and simultaneously enable sliding-window repair (see Fig. \ref{fig sliding window repair}), which can be set up flexibly according to different catastrophic erasure patterns (see Figs. \ref{fig sliding window repair} and \ref{fig sliding window repair in partial MDP}), and which can potentially correct in a window of size $ n(j+1) $ more erasures than an optimal or maximally recoverable locally repairable block code of the same rate and locality, and of block length $ n(j+1) $ (see Fig. \ref{fig moving back}).

LRCCs also enable encoding and storing an unrestricted sequence of files, while locality remains constant and encoding and sliding-window repair complexities are all bounded (by the memory of the code). Furthermore, LRCCs can easily be turned into block codes by converting them into tail-biting convolutional codes, while the properties described above still hold.

We now illustrate with Example \ref{example intro} and Fig. \ref{fig moving back} the main advantages of LRCCs over block LRCs. For fairness, we compare optimal LRCCs with optimal LRCs. 

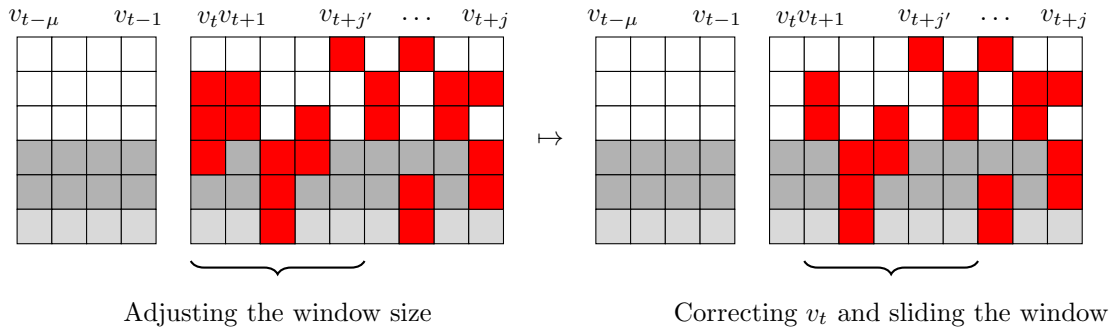
\begin{figure*} [!t]
\begin{center}
\begin{tabular}{c@{\extracolsep{1cm}}c}
\begin{tikzpicture}[
square/.style = {draw, rectangle, 
                 minimum size=\m, outer sep=0, inner sep=0, font=\small,
                 },
                        ]
\def\m{13pt}
\def\w{26} 
\def\wm{25}
\def\h{6}
\def\glob{3}
\def\loc{5}
    \pgfmathsetmacro\uw{int(\w/2)}
    \pgfmathsetmacro\uh{int(\h/2)}

\node [] at (7*\m + \m, 0.01) {$ v_{t - \mu} $};
\node [] at (10*\m + \m, 0.01) {$ v_{t - 1} $};

  \foreach \x in {8,...,11}
  {
    \foreach \y in {1,...,\h}
       {    
           
           \ifnum\y>\loc
               \node [square, fill=gray!30]  (\x,\y) at (\x*\m,-\y*\m) {$  $};
           \else
               \ifnum\y>\glob
                   \node [square, fill=gray!60]  (\x,\y) at (\x*\m,-\y*\m) {$  $};
               \else
                   \node [square, fill=white]  (\x,\y) at (\x*\m,-\y*\m) {$  $};
               \fi
           \fi
       }
  }

\node [] at (12*\m + \m, 0.01) {$ v_{t} $};
\node [] at (13*\m + \m, 0.01) {$ v_{t + 1} $};
\node [] at (16*\m + \m, 0.01) {$ v_{t + j^\prime} $};
\node [] at (18*\m + \m, 0.01) {$ \ldots $};
\node [] at (20*\m + \m, 0.01) {$ v_{t + j} $};

  \foreach \x in {13,...,21}
  {
    \foreach \y in {1,...,\h}
       {    
           
           \ifnum\y>\loc
               \node [square, fill=gray!30]  (\x,\y) at (\x*\m,-\y*\m) {$  $};
           \else
               \ifnum\y>\glob
                   \node [square, fill=gray!60]  (\x,\y) at (\x*\m,-\y*\m) {$  $};
               \else
                   \node [square, fill=white]  (\x,\y) at (\x*\m,-\y*\m) {$  $};
               \fi
           \fi
       }
  }

\node [square, fill=red!100]  (13,2) at (13*\m,-2*\m) {$  $};   
\node [square, fill=red!100]  (13,3) at (13*\m,-3*\m) {$  $};   
\node [square, fill=red!100]  (13,4) at (13*\m,-4*\m) {$  $};  
  
\node [square, fill=red!100]  (14,2) at (14*\m,-2*\m) {$  $};   
\node [square, fill=red!100]  (14,3) at (14*\m,-3*\m) {$  $};  

\node [square, fill=red!100]  (15,4) at (15*\m,-4*\m) {$  $}; 
\node [square, fill=red!100]  (15,5) at (15*\m,-5*\m) {$  $}; 
\node [square, fill=red!100]  (15,6) at (15*\m,-6*\m) {$  $}; 

\node [square, fill=red!100]  (16,3) at (16*\m,-3*\m) {$  $}; 
\node [square, fill=red!100]  (16,4) at (16*\m,-4*\m) {$  $};  

\node [square, fill=red!100]  (17,1) at (17*\m,-1*\m) {$  $};
   
\node [square, fill=red!100]  (18,2) at (18*\m,-2*\m) {$  $};   
\node [square, fill=red!100]  (18,3) at (18*\m,-3*\m) {$  $}; 

\node [square, fill=red!100]  (19,1) at (19*\m,-1*\m) {$  $};
\node [square, fill=red!100]  (19,5) at (19*\m,-5*\m) {$  $};
\node [square, fill=red!100]  (19,6) at (19*\m,-6*\m) {$  $}; 

\node [square, fill=red!100]  (20,2) at (20*\m,-2*\m) {$  $};   
\node [square, fill=red!100]  (20,3) at (20*\m,-3*\m) {$  $};  
   
\node [square, fill=red!100]  (21,2) at (21*\m,-2*\m) {$  $};  
\node [square, fill=red!100]  (21,4) at (21*\m,-4*\m) {$  $};   
\node [square, fill=red!100]  (21,5) at (21*\m,-5*\m) {$  $};

   \draw [draw, decorate, thick,decoration={brace,amplitude=5pt,mirror}]
   (12.5*\m,-7*\m) -- (17.5*\m,-7*\m) node[midway,yshift=-2em]{Adjusting the window size};

\end{tikzpicture}

\begin{tikzpicture}[
square/.style = {draw, rectangle, 
                 minimum size=\m, outer sep=0, inner sep=0, font=\small,
                 },
                        ]
\def\m{13pt}
\def\w{26} 
\def\wm{25}
\def\h{6}
\def\glob{3}
\def\loc{5}
    \pgfmathsetmacro\uw{int(\w/2)}
    \pgfmathsetmacro\uh{int(\h/2)}

\node [] at (5.2*\m + \m, -3.5*\m) {$ \mapsto $};

\node [] at (7*\m + \m, 0.01) {$ v_{t - \mu} $};
\node [] at (10*\m + \m, 0.01) {$ v_{t - 1} $};

  \foreach \x in {8,...,11}
  {
    \foreach \y in {1,...,\h}
       {    
           
           \ifnum\y>\loc
               \node [square, fill=gray!30]  (\x,\y) at (\x*\m,-\y*\m) {$  $};
           \else
               \ifnum\y>\glob
                   \node [square, fill=gray!60]  (\x,\y) at (\x*\m,-\y*\m) {$  $};
               \else
                   \node [square, fill=white]  (\x,\y) at (\x*\m,-\y*\m) {$  $};
               \fi
           \fi
       }
  }

\node [] at (12*\m + \m, 0.01) {$ v_{t} $};
\node [] at (13*\m + \m, 0.01) {$ v_{t + 1} $};
\node [] at (16*\m + \m, 0.01) {$ v_{t + j^\prime} $};
\node [] at (18*\m + \m, 0.01) {$ \ldots $};
\node [] at (20*\m + \m, 0.01) {$ v_{t + j} $};

  \foreach \x in {13,...,21}
  {
    \foreach \y in {1,...,\h}
       {    
           
           \ifnum\y>\loc
               \node [square, fill=gray!30]  (\x,\y) at (\x*\m,-\y*\m) {$  $};
           \else
               \ifnum\y>\glob
                   \node [square, fill=gray!60]  (\x,\y) at (\x*\m,-\y*\m) {$  $};
               \else
                   \node [square, fill=white]  (\x,\y) at (\x*\m,-\y*\m) {$  $};
               \fi
           \fi
       }
  }

\node [square, fill=red!100]  (14,2) at (14*\m,-2*\m) {$  $};   
\node [square, fill=red!100]  (14,3) at (14*\m,-3*\m) {$  $};  

\node [square, fill=red!100]  (15,4) at (15*\m,-4*\m) {$  $}; 
\node [square, fill=red!100]  (15,5) at (15*\m,-5*\m) {$  $}; 
\node [square, fill=red!100]  (15,6) at (15*\m,-6*\m) {$  $}; 

\node [square, fill=red!100]  (16,3) at (16*\m,-3*\m) {$  $}; 
\node [square, fill=red!100]  (16,4) at (16*\m,-4*\m) {$  $};   

\node [square, fill=red!100]  (17,1) at (17*\m,-1*\m) {$  $};
   
\node [square, fill=red!100]  (18,2) at (18*\m,-2*\m) {$  $};   
\node [square, fill=red!100]  (18,3) at (18*\m,-3*\m) {$  $}; 

\node [square, fill=red!100]  (19,1) at (19*\m,-1*\m) {$  $};
\node [square, fill=red!100]  (19,5) at (19*\m,-5*\m) {$  $};
\node [square, fill=red!100]  (19,6) at (19*\m,-6*\m) {$  $}; 

\node [square, fill=red!100]  (20,2) at (20*\m,-2*\m) {$  $};   
\node [square, fill=red!100]  (20,3) at (20*\m,-3*\m) {$  $};  
   
\node [square, fill=red!100]  (21,2) at (21*\m,-2*\m) {$  $};    
\node [square, fill=red!100]  (21,4) at (21*\m,-4*\m) {$  $};   
\node [square, fill=red!100]  (21,5) at (21*\m,-5*\m) {$  $};

   \draw [draw, decorate, thick,decoration={brace,amplitude=5pt,mirror}]
   (13.5*\m,-7*\m) -- (18.5*\m,-7*\m) node[midway,yshift=-2em]{Correcting $ v_t $ and sliding the window};

\end{tikzpicture}

\end{tabular}
\end{center}

\caption{Illustration of Example \ref{example intro}. The window consisting of the $ j+1 = 9 $ blocks $ v_t, v_{t+1}, \ldots, v_{t+j} $ contains $ 21 $ erasures. Hence it cannot be corrected by an LRCC with optimal $ j $th column distance, which is $ 18 $, by considering that window. Furthermore, it could not be corrected either by a block LRC of length $ (j+1)n = 54 $, dimension $ (j+1)k = 27 $, locality $ r = 5 $ and local distance $ \partial - 1 = 2 $, since its distance is also $ 18 $. Assume the LRCC also has optimal $ j^\prime $th column distance for $ j^\prime = 4 $, which would be $ 12 $. Then the LRCC may correct the erasure pattern by reducing the window length to $ j^\prime+1 = 5 $ blocks.  }
\label{fig moving back}
\end{figure*}

\begin{example} \label{example intro}

Consider a $ (6,3) $ convolutional code that encodes a stream of file vectors over a finite field $ \mathbb{F} $, each of length $ k = 3 $, into a stream of encoded vectors, each of length $ n = 6 $. 

Assume a node in the storage system stores a symbol over $ \mathbb{F} $, and call \textit{block} each set of $ n = 6 $ coordinates supporting each encoded vector. In this example, each block forms a \textit{local group}. If the code has locality $ r = 5 $ and local distance $ \partial = 2 $ (Section \ref{sec locality}), it means that a single node erasure ($ \partial - 1 $ node erasures) in each block may be repaired by only contacting the other $ 5 $ nodes in that block. 

The code can correct erasure patterns with up to $ {\rm d}_j^c - 1 $ erasures in every window of $ j+1 $ consecutive blocks of $ n $ symbols, where $ j = 0,1,2, \ldots $ can be adjusted. Assume that an erasure pattern as in Fig. \ref{fig moving back} occurs, with $ 21 $ erasures in a given window of $ j + 1 = 9 $ blocks. If the LRCC has optimal $ j $th column distance $ {\rm d}_j^c $ (as in Corollaries \ref{cor j-MSRD applied to const 1} and \ref{cor construction of partial MDP}), then $ {\rm d}_j^c = 18 $, for $ j = 8 $, and the code cannot correct such erasures considering such a window. Furthermore, an optimal block LRC with block length $ (j+1)n = 54 $, dimension $ (j+1)k = 27 $, locality $ r = 5 $ and local distance $ \partial = 2 $, also has global distance $ 18 $ (see \cite[Eq. (1)]{gopalan}), hence it cannot correct that erasure pattern either.

However, we may adjust the window for the LRCC. Consider instead windows of length $ j^\prime + 1 = 5 $, as in Fig. \ref{fig moving back}. If the LRCC also has optimal $ j^\prime $th column distance (as in Corollaries \ref{cor j-MSRD applied to const 1} and \ref{cor construction of partial MDP}), then $ {\rm d}_{j^\prime}^c = 12 $ for $ j^\prime = 4 $. Observe that now every window of $ j^\prime + 1 = 5 $ consecutive blocks of $ n $ symbols contains at most $ 11 $ erasures. Therefore, the LRCC may correct such an erasure pattern by sliding the new adjusted window of length $ j^\prime + 1 = 5 $, whereas the optimal block LRC as in the previous paragraph cannot.

The disadvantage of the LRCC is that, in order to perform such an erasure correction, we need to read the content (which needs to be correct) of the $ \mu $ blocks of $ n $ symbols previous to such window (see Fig. \ref{fig moving back}), where $ \mu $ is the memory of the LRCC. 
\end{example}

\begin{example} \label{example intro 2}
Consider now a $ (6,4) $ LRCC with locality $ r = 5 $ and local distance $ \partial = 2 $ (Section \ref{sec locality}). Assume also that the code has memory $ \mu = 5 $ and degree $ \delta = 20 $ (Subsection \ref{subsec basic definitions}) and that it is an optimal LRCC (as in Corollaries \ref{cor j-MSRD applied to const 1} and \ref{cor construction of partial MDP}). As in Example \ref{example intro}, such an LRCC can correct the same number of erasures in any window consisting of $ L+1 = 26 $ consecutive blocks ($ n(L+1) = 156 $ symbols) as an optimal block LRC of the same rate ($ 2/3 $), same locality ($ r = 5 $), same local distance ($ \partial = 2 $) and total length $ n(L+1) = 156 $. In both cases, such a number of erasures is $ 32 = (n-k)(L+1) - \left\lceil \frac{k(L+1)}{r} \right\rceil + 1 $ (see Theorem \ref{th singleton bound on column distances} and \cite[Eq. (1)]{gopalan}, respectively). 

However, the LRCC may correct any erasure pattern with up to $ (n-k)(j+1) - \left\lceil \frac{k(j+1)}{r} \right\rceil + 1 $ erasures in any window of $ n(j+1) $ consecutive nodes, for all $ j = 0,1,\ldots, L = 25 $. This only requires reading and downloading the content of the remaining nodes in that window (that is, $ k(j+1) + \left\lceil \frac{k(j+1)}{r} \right\rceil - 1 $ symbols), plus another $ \mu N = \mu (n-1) = 25 $ previous nodes (see Fig. \ref{fig sliding window repair in partial MDP}). For instance, for $ j = 2 $, we may repair any erasure pattern with up to $ 4 $ erasures in any consecutive $ 18 $ nodes ($ 3 $ blocks of $ 6 $ nodes), by contacting $ 14 $ nodes in that window, plus another $ \mu N = 25 $ previous nodes. The optimal LRC, in contrast, would always require contacting $ 124 = k(L+1) + \left\lceil \frac{k(L+1)}{r} \right\rceil - 1 $ other nodes in order to repair any $ 4 $ erasures in $ 3 $ blocks of $ 6 $ nodes, since at least one of these blocks contains $ 2 $ erasures, which cannot be repaired locally. 

Therefore, adjusting the window size when using an LRCC also reduces the number of nodes that need to be contacted. Thus sliding-window erasure correction of LRCCs provides a type of erasure correction in between local and global erasure correction.
\end{example}

Our main contributions are the following. We define LRCCs (Definition \ref{def lrc convolutional codes}) and provide a Singleton-type bound on their column distances (Theorem \ref{th singleton bound on column distances}), which measure the global sliding window repair capability of the code. We later define partial MDP codes (Definition \ref{def partial MDP}), which can correct all information-theoretically correctable erasure patterns for the given local constrains and, in particular, attain the previous bound for as long as possible. We provide in Construction \ref{construction} a method for finding partial MDP codes based on outer MSRD convolutional codes (Theorem \ref{th MSRD implies partial MDS}). By plugging in Construction \ref{construction} the MSRD convolutional codes from \cite{mahmood-convolutional}, we obtain an explicit family of partial MDP codes (Corollary \ref{cor construction of partial MDP}) for general parameters. Their main disadvantage is their big global field size, although local fields are small. However, this is only an issue in terms of computational complexity, since nodes in DSSs typically store large amounts of data. Furthermore, our construction gives some field size to guarantee the existence of partial MDP codes, but constructions over smaller fields may be possible. 

To conclude this introduction, we note that the use of streaming or convolutional codes for storage or as LRCs is not new. Binary tail-biting convolutional codes were proposed as LRCs in \cite{lrc-convoluted, rcc}, but sliding-window repair was not considered. Locality properties of more general (but still binary) convolutional codes were recently considered in \cite{ivanov}. However, LRCCs and sliding-window repair as considered in this work were not treated in \cite{ivanov}. Rateless streaming codes (e.g. Fountain codes \cite{fountain}) are an interesting alternative to MDS block codes for global repair in DSSs (see \cite[Ch. 50]{mackay}), since they generally achive low redundancy and enable global erasure correction with complexity of $ \mathcal{O}(k \log(k)) $ XOR operations (products in $ \mathbb{F}_2 $) or even less, for $ k $ encoded symbols. Locally repairable Fountain codes were proposed in \cite{fountainLRC}. However, their locality is of order $ \log(k) $ (unbounded), for $ k $ encoded symbols, and they do not enable sliding-window global repair.

The remainder of the paper is organized as follows. In Section \ref{sec preliminaries}, we collect some preliminaries on convolutional codes. In Section \ref{sec locality}, we introduce LRCCs and give a Singleton-type bound on their column distances, which determine the sliding-window erasure-correction capability of LRCCs. In Section \ref{sec connecting LRCC and SR convolutional}, we show how to obtain LRCCs with arbitrary and small-field local codes and optimal global column distances (in view of the previous bound) based on codes with optimal column sum-rank distances \cite{mahmood-convolutional}. In Section \ref{sec partial MDS convolutional}, we introduce \textit{partial MDP} convolutional codes, whose sliding windows can correct analogous erasure patterns as partial MDS block codes \cite{blaum-RAID, gopalan-MR}. We also provide concrete constructions of partial MDP convolutional codes based on the codes in \cite{mahmood-convolutional}. Finally, in Section \ref{sec further considerations}, we discuss extending our work to considering LRCCs with unequal localities and local distances, and how to turn our LRCCs to tail-biting convolutional codes.

\section{Preliminaries on Convolutional Codes} \label{sec preliminaries}

In this section, we collect general definitions and results on convolutional codes that we will use throughout the paper.

Let $ \mathbb{F} $ be a finite field, and denote by $ \mathbb{F}[D] $ the ring of polynomials with coefficients in $ \mathbb{F} $. Fix a positive integer $ n \in \mathbb{N} $. We will typically consider and graphically represent a word in $ \mathbb{F}[D]^n $ as an unrestricted sequence of vectors of length $ n $, $ v(D) = \sum_{j \in \mathbb{N}} v_j D^j \equiv (v_0, v_1, v_2, \ldots) \in (\mathbb{F}^n)^\mathbb{N} $, where we use the following terminology. A \textit{block} is each of the $ n $ consecutive coordinates in $ (\mathbb{F}^n)^\mathbb{N} $ that support each vector $ v_0, v_1, \ldots $, being \textit{the $ j $th block} the block containing the coordinates supporting $ v_j $, for $ j \in \mathbb{N} $. A \textit{symbol} is each component of the vectors $ v_0, v_1, \ldots $, thus it is an element of $ \mathbb{F} $. Finally, a \textit{node} is the abstraction of the storage device that stores a given symbol. Hence, in this work, each block corresponds to $ n $ nodes storing $ n $ symbols over $ \mathbb{F} $.

\subsection{Degree and Memory} \label{subsec basic definitions}

Recall that, since $ \mathbb{F}[D] $ is a principal ideal domain, every $ \mathbb{F}[D] $-submodule of $ \mathbb{F}[D]^n $ is free.

\begin{definition} \label{def convolutional codes}
An $ (n,k) $ \textit{convolutional code} is a (free) $ \mathbb{F}[D] $-submodule $ \mathcal{C} \subseteq \mathbb{F}[D]^n $ of rank $ k $. A \textit{generator matrix} of the code is a full-rank matrix $ G(D) \in \mathbb{F}[D]^{k \times n} $ such that
$$ \mathcal{C} = \left\lbrace u(D) G(D) \mid u(D) \in \mathbb{F}[D]^k \right\rbrace . $$
For a vector $ v(D) \in \mathbb{F}[D]^n $, we define its degree as the maximum degree of its components, which are polynomials in $ \mathbb{F}[D] $. We say that a generator matrix $ G(D) $ of $ \mathcal{C} $ is \textit{reduced} if the sum of the row degrees of $ G(D) $ is minimum among generator matrices of $ \mathcal{C} $, where by row degrees we mean the degrees of the rows in $ G(D) $. 
\end{definition}

It follows from Theorem A-2, Item 3, in \cite{general-mceliece} that if $ e_1 \leq e_2 \leq \ldots \leq e_k $ and $ f_1 \leq f_2 \leq \ldots \leq f_k $ are the row degrees of a reduced generator matrix $ G(D) \in \mathbb{F}[D]^{k \times n} $ and some other generator matrix $ \widetilde{G}(D) \in \mathbb{F}[D]^{k \times n} $, respectively, of $ \mathcal{C} $, then $ e_i \leq f_i $, for $ i = 1,2, \ldots, k $. In particular, the set of degrees $ \{ e_1, e_2, \ldots, e_k \} $ of one, thus any, reduced generator matrix is an invariant of the convolutional code $ \mathcal{C} $. Hence the following definition is consistent.

\begin{definition}
Given an $ (n,k) $ convolutional code $ \mathcal{C} \subseteq \mathbb{F}[D]^n $, let $ e_1, e_2, \ldots, e_k $ be the row degrees of one, thus any, of its reduced generator matrices. We define the \textit{degree} and \textit{memory} of $ \mathcal{C} $, respectively, as
$$ \delta = \delta(\mathcal{C}) = \sum_{i=1}^k e_i \quad \textrm{and} \quad \mu = \mu(\mathcal{C}) = \max \{ e_1, e_2, \ldots, e_k \}. $$
\end{definition}

Note that convolutional codes with zero memory (thus zero degree) coincide with (potentially infinite) Cartesian products of a single $ (n,k) $ block code $ \mathcal{C} \subseteq \mathbb{F}^n $.

\subsection{Non-Catastrophic Codes and Parity-Check Matrices} \label{subsec non-catastrophic}

In most results in this work, although not all, we will require convolutional codes to be \textit{non-catastrophic} or \textit{observable}, which we now define in terms of basic generator matrices. 

\begin{definition} \label{def basic and non-catastrophic}
Given an $ (n,k) $ convolutional code $ \mathcal{C} \subseteq \mathbb{F}[D]^n $, we say that a generator matrix $ G(D) $ of $ \mathcal{C} $ is basic if it has a polynomial right inverse, that is, if there exists $ F(D) \in \mathbb{F}[D]^{n \times k} $ such that $ G(D) F(D) = I_k $. We say that $ \mathcal{C} $ is non-catastrophic if it admits a generator matrix that is reduced and basic.
\end{definition}

Observe that any reduced and basic generator matrix $ G(D) = \sum_{j=0}^\mu G_j D^j $ of a convolutional code satisfies that $ G_0 \in \mathbb{F}^{k \times n} $ is full-rank. For many results in this work, we will only need this weaker property. 

Using Theorem A-1, Item 5, in \cite{general-mceliece}, and using the vector space over $ \mathbb{F}(D) $ (the field of fractions of $ \mathbb{F}[D] $) generated by a non-catastrophic convolutional code, it is easy to see that it admits a polynomial parity-check matrix. This strong property of non-catastrophic codes is what we will need for sliding-window repair, as described in Subsection \ref{subsec sliding window}.

\begin{lemma} \label{lemma parity check}
If $ \mathcal{C} \subseteq \mathbb{F}[D]^n $ is a non-catastrophic $ (n,k) $ convolutional code, then there exists a full-rank matrix $ H(D) \in \mathbb{F}[D]^{(n-k) \times n} $ such that
$$ \mathcal{C} = \{ v(D) \in \mathbb{F}[D]^n \mid v(D) H(D)^T = 0 \}. $$
We call $ H(D) $ a (polynomial) \textit{parity-check matrix} of $ \mathcal{C} $.
\end{lemma}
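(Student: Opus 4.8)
The plan is to construct $H(D)$ explicitly from a reduced and basic generator matrix $G(D)$ of $\mathcal{C}$, passing through the field of fractions $\mathbb{F}(D)$. First I would recall that, since $\mathcal{C}$ is non-catastrophic, it admits a generator matrix $G(D) \in \mathbb{F}[D]^{k \times n}$ that is both reduced and basic; in particular $G(D)$ has a polynomial right inverse $F(D) \in \mathbb{F}[D]^{n \times k}$ with $G(D)F(D) = I_k$. Viewing everything over the field $\mathbb{F}(D)$, the rows of $G(D)$ span a $k$-dimensional $\mathbb{F}(D)$-subspace $\mathcal{V} \subseteq \mathbb{F}(D)^n$, and its orthogonal complement $\mathcal{V}^\perp$ (with respect to the standard bilinear form) has dimension $n-k$. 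Choosing any basis of $\mathcal{V}^\perp$ and clearing denominators gives a full-rank matrix $\widetilde{H}(D) \in \mathbb{F}[D]^{(n-k) \times n}$ with $G(D)\widetilde{H}(D)^T = 0$.

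The next step is to upgrade $\widetilde{H}(D)$ so that its kernel over $\mathbb{F}[D]^n$ is exactly $\mathcal{C}$, not merely some module containing $\mathcal{C}$. This is where I would invoke Theorem A-1, Item 5, of \cite{general-mceliece} as indicated in the excerpt: the basic (equivalently, non-catastrophic) property guarantees that $\mathcal{C}$ is precisely the set of polynomial vectors lying in the $\mathbb{F}(D)$-span of its rows, i.e. $\mathcal{C} = \mathcal{V} \cap \mathbb{F}[D]^n$ and there is no ``capacity gap'' — a polynomial $v(D)$ with $v(D)\widetilde{H}(D)^T = 0$ satisfies $v(D) = u(D)G(D)$ for some $u(D) \in \mathbb{F}(D)^k$, and one shows $u(D)$ is in fact polynomial by computing $u(D) = v(D)F(D) \in \mathbb{F}[D]^k$ using the polynomial right inverse. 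Thus $\{v(D) \in \mathbb{F}[D]^n : v(D)\widetilde{H}(D)^T = 0\} = \mathcal{C}$, and we may take $H(D) = \widetilde{H}(D)$ (full-rank by construction, since its $n-k$ rows are $\mathbb{F}(D)$-linearly independent).

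I expect the main obstacle to be the second step: showing that clearing denominators does not enlarge the polynomial kernel beyond $\mathcal{C}$, i.e. that every polynomial syndrome-zero vector is genuinely a polynomial combination of the rows of $G(D)$ rather than merely a rational one. The clean way around this is exactly the use of the polynomial right inverse $F(D)$ afforded by basicness: if $v(D) \in \mathbb{F}[D]^n$ and $v(D) = u(D)G(D)$ with $u(D) \in \mathbb{F}(D)^k$, then right-multiplying by $F(D)$ yields $u(D) = v(D)F(D)$, which is a product of polynomial matrices, hence polynomial. This is precisely the point where the hypothesis ``non-catastrophic'' (not just ``convolutional'') is used, and it is worth stating plainly in the proof. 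The remaining verifications — that $G(D)\widetilde{H}(D)^T = 0$ forces $v(D)\widetilde{H}(D)^T = 0$ for all $v(D) \in \mathcal{C}$, and that $H(D)$ has full rank $n-k$ — are routine linear algebra over $\mathbb{F}(D)$.
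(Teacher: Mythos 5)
Your proposal is correct and follows exactly the route the paper indicates for this lemma (the paper only sketches it in one sentence, citing Theorem A-1, Item 5, of \cite{general-mceliece} and the $\mathbb{F}(D)$-span of the code): pass to the orthogonal complement over $\mathbb{F}(D)$, clear denominators, and use the polynomial right inverse $F(D)$ of a basic generator matrix to show that the polynomial kernel of $H(D)^T$ does not exceed $\mathcal{C}$. The step you flag as the main obstacle is indeed the crux, and your resolution via $u(D) = v(D)F(D) \in \mathbb{F}[D]^k$ is precisely the standard argument.
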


\subsection{Free and Column Distances} \label{subsec free and column dist}

We now recall the main notions of minimum distance of convolutional codes. Given $ v(D) = \sum_{j \in \mathbb{N}} v_j D^j \in \mathbb{F}[D]^n $, we define its \textit{Hamming weight} as
$$ \wt(v(D)) = \sum_{j \in \mathbb{N}} \wt(v_j). $$

The free distance, which we now define, gives the correction capability of a convolutional code when considering whole codewords. In other words, there is no maximum degree $ j $ for a codeword considered by the free distance.

\begin{definition} \label{def free distance}
Given an $ (n,k) $ convolutional code $ \mathcal{C} \subseteq \mathbb{F}[D]^n $, we define its \textit{free distance} as
\begin{equation*}
\dd(\mathcal{C}) = \min \{ \wt(v(D)) \mid v(D) \in \mathcal{C} \textrm{ and } v(D) \neq 0 \} .
\end{equation*}
\end{definition}

We next define column distances, which give the sliding-window correction capability of a non-catastrophic convolutional code (see the next subsection). This will be the type of distance that we will be interested in for global repair in our locally repairable codes.

\begin{definition} \label{def column distance}
Given an $ (n,k) $ convolutional code $ \mathcal{C} \subseteq \mathbb{F}[D]^n $, with memory $ \mu $ and reduced generator matrix $ G(D) = \sum_{h=0}^\mu G_h D^h $, define the $ j $th \textit{truncated sliding generator matrix} $ G_j^c \in  \mathbb{F}^{(j+1)k \times (j+1)n} $ as
$$ G_j^c = \left[ \begin{array}{cccc}
G_0 & G_1 & \ldots & G_j \\
 & G_0 & \ldots & G_{j-1} \\
 & & \ddots & \vdots \\
 & & & G_0
\end{array} \right], $$
for $ j \in \mathbb{N} $, where $ G_h = 0 $ if $ h > \mu $. Define now the $ j $th \textit{column block code} of $ \mathcal{C} $ as
\begin{equation*}
\begin{split}
\mathcal{C}_j^c = & \left\lbrace (u_0, u_1, \ldots, u_j)G_j^c \mid u_0, u_1, \ldots, u_j \in \mathbb{F}^k, u_0 \neq 0 \right\rbrace \\
 \stackrel{*}{=} & \left\lbrace (v_0, v_1, \ldots, v_j) \mid \sum_{h \in \mathbb{N}} v_h D^h \in \mathcal{C}, v_0 \neq 0 \right\rbrace \subseteq \mathbb{F}^{(j+1)n},
\end{split}
\end{equation*}
where the equality $ * $ holds if $ G_0 $ is full-rank. Finally, define the \textit{$ j $th column distance} of $ \mathcal{C} $ as
$$ \dd_j^c(\mathcal{C}) =  \min \left\lbrace {\rm wt}_H(v) \mid v \in \mathcal{C}_j^c \right\rbrace , $$
where note that $ v \neq 0 $ if $ v \in \mathcal{C}_j^c $, for $ j \in \mathbb{N} $.
\end{definition}

The column distances satisfy the following Singleton bound, which was proven in \cite[Prop. 2.2]{stronglyMDS}.

\begin{proposition}[\textbf{\cite{stronglyMDS}}] \label{prop singleton column distance}
For an $ (n,k) $ convolutional code $ \mathcal{C} \subseteq \mathbb{F}[D]^n $ with a generator matrix $ G(D) = \sum_{j=0}^\mu G_jD^j $ (possibly not reduced) such that $ G_0 $ is full-rank, and for $ j \in \mathbb{N} $, it holds that
\begin{equation}
\dd_j^c(\mathcal{C}) \leq (n-k)(j+1) + 1.
\label{eq classical singleton bound on column distances}
\end{equation}
\end{proposition}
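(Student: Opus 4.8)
The plan is to prove the Singleton-type bound on column distances (Proposition~\ref{prop singleton column distance}) by reducing it to the ordinary Singleton bound applied to the $j$th column block code $\mathcal{C}_j^c$, viewed as a block code in $\mathbb{F}^{(j+1)n}$. The key observation is that, because $G_0$ is full-rank, the equality $*$ in Definition~\ref{def column distance} holds, so $\mathcal{C}_j^c = \{(u_0,\dots,u_j)G_j^c \mid u_i \in \mathbb{F}^k,\ u_0 \neq 0\}$, i.e. it is the set of nonzero-first-block codewords of the genuine linear block code $\overline{\mathcal{C}_j^c} := \{(u_0,\dots,u_j)G_j^c \mid u_i \in \mathbb{F}^k\}$. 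Thus $\dd_j^c(\mathcal{C})$ is the minimum Hamming weight over a subset of $\overline{\mathcal{C}_j^c}$, and I want to lower bound the whole length while upper bounding the dimension of the relevant ``first-block-free'' part.

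First I would note that $G_j^c$ is a $(j+1)k \times (j+1)n$ matrix, and because $G_0$ is full-rank, the block upper-triangular structure of $G_j^c$ (with full-rank $G_0$ on the diagonal) makes $G_j^c$ itself full-rank, so $\overline{\mathcal{C}_j^c}$ is a $[(j+1)n,\ (j+1)k]$ block code. Now take any $v = (v_0,\dots,v_j) \in \mathcal{C}_j^c$ achieving the minimum weight $\dd_j^c(\mathcal{C})$, coming from some $(u_0,\dots,u_j)$ with $u_0 \neq 0$. I would then argue that by a standard shortening/puncturing argument the minimum weight cannot be too large: consider the subcode of $\overline{\mathcal{C}_j^c}$ obtained by fixing $u_0$ and letting $u_1,\dots,u_j$ vary; actually the cleaner route is to directly exhibit a low-weight codeword. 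Since $u_0 \neq 0$, the codeword $(u_0 G_0, *, \dots, *)$ has a nonzero first block. The freedom in $u_1, \dots, u_j$ (that is, $jk$ degrees of freedom) can be used to zero out coordinates: within the last $jn$ coordinates, $G_j^c$ restricted to rows indexed by $u_1,\dots,u_j$ and those columns is exactly $G_{j-1}^c$, which is full-rank of size $jk \times jn$, so we can solve for $u_1,\dots,u_j$ to make any chosen $jk$ of those last $jn$ coordinates agree with the values forced by $u_0$ — in particular, we can cancel $jk$ of them. This leaves weight at most $n + (jn - jk) = n + jn - jk = (j+1)n - jk$ in those coordinates; combined with the first block of length $n$ this is not yet tight, so I need to be more careful.

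The sharp argument is: start from $(u_0, 0, \dots, 0)$, whose image is $(u_0 G_0, u_0 G_1, \dots, u_0 G_j)$ with $u_0 \neq 0$. The first block $u_0 G_0$ is nonzero (as $G_0$ full-rank) but has weight at most $n$. Now use the $jk$ free parameters in $u_1, \dots, u_j$: adding $(0, u_1, \dots, u_j) G_j^c$ only affects blocks $1$ through $j$, and on those $jn$ coordinates the induced map $(u_1,\dots,u_j) \mapsto (u_1,\dots,u_j) G_{j-1}^c$ (after shifting) is surjective onto an $jk$-dimensional subspace; choosing $u_1,\dots,u_j$ to cancel $jk$ of the coordinates in blocks $1$ through $j$, we obtain a codeword $v \in \mathcal{C}_j^c$ (first block still $u_0 G_0 \neq 0$) with $\wt(v) \le n + (jn - jk) = (j+1)n - jk = (j+1)(n-k) + k$. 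Hmm, this gives $\dd_j^c \le (j+1)(n-k)+k$, which is weaker than claimed by $k-1$. To sharpen it to $(j+1)(n-k)+1$ I would instead also use some of the degrees of freedom in $u_0$ itself: among the $k$ components of $u_0$, fix one nonzero pattern in a way that cancels $k-1$ coordinates of the first block — more precisely, choose $u_0$ so that $u_0 G_0$ has weight at most $n - (k-1) = n-k+1$, which is possible since $G_0$ is a full-rank $k \times n$ matrix and hence $\{u_0 G_0\}$ is an $[n,k]$ code, whose minimum distance is at most $n-k+1$ by the ordinary Singleton bound; pick $u_0$ a minimum-weight (hence nonzero) codeword there. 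Then repeat the cancellation of $jk$ coordinates in blocks $1,\dots,j$, yielding $\wt(v) \le (n-k+1) + (jn-jk) = (j+1)(n-k) + 1$, as desired.

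The step I expect to be the main obstacle — and the one requiring the most care — is verifying that the cancellation in blocks $1$ through $j$ can be done \emph{independently} of the choice of $u_0$, i.e. that fixing $u_0$ first and then solving for $u_1,\dots,u_j$ is legitimate and that the relevant $jk \times jn$ submatrix of $G_j^c$ (which is a shifted copy of $G_{j-1}^c$ with $G_0$ on its diagonal) is indeed full-rank so that we can always cancel exactly $jk$ of those $jn$ coordinates. This is where the full-rankness of $G_0$ is used twice: once to get the $[n,k]$ structure of the first-block code (for the $n-k+1$ bound), and once to get the full-rankness of the lower-triangular truncated matrix (for the cancellation in later blocks). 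Once that linear-algebra bookkeeping is in place, the bound $\dd_j^c(\mathcal{C}) \le (n-k)(j+1)+1$ follows immediately, and I would present it by citing the ordinary Singleton bound for the first block and a direct dimension count for the rest, rather than grinding through explicit matrix manipulations.
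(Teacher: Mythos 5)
Your argument is correct: choosing $u_0\neq 0$ so that $u_0G_0$ is a minimum-weight codeword of the $[n,k]$ code generated by $G_0$ (weight at most $n-k+1$ by the block Singleton bound), and then using the full-rank $jk\times jn$ matrix $G_{j-1}^c$ to cancel an information set of $jk$ coordinates in blocks $1,\dots,j$, gives a codeword of $\mathcal{C}_j^c$ of weight at most $(n-k)(j+1)+1$; the two choices are indeed independent because the block upper-triangular structure keeps $u_1,\dots,u_j$ from affecting block $0$. The paper itself does not prove this proposition but imports it from \cite{stronglyMDS}, and your proof is essentially the standard argument given there, so nothing further is needed.
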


Items 1 and 2 in the following proposition follow from \cite[Cor. 2.3]{stronglyMDS} and \cite[Th. 2.2]{MDSconvolutional}, respectively. 

\begin{proposition}[\textbf{\cite{stronglyMDS, MDSconvolutional}}] \label{prop monotonicity singleton column distance}
Given a non-catastrophic $ (n,k) $ convolutional code $ \mathcal{C} \subseteq \mathbb{F}[D]^n $ of degree $ \delta $, the following hold:
\begin{enumerate}
\item
If $ \dd_{j+1}^c(\mathcal{C}) = (n-k)(j+2) + 1 $, then $ \dd_j^c(\mathcal{C}) = (n-k)(j+1) + 1 $.
\item
If $ \dd_j^c(\mathcal{C}) = (n-k)(j+1) + 1 $, then
\begin{equation}
j \leq L = \left\lfloor \frac{\delta}{k} \right\rfloor + \left\lfloor \frac{\delta}{n-k} \right\rfloor.
\label{eq def of L}
\end{equation}
\end{enumerate}
\end{proposition}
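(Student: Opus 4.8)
The plan is to treat the two items with independent arguments, invoking only the Singleton bound of Proposition~\ref{prop singleton column distance} together with two standard external facts. For Item~1 I would argue directly on the generator side, using only that $G_0$ has full rank $k$ (which holds since $\mathcal{C}$ is non-catastrophic). Fix $j$ and a reduced generator matrix $G(D) = \sum_{h=0}^\mu G_h D^h$, and pick $v^\star = (u_0^\star, \ldots, u_j^\star)\, G_j^c \in \mathcal{C}_j^c$ of minimum weight $\dd_j^c(\mathcal{C})$, with $u_0^\star \neq 0$. The key observation is that the block-triangular shape of $G_{j+1}^c$ lets one append a further information block without disturbing the first $(j+1)n$ coordinates: for every $x \in \mathbb{F}^k$ the word $w(x) := (u_0^\star, \ldots, u_j^\star, x)\, G_{j+1}^c$ lies in $\mathcal{C}_{j+1}^c$ (as $u_0^\star \neq 0$), its first $(j+1)n$ coordinates coincide with $v^\star$, and its last block equals $c^\star + x G_0$ for a fixed $c^\star \in \mathbb{F}^n$ depending only on $u_0^\star, \ldots, u_j^\star$ and $G_1, \ldots, G_{j+1}$. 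Since $\{ c^\star + x G_0 : x \in \mathbb{F}^k \}$ is a coset of the $[n,k]$ block code $\Row(G_0)$, the redundancy bound on the covering radius of a linear code (a syndrome with respect to a parity-check matrix of $\Row(G_0)$ is a linear combination of at most $n-k$ of its columns, hence is the syndrome of some vector of weight $\le n-k$) yields an $x$ with $\wt(c^\star + x G_0) \le n-k$. Then $\dd_{j+1}^c(\mathcal{C}) \le \wt(w(x)) = \wt(v^\star) + \wt(c^\star + x G_0) \le \dd_j^c(\mathcal{C}) + (n-k)$, so the hypothesis $\dd_{j+1}^c(\mathcal{C}) = (n-k)(j+2)+1$ forces $\dd_j^c(\mathcal{C}) \ge (n-k)(j+1)+1$, which Proposition~\ref{prop singleton column distance} upgrades to equality.

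For Item~2 the plan is to reduce to the free distance. First I would establish $\dd_j^c(\mathcal{C}) \le \dd(\mathcal{C})$ for all $j$: given a nonzero $w(D) \in \mathcal{C}$ of weight $\dd(\mathcal{C})$, write $w(D) = D^s \widetilde{w}(D)$ with $\widetilde{w}_0 \neq 0$; since $\mathcal{C}$ is non-catastrophic it has a generator matrix $G(D)$ with a polynomial right inverse $F(D)$, so the representation $u(D) = w(D) F(D) \in \mathbb{F}[D]^k$ is divisible by $D^s$, whence $\widetilde{w}(D) = D^{-s} u(D)\, G(D) \in \mathcal{C}$ and $(\widetilde{w}_0, \ldots, \widetilde{w}_j) \in \mathcal{C}_j^c$ has weight at most $\wt(\widetilde{w}(D)) = \dd(\mathcal{C})$. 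Combining this with the generalized Singleton bound $\dd(\mathcal{C}) \le (n-k)(\lfloor \delta/k\rfloor + 1) + \delta + 1$ (see, e.g., \cite{MDSconvolutional}), the hypothesis $\dd_j^c(\mathcal{C}) = (n-k)(j+1)+1$ gives $(n-k)(j+1)+1 \le (n-k)(\lfloor \delta/k\rfloor + 1) + \delta + 1$, i.e. $(n-k)\bigl(j - \lfloor \delta/k\rfloor\bigr) \le \delta$; as $j - \lfloor \delta/k\rfloor$ is an integer, this forces $j - \lfloor \delta/k\rfloor \le \lfloor \delta/(n-k)\rfloor$, that is, $j \le L$.

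I do not anticipate a real computational obstacle; the difficulty is in assembling the correct ingredients. For Item~1 the only non-routine step is recognizing that the weight of the appended block is governed by a covering-radius argument — once that is in hand, everything else is reading off the structure of $G_{j+1}^c$. For Item~2 the delicate point is the passage from the free distance back down to the column distance, where the delay-free shift $w(D) = D^s \widetilde{w}(D)$ genuinely relies on the basic (non-catastrophic) generator matrix; after that, inserting the known generalized Singleton bound and simplifying to the closed form for $L$ is mechanical.
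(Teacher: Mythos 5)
Your proof is correct. Note, however, that the paper does not prove Proposition \ref{prop monotonicity singleton column distance} at all: it simply cites \cite[Cor.~2.3]{stronglyMDS} for Item 1 and \cite[Th.~2.2]{MDSconvolutional} for Item 2, so any comparison is really with those sources. For Item 2 your route is exactly the intended one: the inequality $\dd_j^c(\mathcal{C}) \leq \dd(\mathcal{C})$ (your shift argument via the polynomial right inverse is the standard way to get it for non-catastrophic codes, and it is where the starred equality in Definition \ref{def column distance} is used) combined with the generalized Singleton bound of \cite{MDSconvolutional}, followed by the integrality step $(n-k)\bigl(j-\lfloor\delta/k\rfloor\bigr)\leq\delta \Rightarrow j-\lfloor\delta/k\rfloor\leq\lfloor\delta/(n-k)\rfloor$. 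For Item 1 your argument is a genuinely self-contained alternative to \cite{stronglyMDS}, where the monotonicity is usually read off from the characterization of optimal column distances in terms of the nonvanishing of certain full-size minors of the truncated sliding matrices (the relevant minors of $G_j^c$ being a subset of those of $G_{j+1}^c$). Your coset/covering-radius argument --- appending one block, observing that the last block ranges over a coset of the $[n,k]$ code generated by $G_0$, and invoking the redundancy bound on the covering radius to make that block have weight at most $n-k$ --- is more elementary, avoids the minor machinery entirely, and correctly uses only that $G_0$ is full rank (guaranteed by non-catastrophicity). Both items are sound; the only small presentational point is that you should state explicitly that the reduced basic generator matrix guaranteed by non-catastrophicity has $G_0$ of full rank, since both halves of your argument quietly rely on it.
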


The previous proposition motivates the following definition.

\begin{definition} \label{def mds and mdp}
We say that an $ (n,k) $ convolutional code $ \mathcal{C} $ is $ j $-MDS if it is non-catastrophic and $ \dd_j^c(\mathcal{C}) = (n-k)(j+1) + 1 $. We say that $ \mathcal{C} $ is \textit{maximum distance profile (MDP)} if it is non-catastrophic and $ \dd_L^c(\mathcal{C}) = (n-k)(L+1) + 1 $, where $ L $ is as in (\ref{eq def of L}).
\end{definition}

\subsection{Sliding-Window (Global) Repair} \label{subsec sliding window}

As shown in \cite[Th. 3.1]{erasure-convolutional} and its proof, a non-catastrophic $ (n,k) $ convolutional code $ \mathcal{C} \subseteq \mathbb{F}[D]^n $ may correct any erasure pattern with up to $ \dd^c_j(\mathcal{C}) - 1 $ erasures in any tuple $ (v_t, v_{t+1}, \ldots, v_{t+j}) \in \mathbb{F}^{n(j+1)} $, for $ j \in \mathbb{N} $. Furthermore, it may do so recursively by sliding a window that only involves the symbols in $ v_{t-\mu}, v_{t-\mu+1}, \ldots, v_{t+j} $, where $ \mu = \mu(\mathcal{C}) $. The formal statement is as follows. See also Fig. \ref{fig sliding window} for a graphical description.

For convenience, we first define erasures formally. 

\begin{definition} \label{def erasures}
Let $ \star $ be a symbol not belonging to any finite field, and denote $ \widetilde{\mathbb{F}} = \mathbb{F} \cup \{ \star \} $. Given $ N \in \mathbb{N} $ and $ v \in \mathbb{F}^N $, we say that $ v^* \in \widetilde{\mathbb{F}}^N $ is the vector $ v $ with $ e $ \textit{erasures}, where $ 0 \leq e \leq N $, if $ e $ components of $ v^* $ are the symbol $ \star $, and $ v $ and $ v^* $ coincide in the other $ N-e $ components. 
\end{definition} 

We now state \cite[Th. 3.1]{erasure-convolutional} and part of its proof. 

\begin{theorem} [\textbf{\cite{erasure-convolutional}}] \label{th sliding window}
Let $ \mathcal{C} \subseteq \mathbb{F}[D]^n $ be a non-catastrophic $ (n,k) $ convolutional code with memory $ \mu $, and fix $ j \in \mathbb{N} $. Let $ v(D) = \sum_{h \in \mathbb{N}} v_h D^h \in \mathcal{C} $ and let $ v^*_0, v^*_1, v^*_2, \ldots \in \widetilde{\mathbb{F}}^n $ be such that $ (v_t^*,v_{t+1}^*, \ldots, v_{t+j}^*) \in \widetilde{\mathbb{F}}^{n(j+1)} $ is the vector $ (v_t,v_{t+1}, \ldots, v_{t+j}) \in \mathbb{F}^{n(j+1)} $ with at most $ \dd^c_j(\mathcal{C}) - 1 $ erasures, for all $ t \in \mathbb{N} $. Then, for each $ t = 0,1,2, \ldots $, the vector $ v_{t} \in \mathbb{F}^n $ can be recursively and uniquely recovered from the tuple 
\begin{equation}
(v_{t-\mu}, v_{t-\mu+1}, \ldots, v_{t-1}, v^*_{t}, v^*_{t+1}, \ldots, v^*_{t+j}) \in \widetilde{\mathbb{F}}^{n(\mu + j + 1)} 
\label{eq window vector}
\end{equation}
by solving a system of non-homogeneous equations, whose coefficients are given by a parity-check matrix of $ \mathcal{C} $ (Lemma \ref{lemma parity check}), the symbols in $ v_{t-\mu}, v_{t-\mu+1}, \ldots, v_{t-1} $, and the symbols such that $ v_{u,i}^* = v_{u,i} $, for $ u = t, t+1, \ldots, t + j $, and whose unknowns are $ x_{u,i} $, for $ i $ such that $ v_{u,i}^* = \star $, for $ u = t,t+1, \ldots, t + j $. To recover $ v_{t+1} $, we ``slide'' the window (\ref{eq window vector}) one position to the right (see Fig. \ref{fig sliding window}).
\end{theorem}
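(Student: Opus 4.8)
The plan is to recover $ v_t $ by setting up one system of linear equations built from a polynomial parity-check matrix of $ \mathcal{C} $, solving it, and then arguing that, although this system need not have a unique solution, all of its solutions agree on the coordinates of the block $ v_t $; this last point is where the $ j $th column distance $ \dd_j^c(\mathcal{C}) $ enters. I would argue by induction on $ t $, assuming $ v_0,v_1,\ldots,v_{t-1} $ have already been recovered (with $ v_s=0 $ for $ s<0 $), so that the blocks $ v_{t-\mu},\ldots,v_{t-1} $ of the window are available exactly.

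By Lemma~\ref{lemma parity check} fix a parity-check matrix $ H(D)=\sum_{h\ge 0}H_hD^h $ of $ \mathcal{C} $, which, $ \mathcal{C} $ being non-catastrophic, may be chosen \emph{basic} --- i.e.\ with a polynomial right inverse $ \bar F(D) $, so $ H(D)\bar F(D)=I_{n-k} $ --- and with memory at most $ \mu $. Membership $ v(D)\in\mathcal{C} $ reads $ \sum_h v_{i-h}H_h^T=0 $ for all $ i\in\mathbb{N} $. I single out the $ (n-k)(j+1) $ scalar equations with $ i=t,t+1,\ldots,t+j $ --- which involve only the blocks $ v_{t-\mu},\ldots,v_{t+j} $ --- and transfer to the right-hand side every term coming from a known quantity: the blocks $ v_{t-\mu},\ldots,v_{t-1} $ and the non-erased symbols $ v_{u,i}=v^*_{u,i} $ with $ t\le u\le t+j $. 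This leaves a non-homogeneous system $ A\mathbf{x}=\mathbf{b} $ whose unknowns $ \mathbf{x} $ are exactly the erased symbols $ x_{u,i} $ of $ v_t,\ldots,v_{t+j} $, and whose coefficient matrix $ A $ is the submatrix --- indexed by the erased positions --- of the block-banded matrix with $ H_0 $ on the diagonal and $ H_1,H_2,\ldots $ below it (a ``$ j $th truncated sliding parity-check matrix''). The true codeword $ v(D) $ provides a solution, so the system is consistent.

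The heart of the argument is the claim that every $ \mathbf{z} $ in the kernel of $ A $ has vanishing block-$ t $ part; since the blocks of $ v(D) $ form one solution of $ A\mathbf{x}=\mathbf{b} $, this pins $ v_t $ down uniquely and correctly. So take $ A\mathbf{z}=0 $ and pad $ \mathbf{z} $ by zeros on the non-erased positions to $ w=(w_t,\ldots,w_{t+j}) $, supported inside the erased positions, so that $ \wt(w) $ is at most the number of erasures in blocks $ t,\ldots,t+j $, hence $ \le\dd_j^c(\mathcal{C})-1 $. Shifting indices down by $ t $, the relations $ A\mathbf{z}=0 $ say precisely that $ \widetilde w(D):=\sum_{l=0}^{j}w_{t+l}D^l $ satisfies $ \widetilde w(D)H(D)^T\equiv 0\pmod{D^{j+1}} $. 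Using that $ H(D) $ is basic, the polynomial syndrome $ D^{-(j+1)}\widetilde w(D)H(D)^T $ equals $ c(D)H(D)^T $ for some $ c(D)\in\mathbb{F}[D]^n $, and then $ v''(D):=\widetilde w(D)-D^{j+1}c(D) $ lies in $ \mathcal{C} $ and has $ (w_t,\ldots,w_{t+j}) $ as its first $ j+1 $ blocks. If $ w_t\ne 0 $, the equality marked $ * $ in Definition~\ref{def column distance} (valid because $ G_0 $ is full-rank for a reduced basic generator matrix) places $ (w_t,\ldots,w_{t+j}) $ in $ \mathcal{C}_j^c $, forcing $ \wt(w)\ge\dd_j^c(\mathcal{C}) $ --- a contradiction --- so $ w_t=0 $. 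The base case $ t=0 $ (empty past, read as zero) is the same computation, and recovering $ v_{t+1} $ is this argument run again with the window slid one block to the right.

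The step I expect to carry the real weight is the implication ``$ \widetilde w(D)H(D)^T\equiv 0\pmod{D^{j+1}} $ forces $ (w_t,\ldots,w_{t+j}) $ to be the length-$ (j+1) $ block-truncation of a genuine codeword of $ \mathcal{C} $'': this is the bridge from the finite, truncated linear algebra of a single window to the column distance of the whole convolutional code, and it is here --- not merely in the existence of some parity-check matrix --- that non-catastrophicity is essential, since one needs $ w\mapsto wH(D)^T $ to be surjective onto $ \mathbb{F}[D]^{n-k} $ so that a short syndrome can be absorbed by altering only blocks of index $ >j $. The one genuinely fiddly bookkeeping point is justifying that $ H(D) $ may be taken with memory at most $ \mu $, so that the equations for $ i=t,\ldots,t+j $ reach back no further than $ v_{t-\mu} $ and the window has exactly the width $ n(\mu+j+1) $ asserted in the theorem.
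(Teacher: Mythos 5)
The paper does not actually prove this theorem: it is quoted from \cite{erasure-convolutional} together with a description of the repair procedure, so there is no internal proof to compare against. Your argument is, in substance, the standard one behind that result, and its core is sound: you set up the truncated syndrome system from a basic polynomial parity-check matrix, note that the true codeword makes it consistent, and show that every kernel vector supported on the erasure pattern has vanishing block-$t$ part. The lifting step is the right bridge: from $\widetilde w(D)H(D)^T\equiv 0 \pmod{D^{j+1}}$ and the surjectivity of the syndrome map (supplied by a polynomial right inverse of $H(D)$) you produce a genuine codeword whose first $j+1$ coefficient blocks are $(w_t,\ldots,w_{t+j})$, and the equality $*$ of Definition \ref{def column distance} (valid because $G_0$ is full-rank for a reduced basic encoder) turns $w_t\neq 0$ into a weight $\geq \dd^c_j(\mathcal{C})$ contradiction. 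You are also right that this is precisely where non-catastrophicity is used, and right that only the block-$t$ part of the solution is claimed to be unique, which is all the recursive sliding needs.

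The point you flag but leave open is, however, a genuine gap and cannot be closed as you hope: a polynomial parity-check matrix of memory at most $\mu(\mathcal{C})$ need not exist. The dual code has the same degree $\delta$ but its Forney indices may be distributed differently; for instance, the $(3,2)$ code generated by the reduced basic matrix with rows $(1,D,0)$ and $(0,1,D)$ has $\mu=1$, yet every polynomial parity-check row is a multiple of $(D^2,-D,1)$ and so has degree $2$. Consequently your equations for $i=t,\ldots,t+j$ may reach back to $v_{t-\nu}$ with $\nu>\mu$, where $\nu$ is the memory of $H(D)$. This does not affect the main conclusion --- by the inductive hypothesis all blocks before $t$ are already known, so $v_t$ is still recursively and uniquely recoverable --- but as written your proof establishes a window of width $n(\nu+j+1)$ rather than the $n(\mu+j+1)$ asserted in (\ref{eq window vector}). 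To obtain the stated width one must either argue via the encoder/state-space description that the influence of the past on the window equations factors through data determined by $v_{t-\mu},\ldots,v_{t-1}$, or read the $\mu$ in the window as the memory of the chosen parity-check matrix. Apart from this, your write-up is correct.
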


In the previous theorem, we implicitly assume that $ v_j = 0 $ for all $ j = -1,-2, \ldots, -\mu $.

This type of erasure correction may already be considered as local repair, since $ j $ may be small. Furthermore, the window size is not necessarily restricted, since $ j $ may be arbitrary. However, setting $ j=0 $, we see that correcting one erasure in a single block $ v_t \in \mathbb{F}^n $ requires contacting another $ \mu n $ nodes and downloading their symbols, corresponding to $ (v_{t-\mu}, v_{t-\mu+1}, \ldots, v_{t-1}) \in \mathbb{F}^{\mu n} $, in order to set up the necessary system of linear equations. Thus, although sliding-window repair enjoys certain local nature, it admits considerable room for improvement. Adding locality inside each block $ v_t $ optimally will be our main objective in the rest of the paper.

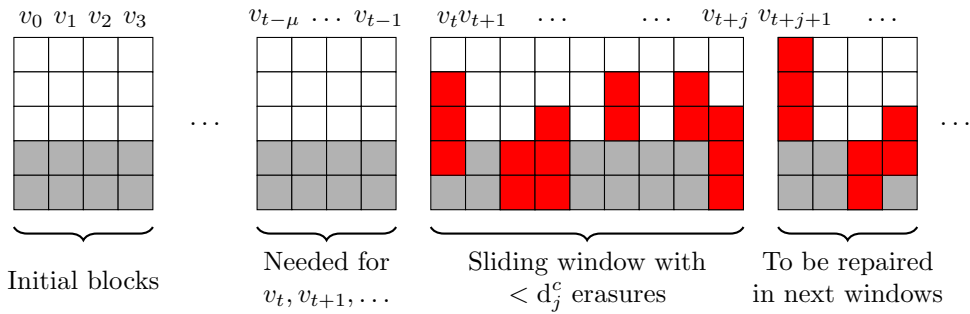
\begin{figure*} [!t]
\begin{center}
\begin{tabular}{c@{\extracolsep{1cm}}c}
\begin{tikzpicture}[
square/.style = {draw, rectangle, 
                 minimum size=\m, outer sep=0, inner sep=0, font=\small,
                 },
                        ]
\def\m{13pt}
\def\w{26} 
\def\wm{25}
\def\h{5}
\def\glob{3}

\foreach \x in {0,...,3}
{
  \node [] at (\x*\m + \m, 0.01) {$ v_{\x} $};
}

  \foreach \x in {1,...,4}
  {
    \foreach \y in {1,...,\h}
       {     
               \ifnum\y>\glob
                   \node [square, fill=gray!60]  (\x,\y) at (\x*\m,-\y*\m) {$  $};
               \else
                   \node [square, fill=white]  (\x,\y) at (\x*\m,-\y*\m) {$  $};
               \fi
       }
  }

\node [] (1, \h) at (5*\m + 1*\m, -3*\m) {$ \ldots $};

\node [] at (7*\m + \m, 0.01) {$ v_{t - \mu} $};
\node [] at (8.5*\m + \m, 0.01) {$ \ldots $};
\node [] at (10*\m + \m, 0.01) {$ v_{t - 1} $};

  \foreach \x in {8,...,11}
  {
    \foreach \y in {1,...,\h}
       {    
               \ifnum\y>\glob
                   \node [square, fill=gray!60]  (\x,\y) at (\x*\m,-\y*\m) {$  $};
               \else
                   \node [square, fill=white]  (\x,\y) at (\x*\m,-\y*\m) {$  $};
               \fi
       }
  }

\node [] at (12*\m + \m, 0.01) {$ v_{t} $};
\node [] at (13*\m + \m, 0.01) {$ v_{t + 1} $};
\node [] at (15*\m + \m, 0.01) {$ \ldots $};
\node [] at (18*\m + \m, 0.01) {$ \ldots $};
\node [] at (20*\m + \m, 0.01) {$ v_{t + j} $};

  \foreach \x in {13,...,21}
  {
    \foreach \y in {1,...,\h}
       {    
               \ifnum\y>\glob
                   \node [square, fill=gray!60]  (\x,\y) at (\x*\m,-\y*\m) {$  $};
               \else
                   \node [square, fill=white]  (\x,\y) at (\x*\m,-\y*\m) {$  $};
               \fi
       }
  }

\node [square, fill=red!100]  (13,2) at (13*\m,-2*\m) {$  $};   
\node [square, fill=red!100]  (13,3) at (13*\m,-3*\m) {$  $};   
\node [square, fill=red!100]  (13,4) at (13*\m,-4*\m) {$  $};  

\node [square, fill=red!100]  (15,4) at (15*\m,-4*\m) {$  $}; 
\node [square, fill=red!100]  (15,5) at (15*\m,-5*\m) {$  $}; 

\node [square, fill=red!100]  (16,3) at (16*\m,-3*\m) {$  $}; 
\node [square, fill=red!100]  (16,4) at (16*\m,-4*\m) {$  $};   
\node [square, fill=red!100]  (16,5) at (16*\m,-5*\m) {$  $};
   
\node [square, fill=red!100]  (18,2) at (18*\m,-2*\m) {$  $};   
\node [square, fill=red!100]  (18,3) at (18*\m,-3*\m) {$  $}; 

\node [square, fill=red!100]  (20,2) at (20*\m,-2*\m) {$  $};   
\node [square, fill=red!100]  (20,3) at (20*\m,-3*\m) {$  $};  
   
\node [square, fill=red!100]  (21,3) at (21*\m,-3*\m) {$  $}; 
\node [square, fill=red!100]  (21,4) at (21*\m,-4*\m) {$  $};   
\node [square, fill=red!100]  (21,5) at (21*\m,-5*\m) {$  $};

\node [] at (24.5*\m + \m, 0.01) {$ \ldots $};  
\node [] at (22*\m + \m, 0.01) {$ v_{t + j + 1} $};  

  \foreach \x in {23,...,26}
  {
    \foreach \y in {1,...,\h}
       {    
               \ifnum\y>\glob
                   \node [square, fill=gray!60]  (\x,\y) at (\x*\m,-\y*\m) {$  $};
               \else
                   \node [square, fill=white]  (\x,\y) at (\x*\m,-\y*\m) {$  $};
               \fi
       }
  }

\node [] (\w + 1, \h) at (\w*\m + \m + 0.6*\m, -3*\m) {$ \ldots $};

\node [square, fill=red!100]  (23,1) at (23*\m,-1*\m) {$  $}; 
\node [square, fill=red!100]  (23,2) at (23*\m,-2*\m) {$  $};   
\node [square, fill=red!100]  (23,3) at (23*\m,-3*\m) {$  $};      
   
\node [square, fill=red!100]  (25,4) at (25*\m,-4*\m) {$  $};   
\node [square, fill=red!100]  (25,5) at (25*\m,-5*\m) {$  $};    

\node [square, fill=red!100]  (26,3) at (26*\m,-3*\m) {$  $};   
\node [square, fill=red!100]  (26,4) at (26*\m,-4*\m) {$  $};

   \draw [draw, decorate, thick,decoration={brace,amplitude=5pt,mirror}]
   (0.5*\m,-6*\m) -- (4.5*\m,-6*\m) node[midway,yshift=-2em]{Initial blocks};        
   \draw [draw, decorate, thick,decoration={brace,amplitude=5pt,mirror}]
   (7.5*\m,-6*\m) -- (11.5*\m,-6*\m) node[midway,yshift=-2em]{\begin{tabular}{c}Needed for\\ $ v_t, v_{t+1}, \ldots $ \end{tabular}};       
   \draw [draw, decorate, thick,decoration={brace,amplitude=5pt,mirror}]
   (12.5*\m,-6*\m) -- (21.5*\m,-6*\m) node[midway,yshift=-2em]{\begin{tabular}{c}Sliding window with\\ $ < \dd^c_j $ erasures \end{tabular}};        
   \draw [draw, decorate, thick,decoration={brace,amplitude=5pt,mirror}]
   (22.5*\m,-6*\m) -- (26.5*\m,-6*\m) node[midway,yshift=-2em]{\begin{tabular}{c}To be repaired\\ in next windows \end{tabular}};

\end{tikzpicture}

\end{tabular}
\end{center}

\caption{Sliding-window erasure correction as described in Theorem \ref{th sliding window} for a $ (5,3) $ convolutional code. Here, parity-check symbols are depicted in grey, and erasures are depicted in red.}
\label{fig sliding window}
\end{figure*}

\section{Locality in Convolutional Codes} \label{sec locality}

In this section, we formulate locality for convolutional codes. For this purpose, we define the following two types of restrictions for a convolutional code. The first type consists in considering one generic block $ v_j \in \mathbb{F}^n $ for arbitrary codewords $ v(D) $ in the convolutional code.

\begin{definition} \label{def restriction C^0}
Given an $ (n,k) $ convolutional code $ \mathcal{C} \subseteq \mathbb{F}[D]^n $ with reduced generator matrix $ G(D) = \sum_{j=0}^\mu G_j D^j $, where $ \mu $ is the memory of the code, we define its \textit{associated block code} as
\begin{equation*}
\begin{split}
\mathcal{C}^0 = & \left\lbrace \sum_{j=0}^\mu u_j G_j \mid u_j \in \mathbb{F}^k, j = 0,1,\ldots, \mu \right\rbrace \\
 = & \left\lbrace v_\mu \in \mathbb{F}^n \mid v(D) = \sum_{j \in \mathbb{N}} v_j D^j \in \mathcal{C} \right\rbrace \subseteq \mathbb{F}^n.
\end{split}
\end{equation*}
\end{definition}

Note that by the second equality, the definition of $ \mathcal{C}^0 $ does not depend on the generator matrix of $ \mathcal{C} $. We now give the second type of restriction, which consists in restricting each block of the convolutional code to some subset of coordinates $ \Gamma \subseteq [n] $. Here, we use the notation $ [n] = \{ 1,2, \ldots, n \} $.

\begin{definition} \label{def restriction C_Gamma}
Given an $ (n,k) $ convolutional code $ \mathcal{C} \subseteq \mathbb{F}[D]^n $ and given a non-empty subset $ \Gamma \subseteq [n] $, we define the \textit{restriction} of $ \mathcal{C} $ to $ \Gamma $ as the convolutional code
$$ \mathcal{C}_\Gamma = \left\lbrace v(D)_\Gamma \mid v(D) \in \mathcal{C} \right\rbrace \subseteq \mathbb{F}[D]^{|\Gamma|} . $$
Here,  if $ v \in \mathbb{F}^n $, then $ v_\Gamma \in \mathbb{F}^{|\Gamma|} $ denotes the projection of $ v $ onto the coordinates in $ \Gamma $. Then if $ v(D) = \sum_{j \in \mathbb{N}} v_j D^j \in \mathbb{F}[D]^n $, we use the notation $ v(D)_\Gamma = \sum_{j \in \mathbb{N}} (v_j)_\Gamma D^j \in \mathbb{F}[D]^{|\Gamma|} $.

For a matrix $ G(D) \in \mathbb{F}[D]^{k \times n} $, we denote by $ G(D)_\Gamma \in \mathbb{F}[D]^{k \times |\Gamma|} $ the matrix whose rows are the rows of $ G(D) $ restricted to $ \Gamma $. 
\end{definition}

Observe that if $ G(D) \in \mathbb{F}[D]^{k \times n} $ is a generator matrix of $ \mathcal{C} $, then the rows of $ G(D)_\Gamma \in \mathbb{F}^{k \times |\Gamma|} $ generate $ \mathcal{C}_\Gamma $, although they may not be $ \mathbb{F}[D] $-linearly independent.

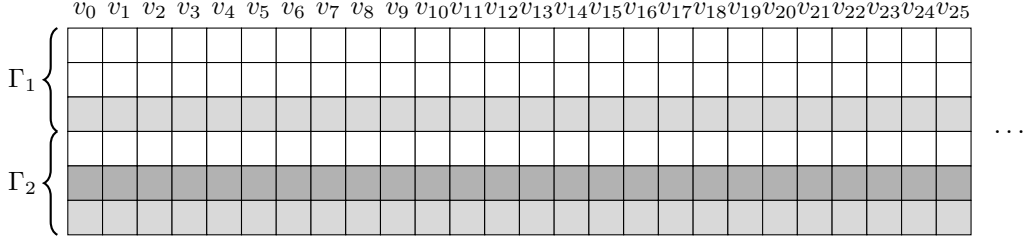
\begin{figure*} [!t]
\begin{center}
\begin{tabular}{c@{\extracolsep{1cm}}c}
\begin{tikzpicture}[
square/.style = {draw, rectangle, 
                 minimum size=\m, outer sep=0, inner sep=0, font=\small,
                 },
                        ]
\def\m{13pt}
\def\w{26} 
\def\wm{25}
\def\h{6}
\def\glob{4}
\def\locf{3}
\def\locs{5}
    \pgfmathsetmacro\uw{int(\w/2)}
    \pgfmathsetmacro\uh{int(\h/2)}

\foreach \x in {0,...,\wm}
{
  \node [] at (\x*\m + \m, 0.01) {$ v_{\x} $};
}

  \foreach \x in {1,...,\w}
  {
    \foreach \y in {1,...,\h}
       {    
           
           \ifnum\y>\locs
               \node [square, fill=gray!30]  (\x,\y) at (\x*\m,-\y*\m) {$  $};
           \else
               \ifnum\y=\locf
                   \node [square, fill=gray!30]  (\x,\y) at (\x*\m,-\y*\m) {$  $};
               \else                   
                   \ifnum\y>\glob
                       \node [square, fill=gray!60]  (\x,\y) at (\x*\m,-\y*\m) {$  $};
                   \else
                       \node [square, fill=white]  (\x,\y) at (\x*\m,-\y*\m) {$  $};
                   \fi
               \fi    
           \fi
       }
  }
   
   \draw [decorate, thick,decoration={brace,amplitude=5pt}]
   (0.2*\m,-3.5*\m) -- (0.2*\m,-0.5*\m) node[midway,xshift=-1.3em]{$ \Gamma_1 $};    
   \draw [decorate, thick,decoration={brace,amplitude=5pt}]
   (0.2*\m,-6.5*\m) -- (0.2*\m,-3.5*\m) node[midway,xshift=-1.3em]{$ \Gamma_2 $};

\node [] (\w + 1, \h) at (\w*\m + \m + 0.6*\m, -3.5*\m) {$ \ldots $};
        
\end{tikzpicture}

\end{tabular}
\end{center}

\caption{Graphical description of an $ (n,k,r,\partial) = (6,3,2,2) $ LRCC with $ 2 $ local groups, $ \Gamma_1 $ and $ \Gamma_2 $, each of size $ 3 $ and constituting the two halves of each column. White, light grey and dark grey boxes depict information symbols, local parities and global parities, respectively. Typically in this case ($ \partial = 2 $), local parities may be chosen as the XOR of the other symbols in the local group (see Construction \ref{construction}). }
\label{fig LRCC}
\end{figure*}

We may now extend the definition of $ (r,\partial) $-locality for block codes from \cite[Def. 1]{kamath} to convolutional codes.

\begin{definition} \label{def lrc convolutional codes}
We say that an $ (n,k) $ convolutional code $ \mathcal{C} \subseteq \mathbb{F}[D]^n $ has \textit{$ (r,\partial) $-locality} if there exist non-empty sets $ \Gamma_i $, for $ i = 1,2, \ldots, g $, such that $ [n] = \bigcup_{i=1}^g \Gamma_i $, and
\begin{enumerate}
\item
$ |\Gamma_i| \leq r + \partial - 1 $,
\item
$ \dd(\mathcal{C}_{\Gamma_i}^0) \geq \partial $,
\end{enumerate}
for $ i = 1,2, \ldots, g $. Here, we write $ \mathcal{C}_{\Gamma_i}^0 $ instead of $ (\mathcal{C}_{\Gamma_i})^0 = (\mathcal{C}^0)_{\Gamma_i} $ for simplicity. Thus, $ \mathcal{C}_{\Gamma_i}^0 $ denotes the block code associated (Definition \ref{def restriction C^0}) to the restriction (Definition \ref{def restriction C_Gamma}) of $ \mathcal{C} $ on $ \Gamma_i $.

We say then that $ \mathcal{C} $ is an \textit{$ (n,k,r,\partial) $ locally repairable convolutional code}, or LRCC for short. The set $ \Gamma_i $ is called the \textit{$ i $th local group}, for $ i = 1,2,\ldots, g $, and $ r $ and $ \partial $ are called the \textit{locality} and \textit{local distance} of $ \mathcal{C} $, respectively.
\end{definition}

In other words, we consider local groups in each block of $ n $ symbols, corresponding to terms $ v_j \in \mathbb{F}^n $ in a codeword $ v(D) = \sum_{j \in \mathbb{N}} v_j D^j \in \mathcal{C} $. See Fig. \ref{fig LRCC} for a graphical example of a $ (6,3,2,2) $ LRCC with $ 2 $ local groups. In contrast to block codes, local repair with only one local group ($ g = 1$) per block already outperforms sliding-window repair even when $ j=0 $, in terms of total contacted nodes, see Fig. \ref{fig sliding window repair}.

We state now the local erasure-correction capability of LRCCs. Definition \ref{def lrc convolutional codes} is given so that the following result holds. The proof is straightforward.
\begin{proposition}
Let $ \mathcal{C} \subseteq \mathbb{F}[D]^n $ be an $ (n,k,r,\partial) $ LRCC with local groups $ \Gamma_i $, for $ i = 1,2, \ldots, g $. Fix $ j \in \mathbb{N} $ and $ i = 1,2,\ldots, g $. For all $ v(D) = \sum_{j \in \mathbb{N}} v_jD^j \in \mathcal{C} $, if $ v^* \in \widetilde{\mathbb{F}}^{|\Gamma_i|} $ is the vector $ (v_j)_{\Gamma_i} \in \mathbb{F}^{|\Gamma_i|} $ with at most $ \partial - 1 $ erasures (see Definition \ref{def erasures}), then we may uniquely recover the vector $ (v_j)_{\Gamma_i} $ from $ v^* $ by using the restricted block code $ \mathcal{C}_{\Gamma_i}^0 \subseteq \mathbb{F}^{|\Gamma_i|} $, without contacting nodes or reading symbols outside of $ \Gamma_i $ in the $ j $th block of the convolutional code. 
\end{proposition}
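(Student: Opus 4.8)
The plan is to reduce the claimed local erasure-correction property directly to the definition of the local distance $\partial$ of the block code $\mathcal{C}_{\Gamma_i}^0$. The key observation is that, by the second equality in Definition \ref{def restriction C^0} applied to the restricted code $\mathcal{C}_{\Gamma_i}$, for any codeword $v(D) = \sum_{h \in \mathbb{N}} v_h D^h \in \mathcal{C}$ and any index $j \in \mathbb{N}$, the vector $(v_j)_{\Gamma_i} \in \mathbb{F}^{|\Gamma_i|}$ belongs to $\mathcal{C}_{\Gamma_i}^0$. Indeed, $v(D)_{\Gamma_i} = \sum_{h \in \mathbb{N}} (v_h)_{\Gamma_i} D^h \in \mathcal{C}_{\Gamma_i}$, and every coefficient of a codeword of a convolutional code lies in its associated block code (up to the index shift by $\mu$, which is immaterial since $\mathcal{C}_{\Gamma_i}$ is shift-invariant as an $\mathbb{F}[D]$-module and we are taking the union over all codewords). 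This membership is what connects the abstract local-group condition to a concrete block code in which erasure correction is classical.

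First I would fix $i$ and $j$, take an arbitrary $v(D) = \sum_{h \in \mathbb{N}} v_h D^h \in \mathcal{C}$, and set $c = (v_j)_{\Gamma_i} \in \mathcal{C}_{\Gamma_i}^0$ by the argument above. Next, given $v^* \in \widetilde{\mathbb{F}}^{|\Gamma_i|}$ obtained from $c$ by introducing at most $\partial - 1$ erasures, I would invoke the standard fact that a block code of minimum Hamming distance at least $\partial$ corrects any $\partial - 1$ erasures: if two codewords $c, c' \in \mathcal{C}_{\Gamma_i}^0$ agreed with $v^*$ on all non-erased positions, then $c$ and $c'$ would differ only within the erased coordinates, so $\wt(c - c') \leq \partial - 1 < \partial \leq \dd(\mathcal{C}_{\Gamma_i}^0)$, forcing $c = c'$. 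Hence $c = (v_j)_{\Gamma_i}$ is the unique codeword of $\mathcal{C}_{\Gamma_i}^0$ consistent with $v^*$, and it is recovered purely from $v^*$ together with knowledge of the code $\mathcal{C}_{\Gamma_i}^0$; no symbol outside the coordinates in $\Gamma_i$ (in the $j$th block) and no symbol from any other block is used.

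There is no real obstacle here; the statement is essentially a translation of the defining property $\dd(\mathcal{C}_{\Gamma_i}^0) \geq \partial$ in Definition \ref{def lrc convolutional codes} through the two equivalent descriptions of the associated block code in Definition \ref{def restriction C^0}. The only point deserving a sentence of care is the index bookkeeping: Definition \ref{def restriction C^0} phrases the associated block code in terms of the coefficient $v_\mu$, whereas the statement speaks of an arbitrary $j$th block; this is harmless because the set of all coefficients $v_h$ appearing across all codewords $v(D) \in \mathcal{C}$ is the same for every fixed $h$ (one may multiply by a power of $D$, or pad with zeros as in the convention stated after Theorem \ref{th sliding window}), so $(v_j)_{\Gamma_i} \in \mathcal{C}_{\Gamma_i}^0$ for every $j \in \mathbb{N}$. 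With that remark in place, the proof is the one-line erasure-correction argument above, which is why the paper calls it straightforward.
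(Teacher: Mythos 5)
Your proof is correct and is exactly the argument the paper has in mind: the paper states this proposition with only the remark that ``the proof is straightforward,'' and your reduction --- membership $(v_j)_{\Gamma_i} \in \mathcal{C}_{\Gamma_i}^0$ (justified for every $j$, not just $j=\mu$) followed by the standard fact that a block code of minimum distance at least $\partial$ uniquely corrects any $\partial-1$ erasures --- is precisely that omitted straightforward proof. Your side remark on the index bookkeeping is the only point where care is needed, and you handle it correctly.
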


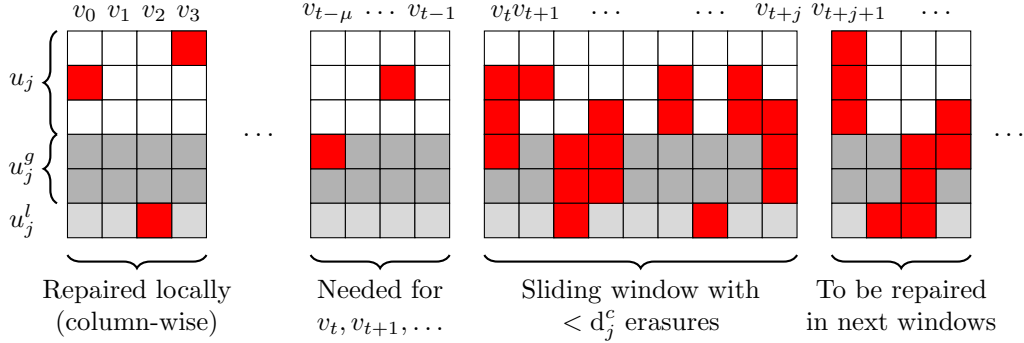
\begin{figure*} [!t]
\begin{center}
\begin{tabular}{c@{\extracolsep{1cm}}c}
\begin{tikzpicture}[
square/.style = {draw, rectangle, 
                 minimum size=\m, outer sep=0, inner sep=0, font=\small,
                 },
                        ]
\def\m{13pt}
\def\w{26} 
\def\wm{25}
\def\h{6}
\def\glob{3}
\def\loc{5}
    \pgfmathsetmacro\uw{int(\w/2)}
    \pgfmathsetmacro\uh{int(\h/2)}

\foreach \x in {0,...,3}
{
  \node [] at (\x*\m + \m, 0.01) {$ v_{\x} $};
}

  \foreach \x in {1,...,4}
  {
    \foreach \y in {1,...,\h}
       {    
           
           \ifnum\y>\loc
               \node [square, fill=gray!30]  (\x,\y) at (\x*\m,-\y*\m) {$  $};
           \else
               \ifnum\y>\glob
                   \node [square, fill=gray!60]  (\x,\y) at (\x*\m,-\y*\m) {$  $};
               \else
                   \node [square, fill=white]  (\x,\y) at (\x*\m,-\y*\m) {$  $};
               \fi
           \fi
       }
  }
  
\node [square, fill=red!100]  (1,2) at (1*\m,-2*\m) {$  $};  
\node [square, fill=red!100]  (3,6) at (3*\m,-6*\m) {$  $};
\node [square, fill=red!100]  (4,1) at (4*\m,-1*\m) {$  $};
  
\node [] (1, \h) at (5*\m + 1*\m, -3.5*\m) {$ \ldots $};

\node [] at (7*\m + \m, 0.01) {$ v_{t - \mu} $};
\node [] at (8.5*\m + \m, 0.01) {$ \ldots $};
\node [] at (10*\m + \m, 0.01) {$ v_{t - 1} $};

  \foreach \x in {8,...,11}
  {
    \foreach \y in {1,...,\h}
       {    
           
           \ifnum\y>\loc
               \node [square, fill=gray!30]  (\x,\y) at (\x*\m,-\y*\m) {$  $};
           \else
               \ifnum\y>\glob
                   \node [square, fill=gray!60]  (\x,\y) at (\x*\m,-\y*\m) {$  $};
               \else
                   \node [square, fill=white]  (\x,\y) at (\x*\m,-\y*\m) {$  $};
               \fi
           \fi
       }
  }

\node [square, fill=red!100]  (8,4) at (8*\m,-4*\m) {$  $};
\node [square, fill=red!100]  (10,2) at (10*\m,-2*\m) {$  $};

\node [] at (12*\m + \m, 0.01) {$ v_{t} $};
\node [] at (13*\m + \m, 0.01) {$ v_{t + 1} $};
\node [] at (15*\m + \m, 0.01) {$ \ldots $};
\node [] at (18*\m + \m, 0.01) {$ \ldots $};
\node [] at (20*\m + \m, 0.01) {$ v_{t + j} $};

  \foreach \x in {13,...,21}
  {
    \foreach \y in {1,...,\h}
       {    
           
           \ifnum\y>\loc
               \node [square, fill=gray!30]  (\x,\y) at (\x*\m,-\y*\m) {$  $};
           \else
               \ifnum\y>\glob
                   \node [square, fill=gray!60]  (\x,\y) at (\x*\m,-\y*\m) {$  $};
               \else
                   \node [square, fill=white]  (\x,\y) at (\x*\m,-\y*\m) {$  $};
               \fi
           \fi
       }
  }

\node [square, fill=red!100]  (13,2) at (13*\m,-2*\m) {$  $};   
\node [square, fill=red!100]  (13,3) at (13*\m,-3*\m) {$  $};   
\node [square, fill=red!100]  (13,4) at (13*\m,-4*\m) {$  $};  
\node [square, fill=red!100]  (14,2) at (14*\m,-2*\m) {$  $};   

\node [square, fill=red!100]  (15,4) at (15*\m,-4*\m) {$  $}; 
\node [square, fill=red!100]  (15,5) at (15*\m,-5*\m) {$  $}; 
\node [square, fill=red!100]  (15,6) at (15*\m,-6*\m) {$  $}; 

\node [square, fill=red!100]  (16,3) at (16*\m,-3*\m) {$  $}; 
\node [square, fill=red!100]  (16,4) at (16*\m,-4*\m) {$  $};   
\node [square, fill=red!100]  (16,5) at (16*\m,-5*\m) {$  $};
   
\node [square, fill=red!100]  (18,2) at (18*\m,-2*\m) {$  $};   
\node [square, fill=red!100]  (18,3) at (18*\m,-3*\m) {$  $}; 

\node [square, fill=red!100]  (19,6) at (19*\m,-6*\m) {$  $}; 

\node [square, fill=red!100]  (20,2) at (20*\m,-2*\m) {$  $};   
\node [square, fill=red!100]  (20,3) at (20*\m,-3*\m) {$  $};  
   
\node [square, fill=red!100]  (21,3) at (21*\m,-3*\m) {$  $}; 
\node [square, fill=red!100]  (21,4) at (21*\m,-4*\m) {$  $};   
\node [square, fill=red!100]  (21,5) at (21*\m,-5*\m) {$  $};

\node [] at (24.5*\m + \m, 0.01) {$ \ldots $};  
\node [] at (22*\m + \m, 0.01) {$ v_{t + j + 1} $};  

  \foreach \x in {23,...,26}
  {
    \foreach \y in {1,...,\h}
       {    
           
           \ifnum\y>\loc
               \node [square, fill=gray!30]  (\x,\y) at (\x*\m,-\y*\m) {$  $};
           \else
               \ifnum\y>\glob
                   \node [square, fill=gray!60]  (\x,\y) at (\x*\m,-\y*\m) {$  $};
               \else
                   \node [square, fill=white]  (\x,\y) at (\x*\m,-\y*\m) {$  $};
               \fi
           \fi
       }
  }  
\node [square, fill=red!100]  (23,1) at (23*\m,-1*\m) {$  $}; 
\node [square, fill=red!100]  (23,2) at (23*\m,-2*\m) {$  $};   
\node [square, fill=red!100]  (23,3) at (23*\m,-3*\m) {$  $};      
    
\node [square, fill=red!100]  (24,6) at (24*\m,-6*\m) {$  $}; 
   
\node [square, fill=red!100]  (25,4) at (25*\m,-4*\m) {$  $};   
\node [square, fill=red!100]  (25,5) at (25*\m,-5*\m) {$  $};    
\node [square, fill=red!100]  (25,6) at (25*\m,-6*\m) {$  $}; 

\node [square, fill=red!100]  (26,3) at (26*\m,-3*\m) {$  $};   
\node [square, fill=red!100]  (26,4) at (26*\m,-4*\m) {$  $};

\node [] (\w + 1, \h) at (\w*\m + \m + 0.6*\m, -3.5*\m) {$ \ldots $};

   \draw [decorate, thick,decoration={brace,amplitude=5pt}]
   (0.2*\m,-3.5*\m) -- (0.2*\m,-0.5*\m) node[midway,xshift=-1.3em]{$ u_j $};    
   \draw [decorate, thick,decoration={brace,amplitude=5pt}]
   (0.2*\m,-5.5*\m) -- (0.2*\m,-3.5*\m) node[midway,xshift=-1.3em]{$ u^g_j $};    
   \draw [draw=none, decorate, thick,decoration={brace,amplitude=5pt}]
   (0.2*\m,-6.5*\m) -- (0.2*\m,-5.5*\m) node[midway,xshift=-1.3em]{$ u^l_j $};

   \draw [draw, decorate, thick,decoration={brace,amplitude=5pt,mirror}]
   (0.5*\m,-7*\m) -- (4.5*\m,-7*\m) node[midway,yshift=-2em]{\begin{tabular}{c}Repaired locally\\ (column-wise) \end{tabular}};        
   \draw [draw, decorate, thick,decoration={brace,amplitude=5pt,mirror}]
   (7.5*\m,-7*\m) -- (11.5*\m,-7*\m) node[midway,yshift=-2em]{\begin{tabular}{c}Needed for\\ $ v_t, v_{t+1}, \ldots $ \end{tabular}};       
   \draw [draw, decorate, thick,decoration={brace,amplitude=5pt,mirror}]
   (12.5*\m,-7*\m) -- (21.5*\m,-7*\m) node[midway,yshift=-2em]{\begin{tabular}{c}Sliding window with\\ $ < \dd^c_j $ erasures \end{tabular}};        
   \draw [draw, decorate, thick,decoration={brace,amplitude=5pt,mirror}]
   (22.5*\m,-7*\m) -- (26.5*\m,-7*\m) node[midway,yshift=-2em]{\begin{tabular}{c}To be repaired\\ in next windows \end{tabular}};

\end{tikzpicture}

\end{tabular}
\end{center}

\caption{Sliding-window repair combined with local repair. Here, an $ (n,k,r,\partial) = (6,3,5,2) $ LRCC, with one local group ($ g = 1 $) per block of $ n $ symbols, is depicted. Each column depicts a systematic encoded block $ v_j = (u_j, u^g_j, u^l_j) $. White, light grey and dark grey boxes denote information symbols $ u_j $, global parities $ u^g_j $ and local parities $ u^l_j $, respectively. The local parities can be invoked block-wise to correct one erasure per block (since $ \partial = 2 $), requiring the other $ r=5 $ symbols in the block for repair. If catastrophic erasures occur, with $ < \dd^c_j $ erasures in each window $ (v_t, v_{t+1}, \ldots, v_{t+j}) $, then sliding-window repair is invoked, as described in Theorem \ref{th sliding window}. Observe that, although sliding-window repair is somehow local, by setting $ j=0 $, we see that we still need to contact and download the previous $ \mu n $ symbols from $ v_{t-\mu}, v_{t-\mu+1}, \ldots, v_{t-1} $ in order to repair one symbol in $ v_t $, hence cannot compete with the considered type of locality even for relatively small memory $ \mu $. }
\label{fig sliding window repair}
\end{figure*}

As it was the case for locally repairable block codes, the main goal, given the parameters $ n $, $ k $, $ r $ and $ \partial $ (and now also $ \delta $ and $ \mu $), is to obtain a corresponding LRCC with \textit{maximum global distance} properties, which would allow for global erasure correction in case of catastrophic failures. In this work, we consider column distances for ``global correction'', since we will focus on sliding-window erasure correction as in Theorem \ref{th sliding window}. See Fig. \ref{fig sliding window repair} for a graphical description of local repair combined with sliding-window global repair.

In the next theorem, we provide a Singleton bound on column distances of LRCCs. As the reader can see, we need to make three assumptions for the general bound, the first being that local groups are pair-wise disjoint and of full length, the second being that $ r $ divides $ k $, and the third is that a smallest possible subset of local groups form an information set of the code. This latter condition is satisfied if the $ 0 $th column distance is optimal, as stated in the theorem. 

\begin{theorem} \label{th singleton bound on column distances}
Let $ \mathcal{C} \subseteq \mathbb{F}[D]^n $ be an $ (n,k,r,\partial) $ LRCC with a reduced generator matrix $ G(D) = \sum_{h=0}^\mu G_h D^h $ such that $ G_0 $ is full-rank. Then it holds that
\begin{equation}
\dd_0^c(\mathcal{C}) \leq (n-k) - \left(  \left\lceil \frac{k}{r} \right\rceil  - 1 \right)(\partial - 1) + 1 .
\label{eq singleton bound with localities case j=0}
\end{equation}
Now assume that $ k = \ell r $, for a positive integer $ \ell $, local groups are pair-wise disjoint (i.e., $ \Gamma_i \cap \Gamma_j = \varnothing $ if $ i \neq j $) and of full size $ r + \partial - 1 $, and that there exist $ \ell $ of them forming an information set for the $ k $-dimensional linear block code $ \mathcal{C}_0^c \cup \{ 0 \} \subseteq \mathbb{F}^n $, generated by $ G_0 $. This latter condition holds if equality is achieved in (\ref{eq singleton bound with localities case j=0}), by Lemma \ref{lemma info sets optimal block lrc} in Appendix \ref{app lemma optimal block lrc}. Then it holds that
\begin{equation}
\dd_j^c(\mathcal{C}) \leq (n-k)(j+1) - \left( \frac{k(j+1)}{r}  - 1 \right)(\partial - 1) + 1,
\label{eq singleton bound with localities}
\end{equation}
for all $ j \in \mathbb{N} $.
\end{theorem}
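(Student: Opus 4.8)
The plan is to derive both inequalities from the classical Singleton bound for locally repairable block codes \cite{gopalan, kamath}, applied to block codes naturally attached to $ \mathcal{C} $; the only genuinely new point is making sure that the low-weight codeword one extracts actually lies in $ \mathcal{C}_j^c $, i.e. has a nonzero $ 0 $th block. For (\ref{eq singleton bound with localities case j=0}), I would look at the block code $ C_0 := \mathcal{C}_0^c \cup \{ 0 \} \subseteq \mathbb{F}^n $ generated by $ G_0 $, which has dimension $ k $ because $ G_0 $ is full-rank, and which satisfies $ \dd(C_0) = \dd_0^c(\mathcal{C}) $ by Definition \ref{def column distance}. Since the row space of $ G_0 $ is contained in $ \mathcal{C}^0 $ (Definition \ref{def restriction C^0}), we get $ (C_0)_{\Gamma_i} \subseteq \mathcal{C}_{\Gamma_i}^0 $, hence $ \dd((C_0)_{\Gamma_i}) \geq \dd(\mathcal{C}_{\Gamma_i}^0) \geq \partial $, while $ |\Gamma_i| \leq r+\partial-1 $ and $ \bigcup_i \Gamma_i = [n] $; thus $ C_0 $ is a locally repairable block code with parameters $ (n,k,r,\partial) $, and (\ref{eq singleton bound with localities case j=0}) is precisely \cite[Eq.~(1)]{gopalan} applied to $ C_0 $. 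That equality in (\ref{eq singleton bound with localities case j=0}) forces some $ \ell $ local groups to form an information set of $ C_0 $ is the content of Lemma \ref{lemma info sets optimal block lrc}.

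For the general bound (\ref{eq singleton bound with localities}), I would work with the block code $ \mathcal{D}_j := \mathcal{C}_j^c \cup \{ 0 \} \subseteq \mathbb{F}^{(j+1)n} $, the image of the truncated sliding generator matrix $ G_j^c $; it has length $ (j+1)n $ and dimension $ (j+1)k $, since $ G_0 $ full-rank implies $ G_j^c $ has full row rank. For any $ v = (v_0,\ldots,v_j) \in \mathcal{D}_j $ the $ h $th block is $ v_h = \sum_{i=0}^{\min(h,\mu)} u_{h-i} G_i \in \mathcal{C}^0 $, so the projection of the $ h $th block onto $ \Gamma_i $ is a subcode of $ \mathcal{C}_{\Gamma_i}^0 $ and thus has Hamming distance $ \geq \partial $. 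Under the hypotheses of the theorem, $ \mathcal{D}_j $ is therefore a locally repairable block code whose $ (j+1)g $ local groups are the blockwise copies $ \Gamma_i^{(h)} $ ($ h = 0,\ldots,j $), pairwise disjoint and all of full size $ r+\partial-1 $, with dimension $ (j+1)k = (j+1)\ell r $.

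I would then re-run the proof of the block-code Singleton bound directly on $ \mathcal{D}_j $, so as to keep control over the $ 0 $th block. Concretely, build a coordinate set $ A \subseteq [(j+1)n] $ by putting, in block $ 0 $, the $ \ell-1 $ ``information'' local groups $ \Gamma_2^{(0)},\ldots,\Gamma_\ell^{(0)} $, and in blocks $ 1,\ldots,j $ a further $ j\ell $ full local groups together with $ r-1 $ extra single coordinates (when $ g=\ell $ there is no room left in blocks $ 1,\ldots,j $ and the extra coordinates are taken inside $ \Gamma_1^{(0)} $ instead). Each full local group raises $ \dim(\mathcal{D}_j)_A $ by at most $ r $ (a length-$ (r+\partial-1) $ code of distance $ \geq\partial $ has dimension $ \leq r $, by Singleton), so $ \dim(\mathcal{D}_j)_A \leq ((j+1)\ell-1)r + (r-1) = (j+1)k-1 $ while $ |A| = ((j+1)\ell-1)(r+\partial-1) + (r-1) $. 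Since $ \dim(\mathcal{D}_j)_A < \dim\mathcal{D}_j $, there is a nonzero $ v\in\mathcal{D}_j $ supported on $ B := [(j+1)n]\setminus A $, of Hamming weight $ \leq (j+1)n-|A| $, and a direct computation gives $ (j+1)n-|A| = (n-k)(j+1) - \left( \frac{k(j+1)}{r} - 1 \right)(\partial-1) + 1 $, which is exactly the claimed right-hand side.

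The remaining point — and the one I expect to be the main obstacle — is to guarantee $ v\in\mathcal{C}_j^c $, that is, $ v_0\neq0 $; here the information-set hypothesis is essential. With the above choice of $ A $, the coordinates of $ B $ lying in block $ 0 $ are precisely $ \Gamma_1^{(0)} $ (minus the $ r-1 $ extra coordinates when $ g=\ell $) together with the ``redundant'' groups $ \Gamma_{\ell+1}^{(0)},\ldots,\Gamma_g^{(0)} $. When $ g=\ell $ this forces $ B\subseteq $ block $ 0 $, so $ v=(v_0,0,\ldots,0) $ and $ v_0\neq0 $ is automatic (giving $ \dd_j^c(\mathcal{C})\leq\partial $, consistent with the right-hand side). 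When $ g>\ell $, $ B $ also meets blocks $ 1,\ldots,j $, and one must exclude the possibility that every codeword of $ \mathcal{D}_j $ supported on $ B $ has $ v_0=0 $: such a codeword lies in $ \{v:v_0=0\}\cap\mathcal{D}_j $, which is a copy of $ \mathcal{D}_{j-1} $ supported in blocks $ 1,\ldots,j $, and the information-set property of $ \Gamma_1,\ldots,\Gamma_\ell $ — equivalently, that $ C_0 $ has no nonzero codeword supported on $ \Gamma_{\ell+1}\cup\cdots\cup\Gamma_g $ — together with the dimension count for $ A $ should force a contradiction (alternatively, one deletes instead $ \Gamma_2^{(0)},\ldots,\Gamma_\ell^{(0)} $ and all but $ \partial $ coordinates of $ \Gamma_1^{(0)} $ in block $ 0 $, and checks directly, using the information-set property, that $ C_0 $ has a nonzero codeword supported on the surviving $ \partial $-subset of $ \Gamma_1 $ together with the redundant block-$ 0 $ coordinates, which can be completed to a codeword of $ \mathcal{D}_j $ vanishing on $ A $). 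Making this step precise — choosing exactly which coordinates to delete and organizing the dimension/descent argument — is the crux; everything else is a routine transcription of the locally repairable block-code argument to the column block code $ \mathcal{C}_j^c\cup\{0\} $.
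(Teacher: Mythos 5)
Your strategy is genuinely different from the paper's -- you run the Gopalan--Huang--Simitci--Yekhanin shortening/dimension-count argument on the column block code $ \mathcal{D}_j = \mathcal{C}_j^c \cup \{0\} $, whereas the paper explicitly constructs a low-weight codeword by row-reducing the truncated sliding generator matrix -- and your $ j=0 $ part and the arithmetic $ (j+1)n - |A| = (n-k)(j+1) - \bigl( \tfrac{k(j+1)}{r} - 1 \bigr)(\partial-1) + 1 $ are correct. But the step you yourself flag as the crux, guaranteeing $ v_0 \neq 0 $, is a genuine gap, not a routine verification: with an arbitrary choice of the $ j\ell $ full local groups in blocks $ 1,\ldots,j $, the kernel of the projection onto $ A $ could a priori consist entirely of codewords with $ v_0 = 0 $. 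To close it, choose those $ j\ell $ groups to be precisely the information groups $ \Gamma_1^{(h)}, \ldots, \Gamma_\ell^{(h)} $ for each $ h = 1, \ldots, j $. Since $ G_0 $ is full-rank, the subcode $ \{ v \in \mathcal{D}_j : v_0 = 0 \} $ is the image of $ \{(0,u_1,\ldots,u_j)\} $ under $ G_j^c $, i.e.\ a shifted copy of $ \mathcal{D}_{j-1} $ in blocks $ 1,\ldots,j $; its generator matrix restricted to $ \bigcup_{h=1}^{j}\bigcup_{i=1}^{\ell}\Delta_i^{(h)} $ (with $ \Delta_i $ the first $ r $ coordinates of $ \Gamma_i $) is block upper triangular with the invertible matrix $ (G_0)_\Delta $ on the diagonal, so this set -- which is contained in your $ A $ under the above choice -- is an information set for that subcode. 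Hence no nonzero codeword with $ v_0 = 0 $ vanishes on $ A $, and the guaranteed nonzero kernel element has $ v_0 \neq 0 $, i.e.\ lies in $ \mathcal{C}_j^c $. (Your fallback suggestion of exhibiting a codeword of $ C_0 $ on a $ \partial $-subset of $ \Gamma_1 $ and ``completing'' it to a codeword of $ \mathcal{D}_j $ vanishing on $ A $ is not obviously realizable and should be dropped.)

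For comparison, the paper's proof never meets this difficulty because it works in the primal direction: it row-reduces $ G_j^c $ so that the top block row becomes $ (G_0', G_1', \ldots, G_j') $ with $ G_0' $ systematic on the information set $ \Delta $ and each $ G_h' $, $ h \geq 1 $, identically zero on the columns indexed by $ \Delta $ (the lower block rows, which contain $ G_0 $, are used to clear those columns, again via the invertibility of $ (G_0)_\Delta $). The candidate codeword $ v $ is then the first row of the reduced matrix; it begins with a $ 1 $, so $ v \in \mathcal{C}_j^c $ by construction, and the local condition $ \dd(\mathcal{C}^0_{\Gamma_i}) \geq \partial $ forces the remaining $ \partial - 1 $ coordinates of $ v $ in each information group of each block to vanish, yielding the same weight count $ (j+1)n - \ell(j+1)(r+\partial-1) + \partial $. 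Both routes use the information-set hypothesis at exactly one place -- you need it to separate the kernel from $ \{v_0 = 0\} $, the paper needs it to put $ G_0 $ in systematic form and clear the $ \Delta $-columns of $ G_1, \ldots, G_j $ -- so once your choice of $ A $ is pinned down as above, your argument is a legitimate alternative of essentially the same length.
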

\begin{proof}
We start by observing that $ G_0^c = G_0 $ and the block code 
\begin{equation}
\mathcal{C}_0 = \{ u G_0 \in \mathbb{F}^n \mid u \in \mathbb{F}^k \} = \mathcal{C}_0^c \cup \{ 0 \} \subseteq \mathbb{F}^n
\label{eq def block code for proof singleton bound}
\end{equation}
is a $ k $-dimensional linear block LRC of length $ n $ with $ (r , \partial) $-localities. Hence the bound (\ref{eq singleton bound with localities case j=0}) is the classical upper bound on the minimum Hamming distance of linear block LRCs \cite[Th. 2.1]{kamath}. 

We will now prove the bound (\ref{eq singleton bound with localities}), for $ j \in \mathbb{N} $, given the assumptions in the theorem. Assume that local groups are pair-wise disjoint, and that $ |\Gamma_i| = r + \partial - 1 $, for $ i = 1,2, \ldots, g $. Finally, assume, without loss of generality, that the first $ \ell $ local groups $ \Gamma_1, \Gamma_2, \ldots, \Gamma_\ell $ form an information set for the linear block code $ \mathcal{C}_0 = \mathcal{C}_0^c \cup \{ 0 \} $. 

Let $ \Delta_i \subseteq \Gamma_i $ denote the first $ r $ coordinates in $ \Gamma_i $, for $ i = 1,2, \ldots, g $. By the $ (r,\partial) $-locality of $ \mathcal{C}_0 $, the set 
$$ \Delta = \Delta_1 \cup \Delta_2 \cup \ldots \cup \Delta_\ell \subseteq \Gamma_1 \cup \Gamma_2 \cup \ldots \cup \Gamma_\ell $$
is an information set of $ \mathcal{C}_0 $ of size $ k = \ell r $. Hence we may perform row operations on the generator matrix $ G_0 \in \mathbb{F}^{k \times n} $ of $ \mathcal{C}_0 $ to obtain a systematic generator matrix of the form
\begin{equation}
G_0^\prime = \left( I_{r,1} , A_{0,1} | I_{r,2} , A_{0,2} | \ldots | I_{r,\ell} , A_{0,\ell} | B_{0,1} | B_{0,2} | \ldots | B_{0,g - \ell}  \right) \in \mathbb{F}^{k \times n} , 
\label{eq def G_0^prime}
\end{equation}
for matrices $ A_{0,1}, A_{0,2}, \ldots, $ $ A_{0,\ell} \in \mathbb{F}^{k \times (\partial - 1)} $ and $ B_{0,1}, B_{0,2}, \ldots, $ $ B_{0,g - \ell} \in \mathbb{F}^{k \times (r + \partial - 1)} $, and where $ I_{r,1}, I_{r,2}, \ldots, $ $ I_{r,\ell} \in \mathbb{F}^{k \times r} $ are such that
$$ I_k = \left( I_{r,1}, I_{r,2}, \ldots, I_{r,\ell} \right) \in \mathbb{F}^{k \times k} $$
is the $ k \times k $ identity matrix.

Fix now $ j \in \mathbb{N} $. Using the systematic generator matrix of $ \mathcal{C}_0 $ from (\ref{eq def G_0^prime}), we may perform row operations on the $ j $th \textit{truncated sliding generator matrix} $ G_j^c \in \mathbb{F}^{(j+1)k \times (j+1)n} $ from Definition \ref{def column distance} to obtain a row equivalent matrix (i.e., a matrix with the same row space) of the form
\begin{equation}
 \widetilde{G}_j^c = \left[ \begin{array}{cccc}
G_0^\prime & G_1^\prime & \ldots & G_j^\prime \\
 & G_0 & \ldots & G_{j-1} \\
 & & \ddots & \vdots \\
 & & & G_0
\end{array} \right] \in \mathbb{F}^{(j+1)k \times (j+1)n},
\label{eq def truncated matrix system for singleton proof}
\end{equation}
such that
\begin{equation}
G_h^\prime = \left( 0_{k,r} , A_{h,1} | 0_{k,r} , A_{h,2} | \ldots | 0_{k,r} , A_{h, \ell} | B_{h,1} | B_{h,2} | \ldots | B_{h, g - \ell}  \right) \in \mathbb{F}^{k \times n} , 
\label{eq def G_h^prime}
\end{equation}
for matrices $ A_{h,1}, A_{h,2}, \ldots, $ $ A_{h,\ell} \in \mathbb{F}^{k \times (\partial - 1)} $ and $ B_{h,1}, B_{h,2}, \ldots, $ $ B_{h,g - \ell} \in \mathbb{F}^{k \times (r + \partial - 1)} $, for $ h = 1,2, \ldots, j $, and where $ 0_{k,r} \in \mathbb{F}^{k \times r} $ denotes the $ k \times r $ zero matrix.

Now, let $ v = (v_0, v_1, \ldots, v_j) \in \mathbb{F}^{(j+1)n} $ be the first row of the matrix $ \widetilde{G}_j^c \in \mathbb{F}^{(j+1)k \times (j+1)n} $ from (\ref{eq def truncated matrix system for singleton proof}). By (\ref{eq def G_h^prime}), we have that
\begin{equation*}
\begin{split}
v_0 & = \left( 1, 0_{r-1}, a_{0,1} | 0_r , a_{0,2} | \ldots | 0_r , a_{0,\ell} | b_{0,1} | b_{0,2} | \ldots | b_{0,g - \ell} \right) , \\
v_1 & = \left( 0_r, a_{1,1} | 0_r , a_{1,2} | \ldots | 0_r, a_{1, \ell} | b_{1,1} | b_{1,2} | \ldots | b_{1, g-\ell} \right) , \\
\vdots &  \\
v_j & = \left( 0_r, a_{j,1} | 0_r , a_{j,2} | \ldots | 0_r, a_{j, \ell} | b_{j,1} | b_{j,2} | \ldots | b_{j, g-\ell} \right),
\end{split}
\end{equation*}
for vectors $ a_{h,1}, a_{h,2}, \ldots, $ $ a_{h, \ell} \in \mathbb{F}^{\partial - 1} $, $ b_{h,1}, b_{h,2} , \ldots, b_{h, g-\ell} \in \mathbb{F}^{r + \partial - 1} $, and where $ 0_r \in \mathbb{F}^r $ denotes the zero vector of length $ r $.

Clearly $ v \in \mathcal{C}_j^c $, since it is a linear combination of rows of the matrix $ G_j^c $ from Definition \ref{def column distance}, and its first block of $ n $ components is nonzero, that is, $ v_0 \neq 0 $.

Finally, since $ \mathcal{C} $ is an LRCC, by Item 2 in Definition \ref{def lrc convolutional codes}, we deduce that
\begin{equation*}
\begin{split}
 a_{0,2} = a_{0,3}  = \ldots = a_{0,\ell} & = 0_{\partial - 1} , \\
 a_{1,1} = a_{1,2} = a_{1,3}  = \ldots = a_{1,\ell} & = 0_{\partial - 1} , \\
\vdots &  \\
 a_{j,1} = a_{j,2} = a_{j,3}  = \ldots = a_{j,\ell} & = 0_{\partial - 1} , \\
\end{split}
\end{equation*}
where $ 0_{\partial - 1} \in \mathbb{F}^{\partial - 1} $ is the zero vector of length $ \partial - 1 $. 

Therefore, we conclude that
\begin{equation*}
\begin{split}
\wt_H(v) & \leq (j+1)n - \ell (j+1) (r + \partial - 1) + \partial \\
 & = (j+1)n - (j+1) \ell r - ((j+1) \ell - 1)(\partial - 1) + 1 \\
 & = (n-k) (j+1) - \left( \frac{k(j+1)}{r} - 1 \right) (\partial - 1) + 1,
\end{split}
\end{equation*}
and we are done.
\end{proof}

In the next section, we show how to explicitly construct a non-catastrophic LRCC attaining the previous bound, for all $ j= 0,1,2, \ldots, L $, where $ L $ is as in (\ref{eq def of L}), for fields of any characteristic but sufficiently large.

\begin{remark}
Recall that, by Proposition \ref{prop monotonicity singleton column distance}, a convolutional code that is $ j $-MDS is also $ h $-MDS, for all $ h = 0,1,2, \ldots, j $. However, it is not clear whether a code attaining the bound (\ref{eq singleton bound with localities}) for some $ j $ implies attaining the bound for $ h < j $. We leave this as an open problem. In any case, Construction \ref{construction} below based on a $ j $-MSRD convolutional code attains the bound (\ref{eq singleton bound with localities}) for all $ h = 0,1,2, \ldots, j $.
\end{remark}

\section{LRCCs based on Sum-Rank Convolutional Codes} \label{sec connecting LRCC and SR convolutional}

In this section, we will show how to construct non-catastrophic LRCCs attaining the bound in Theorem \ref{th singleton bound on column distances}, for $ j = 0,1,\ldots, L $, using a \textit{$ j $-MSRD convolutional code} (see Definition \ref{def mu-MSRD convolutional} below). To that end, we will use the notion of \textit{sum-rank weight} on each block of a convolutional code. Sum-rank weights were first defined in \cite{multishot} for error correction in multishot network coding (see also \cite{mahmood-convolutional, linearizedRS, secure-multishot, mrd-convolutional, wachter-convolutional} and the references therein). They were implicitly considered earlier in the space-time coding literature (see \cite[Sec. III]{space-time-kumar}), and they have been first used for locally repairable block codes in \cite{lrc-sum-rank}.

Throughout this section, we will fix a prime power $ q $ and a positive integer $ m $, and we will assume that $ \mathbb{F} = \mathbb{F}_{q^m} $. Fix an ordered basis $ \mathcal{A} = \{ \alpha_1, \alpha_2, \ldots, \alpha_m \} $ of $ \mathbb{F}_{q^m} $ over $ \mathbb{F}_q $. For any positive integer $ s $, we denote by $ M_\mathcal{A} :
\mathbb{F}_{q^m}^s \longrightarrow \mathbb{F}_q^{m \times s} $ the corresponding \textit{matrix representation} map, given by
\begin{equation}
M_\mathcal{A} \left(  c  \right) = \left[ \begin{array}{cccc}
c_{11} & c_{12} & \ldots & c_{1s} \\
c_{21} & c_{22} & \ldots & c_{2s} \\
\vdots & \vdots & \ddots & \vdots \\
c_{m1} & c_{m2} & \ldots & c_{ms} \\
\end{array} \right] \in \mathbb{F}_q^{m \times s},
\label{eq def matrix representation map}
\end{equation}
where $ c = \sum_{i=1}^m \alpha_i (c_{i,1}, c_{i,2}, \ldots, c_{i,s}) \in \mathbb{F}_{q^m}^s $ and $ c_{i,j} \in \mathbb{F}_q $, for $ i = 1,2, \ldots, m $ and $ j = 1,2, \ldots, s $.

Throughout this section, we will also fix a number of local groups $ g $, a locality $ r $, and the \textit{sum-rank length decomposition} $ N = gr $. The following definition is given in \cite{multishot}.

\begin{definition} [\textbf{\cite{multishot}}]
Let $ c = (c^{(1)}, $ $ c^{(2)}, $ $ \ldots,
$ $ c^{(g)}) \in \mathbb{F}_{q^m}^N $, where $
c^{(i)} \in \mathbb{F}_{q^m}^r $, for $ i = 1,2, \ldots,
g $. We define the \textit{sum-rank weight} of $ c $ as
$$ \wt_{SR}(c) = \sum_{i=1}^g {\rm
Rk}(M_{\mathcal{A}}(c^{(i)})). $$
\end{definition}

We extend sum-rank weights to convolutional codes as follows.

\begin{definition}
Given $ v(D) = \sum_{j \in \mathbb{N}} v_j D^j \in \mathbb{F}_{q^m}[D]^N $, we define its \textit{sum-rank weight} as
$$ \wt_{SR}(v(D)) = \sum_{j \in \mathbb{N}} \wt_{SR}(v_j). $$

Given an $ (N,k) $ convolutional code $ \mathcal{C} \subseteq \mathbb{F}_{q^m}[D]^N $, we define its \textit{sum-rank free distance} as
\begin{equation*}
\dd_{SR}(\mathcal{C}) = \min \{ \wt_{SR}(v(D)) \mid v(D) \in \mathcal{C} \textrm{ and } v(D) \neq 0 \} .
\end{equation*}
Finally, we define the \textit{$ j $th sum-rank column distance} of $ \mathcal{C} $ as
$$ \dd_{SR, j}^c(\mathcal{C}) =  \min \left\lbrace {\rm wt}_{SR}(v) \mid v \in \mathcal{C}_j^c \right\rbrace , $$
where $ \mathcal{C}_j^c $ is as in Definition \ref{def column distance}, in particular $ v \neq 0 $ if $ v \in \mathcal{C}_j^c $, for $ j \in \mathbb{N} $.
\end{definition}

Observe that, for any $ c = (c^{(1)}, c^{(2)}, \ldots, c^{(g)}) \in \mathbb{F}_{q^m}^N $, where $
c^{(i)} \in \mathbb{F}_{q^m}^r $, for $ i = 1,2, \ldots, g $, it holds that
$$ \wt_{SR}(c) = \sum_{i=1}^{g} \Rk(M_\mathcal{A}(c^{(i)})) \leq \sum_{i=1}^{g} \wt(c^{(i)}) = \wt(c), $$
since the rank of a matrix is at most the number of its non-zero columns. Hence, the following result follows immediately from its Hamming-metric counterpart (Propositions \ref{prop singleton column distance} and \ref{prop monotonicity singleton column distance}).

\begin{proposition} \label{prop singleton bound on column SR distances}
Given a non-catastrophic $ (N,k) $ convolutional code $ \mathcal{C} \subseteq \mathbb{F}_{q^m}[D]^N $, it holds that
\begin{equation}
\dd_{SR, j}^c(\mathcal{C}) \leq (N-k)(j+1) + 1,
\label{eq singleton bound on column SR distances}
\end{equation}
for all $ j \in \mathbb{N} $. Furthermore, the following hold.
\begin{enumerate}
\item
If $ \dd_{SR, j+1}^c(\mathcal{C}) = (N-k)(j+2) + 1 $, then $ \dd_{SR, j}^c(\mathcal{C}) = (N-k)(j+1) + 1 $, for $ j \in \mathbb{N} $.
\item
If $ \dd_{SR, j}^c(\mathcal{C}) = (N-k)(j+1) + 1 $, then $ j \leq L $, where $ L $ is as in (\ref{eq def of L}).
\end{enumerate}
\end{proposition}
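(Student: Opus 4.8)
The plan is to derive everything from the Hamming-metric facts already recorded in Propositions \ref{prop singleton column distance} and \ref{prop monotonicity singleton column distance} together with the single pointwise inequality $ \wt_{SR}(c) \le \wt_H(c) $ for every $ c \in \mathbb{F}_{q^m}^N $, which was just observed right before the statement. The strategy is a ``transfer'' argument: apply the map $ M_{\mathcal{A}} $ blockwise to reinterpret a sum-rank convolutional code over $ \mathbb{F}_{q^m} $ as (the image of) a Hamming-metric convolutional code over $ \mathbb{F}_q $ of the appropriate parameters, so that the classical bounds apply. More precisely, I would first note that since $ \wt_{SR}(v) \le \wt_H(v) $ for all $ v \in \mathcal{C}_j^c $, minimising over $ \mathcal{C}_j^c $ gives $ \dd_{SR,j}^c(\mathcal{C}) \le \dd_j^c(\mathcal{C}) $; however, that alone is not enough to get the Singleton bound with the right right-hand side, because $ \dd_j^c(\mathcal{C}) $ for a code of length $ N $ over $ \mathbb{F}_{q^m} $ already satisfies only $ \dd_j^c(\mathcal{C}) \le (N-k)(j+1)+1 $ — which is exactly what we want. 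So in fact the inequality $ \dd_{SR,j}^c(\mathcal{C}) \le \dd_j^c(\mathcal{C}) \le (N-k)(j+1)+1 $ immediately yields \eqref{eq singleton bound on column SR distances}; no finite-field expansion is needed for the bound itself.

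For Items 1 and 2, the situation is subtler because they are statements about when \emph{equality} holds, and equality in the sum-rank bound does not obviously transfer to equality in the Hamming bound. Here I would run the argument in the other direction. For Item 1, suppose $ \dd_{SR,j+1}^c(\mathcal{C}) = (N-k)(j+2)+1 $. Combining with the bound just proved, $ (N-k)(j+2)+1 = \dd_{SR,j+1}^c(\mathcal{C}) \le \dd_{j+1}^c(\mathcal{C}) \le (N-k)(j+2)+1 $, so $ \dd_{j+1}^c(\mathcal{C}) = (N-k)(j+2)+1 $ as well; that is, $ \mathcal{C} $ is $ (j+1) $-MDS in the Hamming sense. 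By Item 1 of Proposition \ref{prop monotonicity singleton column distance}, $ \dd_j^c(\mathcal{C}) = (N-k)(j+1)+1 $. Now I still need to deduce that the \emph{sum-rank} $ j $th column distance equals $ (N-k)(j+1)+1 $, not merely the Hamming one. The cleanest route is monotonicity of sum-rank column distances themselves: one should check (this mirrors \cite[Cor. 2.3]{stronglyMDS}) that $ \dd_{SR,j}^c(\mathcal{C}) \ge \dd_{SR,j+1}^c(\mathcal{C}) - (N-k) $, because truncating a codeword of $ \mathcal{C}_{j+1}^c $ achieving the minimum to its first $ j+1 $ blocks can lose at most $ N-k $ in sum-rank weight (the last block of a nonzero codeword in $ \mathcal{C}_{j+1}^c $ with $ v_0 \ne 0 $ contributes at most... — actually one argues via the generic relation $ \dd_j^c $ nondecreasing in $ j $ and the ``drop by at most $ N-k $'' inequality, valid verbatim in any weight dominated by Hamming for which weight is additive over blocks). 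Then $ \dd_{SR,j}^c(\mathcal{C}) \ge (N-k)(j+2)+1-(N-k) = (N-k)(j+1)+1 $, and combined with the upper bound we get equality. For Item 2, if $ \dd_{SR,j}^c(\mathcal{C}) = (N-k)(j+1)+1 $ then as above $ \dd_j^c(\mathcal{C}) = (N-k)(j+1)+1 $, so $ \mathcal{C} $ is $ j $-MDS in the Hamming sense, and Item 2 of Proposition \ref{prop monotonicity singleton column distance} gives $ j \le L $ directly, since $ L $ depends only on $ \delta $, $ k $, and $ N-k $, which are unchanged.

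The step I expect to be the main obstacle is the ``drop by at most $ N-k $'' monotonicity inequality for sum-rank column distances, needed in Item 1 to pass equality downward in the \emph{sum-rank} metric rather than just the Hamming metric. One has to be a little careful: a minimum-weight codeword of $ \mathcal{C}_{j+1}^c $ need not restrict to a codeword of $ \mathcal{C}_j^c $ in the naive sense, so the standard argument (append a well-chosen further block, or truncate and bound the loss using the structure of $ G_{j+1}^c $ versus $ G_j^c $ and the fact that $ G_0 $ is full-rank) has to be reproduced. However, this argument is entirely formal in the weight function: it uses only that the weight is a sum of per-block weights and that each per-block weight is at most $ N $ minus the number of ``information'' coordinates it is forced to carry — properties shared by the sum-rank weight — so the proof of \cite[Cor. 2.3]{stronglyMDS} carries over with no essential change. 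An alternative, possibly cleaner, route that sidesteps this entirely: expand via $ M_{\mathcal{A}} $ to view the sum-rank code as sitting inside a Hamming-metric $ (Nm, km) $-type convolutional code over $ \mathbb{F}_q $ whose blocks are grouped appropriately, but since the pointwise-domination argument already yields all three claims once blockwise-additive monotonicity is in hand, I would present the short transfer proof above and cite \cite{stronglyMDS} for the monotonicity step applied verbatim to $ \wt_{SR} $.
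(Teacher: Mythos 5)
Your proof is correct, and it follows the same transfer strategy the paper relies on: the paper's entire argument consists of the observation $ \wt_{SR}(c) \leq \wt(c) $ made just before the statement, followed by the assertion that the result ``follows immediately from its Hamming-metric counterpart.'' The bound (\ref{eq singleton bound on column SR distances}) and Item 2 do follow immediately, exactly as you argue (for Item 2: $ \dd_j^c(\mathcal{C}) \geq \dd_{SR,j}^c(\mathcal{C}) = (N-k)(j+1)+1 $ forces $ \mathcal{C} $ to be $ j $-MDS, and Proposition \ref{prop monotonicity singleton column distance} applies). Your worry about Item 1 is legitimate and worth flagging: knowing $ \dd_{j}^c(\mathcal{C}) = (N-k)(j+1)+1 $ only bounds $ \dd_{SR,j}^c(\mathcal{C}) $ from \emph{above}, so the required lower bound does not transfer for free, and the paper silently skips this point. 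Your fix is sound and can be made fully explicit without any delicate bookkeeping: take $ v = (v_0,\ldots,v_j) \in \mathcal{C}_j^c $ of minimum sum-rank weight and extend it by one block $ v_{j+1} = w + u_{j+1}G_0 $ with $ w = \sum_{h=0}^{j} u_h G_{j+1-h} $; since $ G_0 $ is full rank, the coset $ w + \{ u_{j+1}G_0 \mid u_{j+1} \in \mathbb{F}_{q^m}^k \} $ contains a vector of Hamming weight at most $ N-k $ (covering radius of a $ k $-dimensional code in $ \mathbb{F}_{q^m}^N $), hence of sum-rank weight at most $ N-k $ by domination. This gives $ \dd_{SR,j+1}^c(\mathcal{C}) \leq \dd_{SR,j}^c(\mathcal{C}) + (N-k) $, which is exactly the ``drop by at most $ N-k $'' inequality you wanted, and the rest of your argument closes. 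In short, your write-up is, if anything, more complete than the paper's.
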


The previous proposition motivates the following definition.

\begin{definition} \label{def mu-MSRD convolutional}
We say that an $ (N,k) $ convolutional code $ \mathcal{C} \subseteq \mathbb{F}_{q^m}[D]^N $ is $ j $-maximum-sum-rank-distance, or $ j $-MSRD for short, if it is non-catastrophic and $ \dd_{SR, j}^c(\mathcal{C}) = (N-k)(j+1) + 1 $. 
\end{definition}

We now describe how to construct LRCCs from sum-rank convolutional codes. This construction is inspired by \cite[Const. I]{rawat}.

\begin{construction} \label{construction}
Assume that $ q \geq r + \partial - 1 $, and choose:
\begin{enumerate}
\item
\textit{Outer code}: An $ (N,k) $ convolutional code $ \mathcal{C}_{out} \subseteq \mathbb{F}_{q^m}[D]^N $.
\item
\textit{Local codes}: An MDS $ (r +\partial -1, r) $ block code $ \mathcal{C}_{loc} \subseteq \mathbb{F}_q^{r + \partial - 1} $ with generator matrix $ A \in \mathbb{F}_q^{r \times (r + \partial - 1)} $.
\item
\textit{Global code}: We define the global code $ \mathcal{C}_{glob} \subseteq \mathbb{F}_{q^m}[D]^n $, with $ n = (r + \partial - 1)g = N + (\partial - 1)g $, as the $ (n,k) $ convolutional code given by 
$$ \mathcal{C}_{glob} = \left\lbrace \sum_{j \in \mathbb{N}} (v_j \diag_g(A)) D^j \mid \sum_{j \in \mathbb{N}} v_j D^j \in \mathcal{C}_{out} \right\rbrace , $$
where $ \diag_g(A) $ is defined as a block-diagonal matrix with $ A \in \mathbb{F}_q^{r \times (r + \partial - 1)} $ repeated $ g $ times (recall that $ N = gr $ and $ n = g(r+\partial-1) $):
$$ \diag_g(A) = \diag(A, A, \ldots, A) = \left[ \begin{array}{cccc}
A & 0 & \ldots & 0 \\
0 & A & \ldots & 0 \\
\vdots & \vdots & \ddots & \vdots \\
0 & 0 & \ldots & A
\end{array} \right] \in \mathbb{F}_q^{N \times n} . $$
\end{enumerate}
\end{construction}

Observe that if $ G_{out}(D) = \sum_{j=0}^\mu G_{out, j} D^j \in \mathbb{F}_{q^m}[D]^{k \times N} $ is a generator matrix of $ \mathcal{C}_{out} \subseteq \mathbb{F}_{q^m}[D]^N $, then a generator matrix of $ \mathcal{C}_{glob} \subseteq \mathbb{F}_{q^m}[D]^n $ is simply given by
$$ G_{glob}(D) = \sum_{j=0}^\mu G_{glob, j} D^j = \sum_{j=0}^\mu \left( G_{out, j} \diag_g(A) \right) D^j \in \mathbb{F}_{q^m}[D]^{k \times n}. $$
In addition, note that multiplying a vector $ v(D) \in \mathbb{F}[D]^N $ on the right by a rank-$ N $ constant matrix $ C \in \mathbb{F}^{N \times n} $ preserves the degree of $ v(D) $. Hence if $ G_{out}(D) $ is reduced, then so is $ G_{glob}(D) $. It also follows easily that if $ G_{out}(D) $ is basic, then so is $ G_{glob}(D) $. Thus we deduce the following.

\begin{lemma} \label{lemma restricted const 1 is non-catastrophic}
In Construction \ref{construction}, if $ \mathcal{C}_{out} $ is non-catastrophic, then so are $ \mathcal{C}_{glob} $ and $ (\mathcal{C}_{glob})_\Delta $, for any $ \Delta = \bigcup_{i=1}^g \Delta_i \subseteq [n] $ such that $ \Delta_i \subseteq \Gamma_i $ and $ |\Delta_i| \geq r $, for $ i = 1,2, \ldots,g $. Here, we denote $ \Gamma_i = [(r+\partial -1)(i-1) + 1, (r+\partial -1)i] \subseteq [n] $, for $ i = 1,2, \ldots, g $.
\end{lemma}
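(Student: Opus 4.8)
The plan is to track the two structural properties ``reduced'' and ``basic'' through the transformation $G_{out}(D) \mapsto G_{out}(D)\diag_g(A)$, since by Definition \ref{def basic and non-catastrophic} a convolutional code is non-catastrophic precisely when it admits a generator matrix that is simultaneously reduced and basic. First I would fix a reduced and basic generator matrix $G_{out}(D) \in \mathbb{F}_{q^m}[D]^{k \times N}$ of $\mathcal{C}_{out}$, which exists by hypothesis, and set $G_{glob}(D) = G_{out}(D) \diag_g(A)$, which by the observation preceding the lemma is a generator matrix of $\mathcal{C}_{glob}$. The key algebraic fact I would isolate is that $\diag_g(A) \in \mathbb{F}_q^{N \times n}$ has full row rank $N$: this holds because $A \in \mathbb{F}_q^{r \times (r+\partial-1)}$ is a generator matrix of the MDS code $\mathcal{C}_{loc}$, hence has full rank $r$, and a block-diagonal matrix of full-row-rank blocks has full row rank. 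Consequently $\diag_g(A)$ admits a polynomial (indeed constant) right inverse $C \in \mathbb{F}_q^{n \times N}$ with $\diag_g(A)\, C = I_N$.

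With this in hand, the ``basic'' part is immediate: if $F(D) \in \mathbb{F}_{q^m}[D]^{N \times k}$ satisfies $G_{out}(D) F(D) = I_k$, then $G_{glob}(D)\, (C F(D)) = G_{out}(D)\diag_g(A)\, C\, F(D) = G_{out}(D) F(D) = I_k$, so $G_{glob}(D)$ has the polynomial right inverse $C F(D)$. For the ``reduced'' part, the point is that multiplying on the right by a constant full-row-rank matrix does not change any row degree: the $i$th row of $G_{glob}(D)$ is the $i$th row of $G_{out}(D)$ times $\diag_g(A)$, and since $\diag_g(A)$ has entries in $\mathbb{F}_q \subseteq \mathbb{F}_{q^m}$ and full row rank, multiplication cannot raise the degree, and it cannot lower it either because applying the right inverse $C$ recovers the original row, so the degree is exactly preserved. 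Hence $G_{glob}(D)$ has the same row degrees as $G_{out}(D)$. Since $\mathcal{C}_{out}$ and $\mathcal{C}_{glob}$ have the same rank $k$ and, by the degree-preservation argument together with the injectivity of right-multiplication by $\diag_g(A)$ on $\mathbb{F}_{q^m}[D]^N$, the same set of achievable row-degree profiles, the minimality of the row-degree sum for $G_{out}(D)$ transfers to $G_{glob}(D)$; thus $G_{glob}(D)$ is reduced as well, and $\mathcal{C}_{glob}$ is non-catastrophic.

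For the restricted code $(\mathcal{C}_{glob})_\Delta$ with $\Delta = \bigcup_{i=1}^g \Delta_i$, $\Delta_i \subseteq \Gamma_i$, $|\Delta_i| \geq r$, I would argue the same way with $\diag_g(A)$ replaced by the matrix $\diag_g(A)_\Delta \in \mathbb{F}_q^{N \times |\Delta|}$ obtained by selecting, within each block, the columns of $A$ indexed by $\Delta_i$ relative to $\Gamma_i$. The crucial input is that each such column selection of $A$ still has full row rank $r$: this is exactly the MDS property of $\mathcal{C}_{loc}$, namely that any $r$ columns of its $r \times (r+\partial-1)$ generator matrix are linearly independent, so selecting at least $r$ columns keeps rank $r$. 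Therefore $\diag_g(A)_\Delta$ again has full row rank $N$ and a constant right inverse, and the identical argument shows $(G_{glob}(D))_\Delta = G_{out}(D)\, \diag_g(A)_\Delta$ is reduced and basic, so $(\mathcal{C}_{glob})_\Delta$ is non-catastrophic. I expect the only genuine subtlety to be the ``reduced'' claim: one must be careful that ``reduced'' is a minimality statement over \emph{all} generator matrices of the code, so one needs the bijection $G(D) \mapsto G(D)\diag_g(A)$ (respectively $G(D)\diag_g(A)_\Delta$) between generator matrices of the two codes that preserves row degrees --- which follows from the existence of the constant right inverse --- to conclude that a row-degree-sum-minimizer on one side maps to one on the other.
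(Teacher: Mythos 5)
Your proposal is correct and follows essentially the same route as the paper, which proves the lemma via the observation preceding it: right multiplication by a full-row-rank constant matrix (here $\diag_g(A)$, or its column restriction, which retains rank $N$ by the MDS property of $\mathcal{C}_{loc}$) preserves row degrees and hence reducedness, and preserves basicness via the constant right inverse. Your write-up merely fills in the details (explicit right inverse, the unimodular correspondence between generator matrices needed for the minimality in ``reduced'') that the paper leaves implicit.
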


As it was the case for locally repairable block codes (see \cite[Lemma 1]{lrc-sum-rank}), any LRCC whose local codes are all encoded by the same linear MDS code over the subfield $ \mathbb{F}_q $, are necessarily of the form of Construction \ref{construction}. For this reason, Construction \ref{construction} not only is natural, but it is somehow unavoidable. 

We may now prove the main result of this section, which states that $ \mathcal{C}_{glob} $ in Construction \ref{construction} has maximum $ h $th sum-rank column distance among all non-catastrophic $ (n,k,r,\partial) $ LRCC, for $ h = 0,1,2, \ldots, j $, if $ \mathcal{C}_{out} $ is $ j $-MSRD.

\begin{theorem} \label{th MSRD implies optimal LRCC}
In Construction \ref{construction}, $ \mathcal{C}_{glob} $ is an $ (n,k,r,\partial) $ LRCC. Furthermore, if $ j \in \mathbb{N} $ and $ \mathcal{C}_{out} $ is $ j $-MSRD (thus non-catastrophic), then $ \mathcal{C}_{glob} $ is non-catastrophic and  
\begin{equation}
\dd_h^c (\mathcal{C}_{glob}) = (n-k)(h+1) - \left(  \left\lceil \frac{k(h+1)}{r} \right\rceil  - 1 \right)(\partial - 1) + 1,
\label{eq d_ell attains local singleton}
\end{equation}
for all $ h = 0,1,2, \ldots, j $.
\end{theorem}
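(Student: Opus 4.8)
The plan is to derive both the locality statement and the column distance equality in (\ref{eq d_ell attains local singleton}) by transferring the sum-rank structure of $\mathcal{C}_{out}$ through the block-diagonal encoding $\diag_g(A)$. First I would verify that $\mathcal{C}_{glob}$ is an $(n,k,r,\partial)$ LRCC: take $\Gamma_i = [(r+\partial-1)(i-1)+1, (r+\partial-1)i]$ for $i=1,\dots,g$, so that $[n] = \bigcup_{i=1}^g \Gamma_i$ with $|\Gamma_i| = r+\partial-1$. For Item 1 of Definition \ref{def lrc convolutional codes} the size constraint is immediate; for Item 2, observe that $(\mathcal{C}_{glob}^0)_{\Gamma_i}$ consists of all vectors of the form $w A$ with $w$ ranging over $(\mathcal{C}_{out}^0)_{\{\text{block }i\}}$, and since $A$ generates the MDS code $\mathcal{C}_{loc}$ of minimum distance $\partial$ (and $(\mathcal{C}_{out}^0)$ restricted to a length-$r$ block is all of $\mathbb{F}_{q^m}^r$ because $G_{out,0}$ restricted there is full-rank, or at worst is a code whose image under $A$ still has minimum distance $\geq \partial$ since $A$ has the MDS property), we get $\dd((\mathcal{C}_{glob}^0)_{\Gamma_i}) \geq \partial$. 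The non-catastrophicity of $\mathcal{C}_{glob}$ is exactly Lemma \ref{lemma restricted const 1 is non-catastrophic}.

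The core of the argument is the distance equality. The key lemma I would establish is a weight-translation identity: for any $v \in \mathbb{F}_{q^m}^N$ written blockwise as $(v^{(1)},\dots,v^{(g)})$ with $v^{(i)} \in \mathbb{F}_{q^m}^r$, one has
\begin{equation*}
\wt_H(v \diag_g(A)) = \wt_{SR}(v) \cdot (\partial-1) + \text{(correction)},
\end{equation*}
but more precisely the clean statement I actually want is that for each block, $\wt_H(v^{(i)} A) \geq \Rk(M_{\mathcal{A}}(v^{(i)}))\cdot$(something) — here I should be careful: since $A \in \mathbb{F}_q^{r\times(r+\partial-1)}$ generates an MDS code over the \emph{subfield} $\mathbb{F}_q$, the image $v^{(i)}A$ of a weight-$w$ (in the rank sense) vector has Hamming weight at least $(\partial-1) + w$ when $w \geq 1$, because the $\mathbb{F}_q$-row space of $M_{\mathcal{A}}(v^{(i)}A)$ has dimension $\Rk(M_{\mathcal{A}}(v^{(i)}))$ and lies in the MDS code $\mathcal{C}_{loc}$, forcing each nonzero $\mathbb{F}_q$-combination to have Hamming weight $\geq \partial$, hence the whole block-image has at least $(\partial-1)+\Rk(M_{\mathcal{A}}(v^{(i)}))$ nonzero columns. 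This is precisely the mechanism from \cite[Lemma 1 / proof of Theorem]{lrc-sum-rank} adapted to one shot. Summing over the $g$ blocks and over the coefficients $v_0,\dots,v_h$ of a truncated codeword gives, for any nonzero $v \in (\mathcal{C}_{glob})_h^c$ with preimage $\bar v \in (\mathcal{C}_{out})_h^c$,
\begin{equation*}
\wt_H(v) \;\geq\; \wt_{SR}(\bar v) \;+\; (\partial - 1)\cdot(\text{number of blocks over all }h+1\text{ coefficients with nonzero rank}).
\end{equation*}
Combining $\wt_{SR}(\bar v) \geq \dd_{SR,h}^c(\mathcal{C}_{out}) = (N-k)(h+1)+1$ (using that $\mathcal{C}_{out}$ is $h$-MSRD for all $h \leq j$, by Item 1 of Proposition \ref{prop singleton bound on column SR distances}) with a careful count of how many blocks must carry nonzero rank — at least $\lceil ((N-k)(h+1)+1)/r \rceil$ many, since each block contributes rank at most $r$ — yields the lower bound $\dd_h^c(\mathcal{C}_{glob}) \geq (n-k)(h+1) - (\lceil k(h+1)/r\rceil - 1)(\partial-1) + 1$ after substituting $n = N + (\partial-1)g$ and $N-k$ bookkeeping.

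For the matching upper bound, I would invoke Theorem \ref{th singleton bound on column distances}: $\mathcal{C}_{glob}$ is an LRCC with the disjoint, full-size local groups $\Gamma_i$, and one checks $k = \ell r$ is forced (or handle the general $\lceil k/r \rceil$ form directly from (\ref{eq singleton bound with localities case j=0}) and its higher-$j$ analogue), so (\ref{eq singleton bound with localities}) gives exactly the right-hand side of (\ref{eq d_ell attains local singleton}) as an upper bound, matching the lower bound from the previous paragraph. I expect the \textbf{main obstacle} to be the precise blockwise rank-to-Hamming-weight inequality and the accompanying counting of nonzero-rank blocks: one must argue that the worst case — where the sum-rank weight is spread so as to minimize the number of blocks used, each saturated at rank $r$ — is exactly the configuration the Singleton bound anticipates, and that the ceilings in the two bounds line up. A subtlety to watch is whether $\lceil k(h+1)/r\rceil$ (from the bound) equals $k(h+1)/r$ or differs by the boundary effect of $v_0 \neq 0$; I would resolve this by tracking the constraint $u_0 \neq 0$ through the minimal-weight codeword, exactly as in the argument that $h$-MSRD propagates downward.
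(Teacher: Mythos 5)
Your proposal is correct and reaches the same conclusion, but the core lower-bound argument takes a genuinely different route from the paper's. The paper argues by contradiction with erasure patterns: it supposes a nonzero $ v = x\,\diag_{g(j+1)}(A) \in (\mathcal{C}_{glob})_j^c $ is supported inside an erasure pattern of size $ e $, identifies (following \cite{rawat}) the worst case as the pattern concentrating erasures in the fewest local groups, restricts each local group to $ r $ surviving coordinates where $ A $ becomes an invertible $ r \times r $ matrix, and concludes $ \wt_{SR}(x) \leq (N-k)(j+1) $, contradicting $ \dd_{SR,j}^c(\mathcal{C}_{out}) = (N-k)(j+1)+1 $. You instead prove a direct per-block inequality $ \wt(v^{(i)}A) \geq (\partial-1) + \Rk(M_{\mathcal{A}}(v^{(i)})) $ whenever the rank is positive (via the support size of a $ w $-dimensional subspace of the MDS local code, i.e.\ its generalized Hamming weights), sum over all $ g(h+1) $ blocks, and count the active blocks; the identity $ \left\lceil \left((N-k)(h+1)+1\right)/r \right\rceil = g(h+1) - \left\lceil k(h+1)/r \right\rceil + 1 $ (using $ N = gr $) then delivers exactly the claimed lower bound, as one can verify. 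Your route avoids the paper's somewhat informal ``best-case erasure pattern'' step and produces the ceiling $ \left\lceil k(h+1)/r \right\rceil $ directly, at the cost of invoking the generalized-weight fact for MDS codes, whereas the paper only needs invertibility of $ r \times r $ submatrices of $ A $. Both arguments ultimately reduce to the $ j $-MSRD property of $ \mathcal{C}_{out} $, both delegate the matching upper bound to Theorem \ref{th singleton bound on column distances} (where, as you rightly flag, the divisibility of $ k(h+1) $ by $ r $ is assumed ``for simplicity'' in the paper as well), and both use Proposition \ref{prop singleton bound on column SR distances} to propagate the statement down to all $ h \leq j $.
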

\begin{proof}
First, it follows easily from the definitions and Construction \ref{construction} that $ \mathcal{C}_{glob} $ is an $ (n,k,r,\partial) $ LRCC. The non-catastrophic property is part of Lemma \ref{lemma restricted const 1 is non-catastrophic}. Therefore, we only need to show that (\ref{eq d_ell attains local singleton}) holds for $ h = j $, since if $ \mathcal{C}_{out} $ is $ j $-MSRD, then $ \mathcal{C}_{out} $ is $ h $-MSRD, for all $ h = 0,1,2, \ldots, j $, by Proposition \ref{prop singleton bound on column SR distances}.

Now we need to show that, for any $ v \in (\mathcal{C}_{glob})_j^c \subseteq \mathbb{F}_{q^m}^{n(j+1)} $, the non-zero coordinates of $ v $ are not all inside some pattern of
$$ e = (n-k)(j+1) - \left(  \left\lceil \frac{k(j+1)}{r} \right\rceil  - 1 \right)(\partial - 1)  $$
erasures in the block $ [n(j+1)] $ of coordinates. 

Assume the opposite holds, that is, there exists $ v \in (\mathcal{C}_{glob})_j^c $ with all of its non-zero coordinates in an erasure pattern of size $ e $. Observe that $  v \in \mathbb{F}_{q^m}^{n(j+1)} $ is a block codeword, and by construction, there exists $ x \in (\mathcal{C}_{out})_j^c $ such that $ v = x \diag_{g(j+1)}(A) $.

Let $ \mathcal{E}_{gh + i} \subseteq [r+\partial-1] $ be the erasure pattern in the $ i $th local group in the $ h $th block of $ n $ coordinates, and define $ \mathcal{R}_{gh + i} = [r+\partial-1] \setminus \mathcal{E}_{gh + i} $, for $ i = 1,2, \ldots, g $ and $ h = 0,1,2, \ldots, j $. The truncated global codeword after removing all the symbols in such an erasure pattern is by assumption the zero vector, that is,
\begin{equation}
0 = x \diag(A|_{\mathcal{R}_1}, A|_{\mathcal{R}_2}, \ldots, A|_{\mathcal{R}_{g(j+1)}}) \in \mathbb{F}_{q^m}^{n(j+1) - e},
\label{eq codeword with erasure pattern}
\end{equation}
where $ e = \sum_{u = 1}^{g(j+1)} |\mathcal{E}_u| = n(j+1) - \sum_{u = 1}^{g(j+1)} |\mathcal{R}_u| $.

Assume for simplicity that $ k(j+1) = \ell r $, for some integer $ \ell \in \mathbb{N} $. Note that we may decompose
\begin{equation}
e = n(j+1) - (r + \partial - 1)\ell + \partial -1.
\label{eq decomposition e for proof j-MDS}
\end{equation}
As discussed in the proof of \cite[Th. 24]{rawat}, the best-case erasure pattern is obtained when erasures concentrate in the smallest number of local groups. Here, by best-case erasure pattern we mean an erasure pattern whose complement set of coordinates contain the most locally redundant symbols, which means that $ \sum_{u=1}^{g(j+1)} \Rk(A|_{\mathcal{R}_u}) $ is the minimum possible. Thus by (\ref{eq decomposition e for proof j-MDS}), in the best case we have without loss of generality that $ \mathcal{R}_u = [r+\partial-1] $, for $ u = 1,2, \ldots, \ell - 1 $, $ |\mathcal{R}_\ell| = r $, and $ \mathcal{R}_u = \varnothing $, for $ u = \ell+1 , \ell+2 , \ldots, g(j+1) $. Since $ \mathcal{C}_{loc} $ is an $ (r+\partial-1, r) $ MDS code, we have that $ \Rk(A|_{\mathcal{R}_u}) = r $, for $ u = 1,2, \ldots, \ell $. Therefore, in the best case, we have that
$$ \sum_{u=1}^{g(j+1)} \Rk(A|_{\mathcal{R}_u}) = \ell r = k(j+1). $$
Define now $ \mathcal{R}^\prime_u \subseteq [r+\partial-1] $ as the set formed by some $ r $ coordinates in $ \mathcal{R}_u \subseteq [r+\partial-1] $, for $ u = 1,2, \ldots, \ell $. Define also $ \mathcal{R}^\prime_u \subseteq [r+\partial - 1] $ as any $ r $ coordinates in $ [r+\partial-1] $, for $ u = \ell+1, \ell+2, \ldots, g(j+1) $. Since $ \mathcal{C}_{loc} $ is an $ (r+\partial-1, r) $ MDS code, we have that $ {\rm Rk}(A|_{\mathcal{R}^\prime_u}) = r $, that is, $ A|_{\mathcal{R}^\prime_u} \in \mathbb{F}_q^{r \times r} $ is invertible, for $ u = 1,2, \ldots, g(j+1) $. Therefore, we conclude that
\begin{equation*}
\begin{split}
{\rm wt}_{SR} (x) & = {\rm wt}_{SR} (x \diag(A|_{\mathcal{R}^\prime_1}, A|_{\mathcal{R}^\prime_2}, \ldots, A|_{\mathcal{R}^\prime_{g(j+1)}})) \\
 & \leq {\rm wt}_H (x \diag(A|_{\mathcal{R}^\prime_1}, A|_{\mathcal{R}^\prime_2}, \ldots, A|_{\mathcal{R}^\prime_{g(j+1)}})) \\
 & \leq g(j+1)r - \ell r = (gr - k)(j+1) = (N-k)(j+1),
\end{split}
\end{equation*}
where the last inequality follows from (\ref{eq codeword with erasure pattern}). This is absurd since $ x \in (\mathcal{C}_{out})_j^c $ and
$$ \dd_{SR, j}^c(\mathcal{C}_{out}) = (N-k)(j+1) + 1 . $$
We conclude that there is no $ v \in (\mathcal{C}_{glob})_j^c $ whose non-zero coordinates are all inside some pattern of $ e $ erasures, hence $ \dd_{SR, j}^c(\mathcal{C}_{glob}) \geq e+1 $, and we are done.
\end{proof}

We conclude by plugging in Construction \ref{construction} the MSRD convolutional codes from \cite{mahmood-convolutional} (see Appendix \ref{app MSRD conv codes}) as outer codes, and applying the previous theorem.

\begin{corollary} \label{cor j-MSRD applied to const 1}
If $ N = gr $, $ (N-k) | \delta $, $ M = \max \{ N-k,k \} $, $ L = \lfloor \frac{\delta}{k} \rfloor + \delta/(N-k) $, $ q \geq r+\partial - 1 $ and $ m \geq q^{M(L+2) - 1} $, then there exists a non-catastrophic $ (n,k,r,\partial) $ LRCC $ \mathcal{C}_{glob} \subseteq \mathbb{F}_{q^m}[D]^n $, of degree $ \delta $, satisfying (\ref{eq singleton bound with localities}) with equality, for $ j = 0,1,2, \ldots, L $, given as in Construction \ref{construction}, and where $ \mathcal{C}_{out} \subseteq \mathbb{F}_{q^m}[D]^N $ is the non-catastrophic $ L $-MSRD convolutional code in Appendix \ref{app MSRD conv codes}.
\end{corollary}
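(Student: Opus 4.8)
The plan is to simply instantiate Construction~\ref{construction} with the building blocks dictated by the hypotheses and then quote Theorem~\ref{th MSRD implies optimal LRCC}. Concretely, take as local code $\mathcal{C}_{loc}$ any $(r+\partial-1,r)$ MDS block code over $\mathbb{F}_q$; such a code exists because $q \geq r+\partial-1$ (e.g.\ a Reed--Solomon code evaluated at $r+\partial-1$ distinct points of $\mathbb{F}_q$), so the field-size hypothesis of Construction~\ref{construction} is met. As outer code $\mathcal{C}_{out}$, take the non-catastrophic $(N,k)$ convolutional code of degree $\delta$ recalled in Appendix~\ref{app MSRD conv codes}. First I would check that the hypotheses of the corollary are exactly what that construction needs in order to exist and to be $L$-MSRD: the divisibility $(N-k)\mid\delta$, the quantity $M=\max\{N-k,k\}$, the value $L=\lfloor \delta/k\rfloor+\delta/(N-k)$ --- which, since $(N-k)\mid\delta$, equals $\lfloor\delta/k\rfloor+\lfloor\delta/(N-k)\rfloor$, i.e.\ the quantity~(\ref{eq def of L}) evaluated for a length-$N$ code --- and the field-size requirement $m\geq q^{M(L+2)-1}$. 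Hence $\mathcal{C}_{out}\subseteq\mathbb{F}_{q^m}[D]^N$ exists and is $L$-MSRD.

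Next I would run Construction~\ref{construction} with these $\mathcal{C}_{out}$ and $\mathcal{C}_{loc}$, obtaining $\mathcal{C}_{glob}\subseteq\mathbb{F}_{q^m}[D]^n$ with $n=(r+\partial-1)g=N+(\partial-1)g$. By Theorem~\ref{th MSRD implies optimal LRCC}, $\mathcal{C}_{glob}$ is an $(n,k,r,\partial)$ LRCC, it is non-catastrophic (this is Lemma~\ref{lemma restricted const 1 is non-catastrophic}, since $\mathcal{C}_{out}$ is), and, applying that theorem with $j=L$ --- legitimate because $\mathcal{C}_{out}$ being $L$-MSRD implies it is $h$-MSRD for every $h\le L$ by Proposition~\ref{prop singleton bound on column SR distances} --- one obtains
\[
\dd_h^c(\mathcal{C}_{glob}) = (n-k)(h+1) - \left(\left\lceil\tfrac{k(h+1)}{r}\right\rceil-1\right)(\partial-1)+1
\]
for $h=0,1,\ldots,L$. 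Under the standing divisibility $r\mid k$ that makes~(\ref{eq singleton bound with localities}) meaningful, the ceiling is an equality and this is precisely~(\ref{eq singleton bound with localities}) attained with equality. It remains to record that $\delta(\mathcal{C}_{glob})=\delta$: this is the observation stated just after Construction~\ref{construction}, namely that right-multiplication by the full-rank constant matrix $\diag_g(A)$ preserves the degree of every vector in $\mathbb{F}_{q^m}[D]^N$, hence sends a reduced generator matrix of $\mathcal{C}_{out}$ to a reduced generator matrix of $\mathcal{C}_{glob}$ with the same row degrees.

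There is essentially no obstacle here beyond bookkeeping: all the mathematical content has already been established in Theorem~\ref{th MSRD implies optimal LRCC} and in the construction of the MSRD convolutional codes of Appendix~\ref{app MSRD conv codes}. The only points that require genuine (but routine) care are the two parameter matchings: that the $L$ appearing in the statement is the quantity~(\ref{eq def of L}) for the \emph{outer} code (length $N$, hence the term $\delta/(N-k)$ rather than $\delta/(n-k)$), and that $q\ge r+\partial-1$ is exactly the classical sufficient condition for the existence of the local $(r+\partial-1,r)$ MDS code over $\mathbb{F}_q$. Everything else is a substitution into Construction~\ref{construction} followed by an invocation of Theorem~\ref{th MSRD implies optimal LRCC}.
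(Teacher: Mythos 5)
Your proposal is correct and follows exactly the route the paper intends: the paper states this corollary without a separate proof, as the immediate consequence of taking the Appendix~\ref{app MSRD conv codes} code as $\mathcal{C}_{out}$ and an $(r+\partial-1,r)$ MDS code over $\mathbb{F}_q$ as $\mathcal{C}_{loc}$ in Construction~\ref{construction} and invoking Theorem~\ref{th MSRD implies optimal LRCC}. Your additional bookkeeping (the identification of $L$ with (\ref{eq def of L}) for the length-$N$ outer code via $(N-k)\mid\delta$, the role of $q\geq r+\partial-1$, and the preservation of the degree under right-multiplication by $\diag_g(A)$) is accurate and matches the remarks the paper makes around the construction.
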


Corollary \ref{cor j-MSRD applied to const 1} not only shows that the upper bound given in (\ref{eq singleton bound with localities}) is sharp, but also provides an explicit class of codes that achieves such a bound. Moreover, these codes exist for any characteristic (in particular, when $ 2 | q $), and the local code may be arbitrary and with local fields of size $ q \approx r+\partial-1 $, which are small. We may also choose $ q = 2 $ if $ \partial = 2 $ and local repair would simply consist in XORing. Their main disadvantage is the huge exponent $ m $, which is in turn exponential in the degree $ \delta $ and in $ \max \{ N-k, k \} $. However, the bound on $ m $ in the corollary is only a bound, and there are cases when $ m $ can be chosen much smaller (see Table I in \cite{mahmood-convolutional}).

\section{Partial $j$-MDS and Partial MDP Convolutional Codes} \label{sec partial MDS convolutional}

In this section, we introduce partial MDP convolutional codes, analogous to the concept of \textit{partial MDS codes}, or LRC with \textit{maximal recoverability} (MR), introduced in \cite{blaum-RAID, gopalan-MR}. We will conclude by showing that the codes in Corollary \ref{cor j-MSRD applied to const 1} are partial MDP.

\begin{definition} \label{def partial MDS}
With notation as in Definition \ref{def lrc convolutional codes}, and for $ j \in \mathbb{N} $, we say that an $ (n,k,r,\partial) $ LRCC $ \mathcal{C} \subseteq \mathbb{F}[D]^n $ is \textit{partial $ j $-MDS} if the following holds: For all $ \Delta_i \subseteq \Gamma_i $ such that $ |\Gamma_i \setminus \Delta_i| = \partial - 1 $, for $ i = 1,2, \ldots, g $, the restricted $ (N,k) $ convolutional code $ \mathcal{C}_{\Delta} \subseteq \mathbb{F}[D]^N $ is non-catastrophic and $ j $-MDS (Definition \ref{def mds and mdp}), where $ \Delta = \bigcup_{i=1}^g \Delta_i $ and $ N = |\Delta| $.
\end{definition}

\begin{figure*} [!t]
\begin{center}

\begin{tabular}{c@{\extracolsep{1cm}}c}
\begin{tikzpicture}[
square/.style = {draw, rectangle, 
                 minimum size=\m, outer sep=0, inner sep=0, font=\small,
                 },
                        ]
\def\m{13pt}
\def\w{26} 
\def\wm{25}
\def\h{6}
\def\glob{3}
\def\loc{5}
    \pgfmathsetmacro\uw{int(\w/2)}
    \pgfmathsetmacro\uh{int(\h/2)}

\foreach \x in {0,...,3}
{
  \node [] at (\x*\m + \m, 0.01) {$ v_{\x} $};
}

  \foreach \x in {1,...,4}
  {
    \foreach \y in {1,...,\h}
       {    
           
           \ifnum\y>\loc
               \node [square, fill=gray!30]  (\x,\y) at (\x*\m,-\y*\m) {$  $};
           \else
               \ifnum\y>\glob
                   \node [square, fill=gray!60]  (\x,\y) at (\x*\m,-\y*\m) {$  $};
               \else
                   \node [square, fill=white]  (\x,\y) at (\x*\m,-\y*\m) {$  $};
               \fi
           \fi
       }
  }
  
\node [square, fill=red!100]  (1,2) at (1*\m,-2*\m) {$  $};  
\node [square, fill=red!100]  (3,6) at (3*\m,-6*\m) {$  $};
\node [square, fill=red!100]  (4,1) at (4*\m,-1*\m) {$  $};
  
\node [] (1, \h) at (5*\m + 1*\m, -3.5*\m) {$ \ldots $};

\node [] at (7*\m + \m, 0.01) {$ v_{t - \mu} $};
\node [] at (8.5*\m + \m, 0.01) {$ \ldots $};
\node [] at (10*\m + \m, 0.01) {$ v_{t - 1} $};

  \foreach \x in {8,...,11}
  {
    \foreach \y in {1,...,\h}
       {    
           
           \ifnum\y>\loc
               \node [square, fill=gray!30]  (\x,\y) at (\x*\m,-\y*\m) {$  $};
           \else
               \ifnum\y>\glob
                   \node [square, fill=gray!60]  (\x,\y) at (\x*\m,-\y*\m) {$  $};
               \else
                   \node [square, fill=white]  (\x,\y) at (\x*\m,-\y*\m) {$  $};
               \fi
           \fi
       }
  }

\node [square, fill=red!100]  (8,4) at (8*\m,-4*\m) {$  $};
\node [square, fill=red!100]  (10,2) at (10*\m,-2*\m) {$  $};

\node [] at (12*\m + \m, 0.01) {$ v_{t} $};
\node [] at (13*\m + \m, 0.01) {$ v_{t + 1} $};
\node [] at (15*\m + \m, 0.01) {$ \ldots $};
\node [] at (18*\m + \m, 0.01) {$ \ldots $};
\node [] at (20*\m + \m, 0.01) {$ v_{t + j} $};

  \foreach \x in {13,...,21}
  {
    \foreach \y in {1,...,\h}
       {    
           
           \ifnum\y>\loc
               \node [square, fill=gray!30]  (\x,\y) at (\x*\m,-\y*\m) {$  $};
           \else
               \ifnum\y>\glob
                   \node [square, fill=gray!60]  (\x,\y) at (\x*\m,-\y*\m) {$  $};
               \else
                   \node [square, fill=white]  (\x,\y) at (\x*\m,-\y*\m) {$  $};
               \fi
           \fi
       }
  }

\node [square, fill=red!100]  (13,2) at (13*\m,-2*\m) {$  $};   
\node [square, fill=red!100]  (13,3) at (13*\m,-3*\m) {$  $};   
\node [square, fill=red!100]  (13,4) at (13*\m,-4*\m) {$  $};  
\node [square, fill=red!100]  (14,2) at (14*\m,-2*\m) {$  $};   

\node [square, fill=red!100]  (15,4) at (15*\m,-4*\m) {$  $}; 
\node [square, fill=red!100]  (15,5) at (15*\m,-5*\m) {$  $}; 
\node [square, fill=red!100]  (15,6) at (15*\m,-6*\m) {$  $}; 

\node [square, fill=red!100]  (16,3) at (16*\m,-3*\m) {$  $}; 
\node [square, fill=red!100]  (16,4) at (16*\m,-4*\m) {$  $};   
\node [square, fill=red!100]  (16,5) at (16*\m,-5*\m) {$  $};
   
\node [square, fill=red!100]  (18,2) at (18*\m,-2*\m) {$  $};   
\node [square, fill=red!100]  (18,3) at (18*\m,-3*\m) {$  $}; 

\node [square, fill=red!100]  (19,6) at (19*\m,-6*\m) {$  $}; 

\node [square, fill=red!100]  (20,2) at (20*\m,-2*\m) {$  $};   
\node [square, fill=red!100]  (20,3) at (20*\m,-3*\m) {$  $};  
   
\node [square, fill=red!100]  (21,3) at (21*\m,-3*\m) {$  $}; 
\node [square, fill=red!100]  (21,4) at (21*\m,-4*\m) {$  $};   
\node [square, fill=red!100]  (21,5) at (21*\m,-5*\m) {$  $};

\node [] at (24.5*\m + \m, 0.01) {$ \ldots $};  
\node [] at (22*\m + \m, 0.01) {$ v_{t + j + 1} $};  

  \foreach \x in {23,...,26}
  {
    \foreach \y in {1,...,\h}
       {    
           
           \ifnum\y>\loc
               \node [square, fill=gray!30]  (\x,\y) at (\x*\m,-\y*\m) {$  $};
           \else
               \ifnum\y>\glob
                   \node [square, fill=gray!60]  (\x,\y) at (\x*\m,-\y*\m) {$  $};
               \else
                   \node [square, fill=white]  (\x,\y) at (\x*\m,-\y*\m) {$  $};
               \fi
           \fi
       }
  }  
\node [square, fill=red!100]  (23,1) at (23*\m,-1*\m) {$  $}; 
\node [square, fill=red!100]  (23,2) at (23*\m,-2*\m) {$  $};   
\node [square, fill=red!100]  (23,3) at (23*\m,-3*\m) {$  $};      
    
\node [square, fill=red!100]  (24,6) at (24*\m,-6*\m) {$  $}; 
   
\node [square, fill=red!100]  (25,4) at (25*\m,-4*\m) {$  $};   
\node [square, fill=red!100]  (25,5) at (25*\m,-5*\m) {$  $};    
\node [square, fill=red!100]  (25,6) at (25*\m,-6*\m) {$  $}; 

\node [square, fill=red!100]  (26,3) at (26*\m,-3*\m) {$  $};   
\node [square, fill=red!100]  (26,4) at (26*\m,-4*\m) {$  $};

\node [] (1, \h) at (-0.6*\m, -3.5*\m) {\phantom{$ \ldots $}};
\node [] (\w + 1, \h) at (\w*\m + \m + 0.6*\m, -3.5*\m) {$ \ldots $};

\end{tikzpicture}

\end{tabular}

\vspace{2em}

\begin{tabular}{c@{\extracolsep{1cm}}c}
\begin{tikzpicture}[
square/.style = {draw, rectangle, 
                 minimum size=\m, outer sep=0, inner sep=0, font=\small,
                 },
                        ]
\def\m{13pt}
\def\w{26} 
\def\wm{25}
\def\h{6}
\def\glob{3}
\def\loc{5}
    \pgfmathsetmacro\uw{int(\w/2)}
    \pgfmathsetmacro\uh{int(\h/2)}

\foreach \x in {0,...,3}
{
  \node [] at (\x*\m + \m, 0.01) {$ v_{\x} $};
}

  \foreach \x in {1,...,4}
  {
    \foreach \y in {1,...,\h}
       {    
           
           \ifnum\y>\loc
               \node [square, fill=gray!30]  (\x,\y) at (\x*\m,-\y*\m) {$  $};
           \else
               \ifnum\y>\glob
                   \node [square, fill=gray!60]  (\x,\y) at (\x*\m,-\y*\m) {$  $};
               \else
                   \node [square, fill=white]  (\x,\y) at (\x*\m,-\y*\m) {$  $};
               \fi
           \fi
       }
  }
  
\node [square, fill=red!100]  (1,2) at (1*\m,-2*\m) {$  $};  
\node [square, fill=red!100]  (3,6) at (3*\m,-6*\m) {$  $};
\node [square, fill=red!100]  (4,1) at (4*\m,-1*\m) {$  $};
  
\node [] (1, \h) at (5*\m + 1*\m, -3.5*\m) {$ \ldots $};

\node [] at (7*\m + \m, 0.01) {$ v_{t - \mu} $};
\node [] at (8.5*\m + \m, 0.01) {$ \ldots $};
\node [] at (10*\m + \m, 0.01) {$ v_{t - 1} $};

  \foreach \x in {8,...,11}
  {
    \foreach \y in {1,...,\h}
       {    
           
           \ifnum\y>\loc
               \node [square, fill=gray!30]  (\x,\y) at (\x*\m,-\y*\m) {$  $};
           \else
               \ifnum\y>\glob
                   \node [square, fill=gray!60]  (\x,\y) at (\x*\m,-\y*\m) {$  $};
               \else
                   \node [square, fill=white]  (\x,\y) at (\x*\m,-\y*\m) {$  $};
               \fi
           \fi
       }
  }

\node [square, cross out, thick]  (8,4) at (8*\m,-4*\m) {$  $};
\node [square, cross out, thick]  (9,6) at (9*\m,-6*\m) {$  $};
\node [square, cross out, thick]  (10,2) at (10*\m,-2*\m) {$  $};
\node [square, cross out, thick]  (11,6) at (11*\m,-6*\m) {$  $};

\node [] at (12*\m + \m, 0.01) {$ v_{t} $};
\node [] at (13*\m + \m, 0.01) {$ v_{t + 1} $};
\node [] at (15*\m + \m, 0.01) {$ \ldots $};
\node [] at (18*\m + \m, 0.01) {$ \ldots $};
\node [] at (20*\m + \m, 0.01) {$ v_{t + j} $};

  \foreach \x in {13,...,21}
  {
    \foreach \y in {1,...,\h}
       {    
           
           \ifnum\y>\loc
               \node [square, fill=gray!30]  (\x,\y) at (\x*\m,-\y*\m) {$  $};
           \else
               \ifnum\y>\glob
                   \node [square, fill=gray!60]  (\x,\y) at (\x*\m,-\y*\m) {$  $};
               \else
                   \node [square, fill=white]  (\x,\y) at (\x*\m,-\y*\m) {$  $};
               \fi
           \fi
       }
  }

\node [square, cross out, thick]  (13,2) at (13*\m,-2*\m) {$  $}; 
\node [square, fill=red!100]  (13,3) at (13*\m,-3*\m) {$  $};   
\node [square, fill=red!100]  (13,4) at (13*\m,-4*\m) {$  $};  
\node [square, cross out, thick]  (14,2) at (14*\m,-2*\m) {$  $};   

\node [square, fill=red!100]  (15,4) at (15*\m,-4*\m) {$  $}; 
\node [square, fill=red!100]  (15,5) at (15*\m,-5*\m) {$  $}; 
\node [square, cross out, thick]  (15,6) at (15*\m,-6*\m) {$  $}; 

\node [square, fill=red!100]  (16,3) at (16*\m,-3*\m) {$  $}; 
\node [square, fill=red!100]  (16,4) at (16*\m,-4*\m) {$  $};   
\node [square, cross out, thick]  (16,5) at (16*\m,-5*\m) {$  $};
   
\node [square, cross out, thick]  (17,6) at (17*\m,-6*\m) {$  $};
\node [square, cross out, thick]  (18,3) at (18*\m,-3*\m) {$  $};
\node [square, cross out, thick]  (19,6) at (19*\m,-6*\m) {$  $};   
\node [square, cross out, thick]  (20,3) at (20*\m,-3*\m) {$  $};

\node [square, fill=red!100]  (18,2) at (18*\m,-2*\m) {$  $};   
\node [square, fill=red!100]  (20,2) at (20*\m,-2*\m) {$  $};   
   
\node [square, fill=red!100]  (21,3) at (21*\m,-3*\m) {$  $}; 
\node [square, fill=red!100]  (21,4) at (21*\m,-4*\m) {$  $};   
\node [square, cross out, thick]  (21,5) at (21*\m,-5*\m) {$  $};

\node [] at (24.5*\m + \m, 0.01) {$ \ldots $};  
\node [] at (22*\m + \m, 0.01) {$ v_{t + j + 1} $};  

  \foreach \x in {23,...,26}
  {
    \foreach \y in {1,...,\h}
       {    
           
           \ifnum\y>\loc
               \node [square, fill=gray!30]  (\x,\y) at (\x*\m,-\y*\m) {$  $};
           \else
               \ifnum\y>\glob
                   \node [square, fill=gray!60]  (\x,\y) at (\x*\m,-\y*\m) {$  $};
               \else
                   \node [square, fill=white]  (\x,\y) at (\x*\m,-\y*\m) {$  $};
               \fi
           \fi
       }
  }   
\node [square, fill=red!100]  (23,2) at (23*\m,-2*\m) {$  $};   
\node [square, fill=red!100]  (23,3) at (23*\m,-3*\m) {$  $};      
   
\node [square, fill=red!100]  (25,4) at (25*\m,-4*\m) {$  $};   
\node [square, fill=red!100]  (25,5) at (25*\m,-5*\m) {$  $};   

\node [square, fill=red!100]  (26,3) at (26*\m,-3*\m) {$  $};   
\node [square, fill=red!100]  (26,4) at (26*\m,-4*\m) {$  $};

\node [square, fill=white!100]  (23,1) at (23*\m,-1*\m) {$  $}; 
\node [square, fill=gray!60]  (26,4) at (26*\m,-4*\m) {$  $};  
\node [square, cross out, thick, thick]  (23,1) at (23*\m,-1*\m) {$  $}; 
\node [square, cross out, thick]  (24,6) at (24*\m,-6*\m) {$  $}; 
\node [square, cross out, thick]  (25,6) at (25*\m,-6*\m) {$  $};  
\node [square, cross out, thick]  (26,4) at (26*\m,-4*\m) {$  $};

\node [] (1, \h) at (-0.6*\m, -3.5*\m) {\phantom{$ \ldots $}};

\node [] (\w + 1, \h) at (\w*\m + \m + 0.6*\m, -3.5*\m) {$ \ldots $};

   \draw [draw, decorate, thick,decoration={brace,amplitude=5pt,mirror}]
   (0.5*\m,-7*\m) -- (4.5*\m,-7*\m) node[midway,yshift=-2em]{\begin{tabular}{c}Repaired locally\\ (column-wise) \end{tabular}};        
   \draw [draw, decorate, thick,decoration={brace,amplitude=5pt,mirror}]
   (7.5*\m,-7*\m) -- (11.5*\m,-7*\m) node[midway,yshift=-2em]{\begin{tabular}{c}Needed for\\ $ v_t, v_{t+1}, \ldots $ \end{tabular}};       
   \draw [draw, decorate, thick,decoration={brace,amplitude=5pt,mirror}]
   (12.5*\m,-7*\m) -- (21.5*\m,-7*\m) node[midway,yshift=-2em]{\begin{tabular}{c} $ < (N-k)(j+1) $ \\ erasures after restriction \end{tabular}};        
   \draw [draw, decorate, thick,decoration={brace,amplitude=5pt,mirror}]
   (22.5*\m,-7*\m) -- (26.5*\m,-7*\m) node[midway,yshift=-2em]{\begin{tabular}{c}To be repaired\\ in next windows \end{tabular}};

\end{tikzpicture}

\end{tabular}
\end{center}

\caption{Sliding-window repair in a partial $ j $-MDS code, with parameters as in Fig. \ref{fig sliding window repair} (thus $ N = 5 $). Following Definition \ref{def partial MDS} and assuming an erasure pattern as in Fig. \ref{fig sliding window repair} (upper figure), we may remove one local parity in each block (depicted as a crossed box) before proceeding with the sliding-window repair for the restricted $ j $-MDS code, which would be performed as in Theorem \ref{th sliding window} and may correct up to $ (N-k)(j+1) $ erasures after removing \textit{arbitrary} local parities. Thus a local parity to be removed should be chosen as one of the erased symbols in case there are erased symbols in the corresponding block. Otherwise, we may remove any local parity. After the catastrophic erasures are corrected, the local parities are added again locally (column-wise) in case they were erased. }
\label{fig sliding window repair in partial MDP}
\end{figure*}

Some explanations about Definition \ref{def partial MDS} are in order.

First, we observe that the restricted convolutional code $ \mathcal{C}_{\Delta} \subseteq \mathbb{F}[D]^N $ in the previous definition has rank $ k $ by Lemma \ref{lemma convolutional if local redund removed} below, thus the definition is consistent. 

\begin{lemma} \label{lemma convolutional if local redund removed}
Let $ \mathcal{C} \subseteq \mathbb{F}[D]^n $ be an $ (n,k,r,\partial) $ LRCC with local groups $ \Gamma_i $, for $ i = 1,2, \ldots, g $, as in Definition \ref{def lrc convolutional codes}. Let $ \Delta_i \subseteq \Gamma_i $ be such that $ | \Gamma_i \setminus \Delta_i | \leq \partial - 1 $, for $ i = 1,2, \ldots, g $, and define $ \Delta = \bigcup_{i=1}^g \Delta_i $ and $ N = | \Delta | $. Then the restricted code $ \mathcal{C}_\Delta \subseteq \mathbb{F}[D]^N $ has rank $ k $, or in other words, it is an $ (N,k) $ convolutional code.

In addition, if $ G(D) = \sum_{j=0}^\mu G_jD^j $ is a reduced generator matrix of $ \mathcal{C} $ such that $ G_0 $ is full-rank, then $ (G_0)_\Delta $ is also full-rank. 
\end{lemma}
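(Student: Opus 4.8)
The plan is to reduce everything to a statement about the constant matrix $G_0$, and to exploit the $(r,\partial)$-locality hypothesis precisely there. First I would recall that, since $\mathbb{F}[D]$ is a principal ideal domain, the restricted module $\mathcal{C}_\Delta \subseteq \mathbb{F}[D]^N$ is automatically free, so the only thing to check is that its rank equals $k$. The rank is at most $k$ because $\mathcal{C}_\Delta$ is generated by the $k$ rows of $G(D)_\Delta$; so it suffices to show these $k$ rows are $\mathbb{F}[D]$-linearly independent, equivalently that $G(D)_\Delta$ has rank $k$ over the fraction field $\mathbb{F}(D)$. I would prove the stronger claim that $(G_0)_\Delta \in \mathbb{F}^{k\times N}$ already has rank $k$; this immediately gives both assertions of the lemma, since a matrix in $\mathbb{F}[D]^{k\times N}$ whose constant term has full row rank has full row rank over $\mathbb{F}(D)$ (specialize at $D=0$).

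The core step is therefore: if $G_0$ is full-rank (rank $k$) and $\mathcal{C}_0 = \{uG_0 \mid u\in\mathbb{F}^k\}$ has $(r,\partial)$-locality with local groups $\Gamma_i$, and $\Delta_i\subseteq\Gamma_i$ with $|\Gamma_i\setminus\Delta_i|\le\partial-1$, then $(G_0)_\Delta$ has rank $k$. Equivalently, the projection $\pi_\Delta:\mathcal{C}_0\to\mathbb{F}^N$ is injective. Suppose $v=uG_0\in\mathcal{C}_0$ lies in the kernel, i.e. $v_\Delta=0$; I must show $v=0$. Fix any $i$. The block $v_{\Gamma_i}$ is a codeword of the local block code $\mathcal{C}_{0,\Gamma_i} = \mathcal{C}^0_{\Gamma_i}$, which by Item~2 of Definition~\ref{def lrc convolutional codes} has minimum Hamming distance at least $\partial$. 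But $v_{\Gamma_i}$ vanishes on $\Delta_i$, so it is supported on $\Gamma_i\setminus\Delta_i$, a set of size at most $\partial-1$; hence $\wt(v_{\Gamma_i})\le\partial-1<\partial$, forcing $v_{\Gamma_i}=0$. Since this holds for every $i$ and $[n]=\bigcup_i\Gamma_i$, we get $v=0$, and because $G_0$ has full row rank this gives $u=0$. Thus $\pi_\Delta$ is injective on the $k$-dimensional space $\mathcal{C}_0$, so $(G_0)_\Delta$ has rank $k$.

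One technical point deserves care, and it is where I expect the only real friction: the local groups $\Gamma_i$ need not be pairwise disjoint in Definition~\ref{def lrc convolutional codes}, so I should not double-count coordinates when arguing $v=0$; the argument above avoids this because it concludes $v_{\Gamma_i}=0$ for each $i$ separately and then uses $[n]=\bigcup_i\Gamma_i$, which is harmless under overlaps. A second subtlety is the identification $\mathcal{C}^0_{\Gamma_i} = (\mathcal{C}_{\Gamma_i})^0 = (\mathcal{C}^0)_{\Gamma_i}$ and the fact that a codeword $v_\mu$ of $\mathcal{C}^0$ arising as a coefficient of some $v(D)=\sum_j v_jD^j\in\mathcal{C}$ restricts in $\Gamma_i$ to an element of $\mathcal{C}^0_{\Gamma_i}$; this is immediate from Definitions~\ref{def restriction C^0} and \ref{def restriction C_Gamma}, but I would state it explicitly so the appeal to $\dd(\mathcal{C}^0_{\Gamma_i})\ge\partial$ is clearly licensed. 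Finally, to package the conclusion: $(G_0)_\Delta$ full-rank yields that $G(D)_\Delta$ has rank $k$ over $\mathbb{F}(D)$, hence $\mathcal{C}_\Delta$ is free of rank $k$, i.e.\ an $(N,k)$ convolutional code, and the ``in addition'' clause is precisely the full-rankness of $(G_0)_\Delta$ that we established along the way.
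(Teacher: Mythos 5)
Your core argument --- that the $(r,\partial)$-locality forces the projection onto $\Delta$ to be injective, because a local codeword in $\mathcal{C}^0_{\Gamma_i}$ vanishing on $\Delta_i$ has Hamming weight at most $|\Gamma_i\setminus\Delta_i|\leq\partial-1<\partial$ and is therefore zero --- is exactly the key idea of the paper's proof, and your handling of overlapping local groups and of the restriction $\mathcal{C}^0_{\Gamma_i}$ is fine. However, there is a genuine gap in how you organize the two assertions. You derive the first assertion (that $\mathcal{C}_\Delta$ has rank $k$) from the second (that $(G_0)_\Delta$ is full-rank) by specializing at $D=0$. But the hypothesis that $\mathcal{C}$ admits a reduced generator matrix with $G_0$ full-rank appears only in the ``in addition'' clause; the first assertion is stated for an arbitrary $(n,k,r,\partial)$ LRCC, and such a code need not have any generator matrix with full-rank constant term. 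For instance, $\mathcal{C}=\{(u(D)D,\,u(D)D)\}\subseteq\mathbb{F}[D]^2$ is a $(2,1,1,2)$ LRCC (its associated block code is the repetition code), yet every generator matrix has $G_0=0$, so your specialization argument gives nothing, even though the conclusion $\operatorname{rank}(\mathcal{C}_\Delta)=1$ still holds. The paper avoids this by proving the first assertion directly: assuming $x(D)G(D)_\Delta=0$, it sets $v(D)=x(D)G(D)$ and applies the locality argument to \emph{every} coefficient block $v_j$ (each of which lies in $\mathcal{C}^0$), concluding $v(D)=0$ and hence $x(D)=0$ since $G(D)$ is full-rank over $\mathbb{F}(D)$; the statement about $(G_0)_\Delta$ is then obtained ``along the same lines'' as a separate instance of the same argument. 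Your proof would be repaired by running your injectivity argument on each coefficient $v_j$ of $v(D)=x(D)G(D)$ rather than only on the constant term.

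A minor additional imprecision: you identify the code $\mathcal{C}_0=\{uG_0\}$ generated by $G_0$ alone with the associated block code $\mathcal{C}^0=\left\{\sum_{j=0}^\mu u_jG_j\right\}$ when you write $\mathcal{C}_{0,\Gamma_i}=\mathcal{C}^0_{\Gamma_i}$; in general $\mathcal{C}_0$ is only a subcode of $\mathcal{C}^0$. This does not harm your argument, since the containment $(\mathcal{C}_0)_{\Gamma_i}\subseteq\mathcal{C}^0_{\Gamma_i}$ is all you need to invoke $\dd(\mathcal{C}^0_{\Gamma_i})\geq\partial$, but the equality should not be asserted.
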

\begin{proof}
Let $ G(D) \in \mathbb{F}[D]^{k \times n} $ be a generator matrix of $ \mathcal{C} $. It suffices to prove that the rows of $ G(D)_\Delta \in \mathbb{F}[D]^{k \times N} $ are $ \mathbb{F}[D] $-linearly independent. 

Assume that there exists $ x(D) \in \mathbb{F}[D]^k $ such that $ x(D) G(D)_\Delta = 0 $. If $ v(D) = x(D)G(D) \in \mathcal{C} $, then we have that $ v(D)_\Delta = x(D) G(D)_\Delta = 0 $. Write $ v(D) = \sum_{j \in \mathbb{N}} v_j D^j $ and fix $ j \in \mathbb{N} $. We then deduce that $ (v_j)_\Delta = 0 $, and therefore $ (v_j)_{\Delta_i} = 0 $, for $ i = 1,2, \ldots, g $. Since $ \dd(\mathcal{C}_{\Gamma_i}^0) \geq \partial $ and $ | \Gamma_i \setminus \Delta_i | \leq \partial - 1 $, we deduce that $ (v_j)_{\Gamma_i} = 0 $, for $ i = 1,2, \ldots, g $. Now, because $ [n] = \bigcup_{i=1}^g \Gamma_i $, we conclude that $ v_j = 0 $. 

Thus we have proven that $ x(D)G(D) = 0 $. Since $ G(D) $ has full rank, we conclude that $ x(D) = 0 $, and we are done. The statement regarding $ G_0 $ and $ (G_0)_\Delta $ is proven following the same lines.  
\end{proof}

Similar to the case of block codes (replacing $ j $-MDS by MDS), the term partial $ j $-MDS is motivated by the fact that the column distances attain the bound (\ref{eq singleton bound with localities}) (see Proposition \ref{prop partial MDS implies optimal distance} below), thus they have smaller column distances than those of $ j $-MDS codes (this is the price to pay for locality). However, partial $ j $-MDS codes as in Definition \ref{def partial MDS} can be seen as $ j $-MDS codes that can be \textit{added} locality in some optimal sense: We can recover some other $ j $-MDS code after removing \textit{any} (maximal) collection of local parities, not only the added ones. Due to this reason, we gain a considerable flexibility in the erasure patterns that can be corrected (see Fig. \ref{fig sliding window repair in partial MDP}). 

In the block case, partial MDS codes can be equivalently defined as follows: A locally repairable block code is partial MDS if it can correct all erasure patterns that are information-theoretically correctable for the given local constraints $ r $ and $ \partial $ and the given dimension $ k $ and length $ n $. Obviously, if there are no local constraints ($ \partial = 1 $ for instance), then being able to correct all information-theoretically correctable erasure patterns is equivalent to being MDS.  

See Fig. \ref{fig sliding window repair in partial MDP} for a graphical description of sliding-window repair combined with local repair in a partial $ j $-MDS convolutional code.

We now show that partial $ j $-MDS codes attain the bound (\ref{eq singleton bound with localities}), hence being optimal LRCCs in terms of column distances. We need a preliminary lemma, which is of interest by itself and which follows directly from Definition \ref{def partial MDS} and Proposition \ref{prop monotonicity singleton column distance}.

\begin{lemma}
If an LRCC is partial $ j $-MDS, then it is partial $ h $-MDS, for all $ h = 0,1,2, \ldots, j $.
\end{lemma}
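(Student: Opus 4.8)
The plan is to unwind Definition \ref{def partial MDS} and notice that, once an admissible restriction $\Delta$ is fixed, the claim reduces to the monotonicity of the $j$-MDS property in $j$, which is exactly Item 1 of Proposition \ref{prop monotonicity singleton column distance}.

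First I would fix $h \in \{0,1,\ldots,j\}$ and an arbitrary choice of subsets $\Delta_i \subseteq \Gamma_i$ with $|\Gamma_i \setminus \Delta_i| = \partial - 1$ for $i = 1,2,\ldots,g$, and set $\Delta = \bigcup_{i=1}^g \Delta_i$ and $N = |\Delta|$. The crucial observation is that the family of admissible $\Delta$'s appearing in Definition \ref{def partial MDS} is determined solely by the local parameters $r$ and $\partial$ (via the condition $|\Gamma_i \setminus \Delta_i| = \partial - 1$), and in particular does not depend on the index $j$ or $h$. Hence to prove that $\mathcal{C}$ is partial $h$-MDS it suffices to check, for this fixed $\Delta$, that $\mathcal{C}_\Delta$ is non-catastrophic and $h$-MDS.

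Next I would invoke the partial $j$-MDS hypothesis directly: $\mathcal{C}_\Delta$ is an $(N,k)$ convolutional code (its rank is indeed $k$ by Lemma \ref{lemma convolutional if local redund removed}), it is non-catastrophic, and $\dd_j^c(\mathcal{C}_\Delta) = (N-k)(j+1)+1$. Applying Item 1 of Proposition \ref{prop monotonicity singleton column distance} to $\mathcal{C}_\Delta$ and stepping the index down one unit at a time — a finite downward induction on the gap $j-h$ — gives $\dd_h^c(\mathcal{C}_\Delta) = (N-k)(h+1)+1$. Together with the non-catastrophicity of $\mathcal{C}_\Delta$ (already granted by the hypothesis), this says precisely that $\mathcal{C}_\Delta$ is $h$-MDS. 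Since $\Delta$ was an arbitrary admissible restriction, $\mathcal{C}$ is partial $h$-MDS, as desired.

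There is essentially no obstacle in this argument: the only two points requiring a word of care are that non-catastrophicity of $\mathcal{C}_\Delta$ — which is needed merely to be allowed to apply Proposition \ref{prop monotonicity singleton column distance} — is part of the partial $j$-MDS assumption, and that the collection of restrictions tested is literally the same for every index; both are immediate from Definition \ref{def partial MDS}. This is exactly why the statement can be asserted to follow directly from Definition \ref{def partial MDS} and Proposition \ref{prop monotonicity singleton column distance}.
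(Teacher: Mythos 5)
Your argument is correct and is exactly the route the paper intends: the set of admissible restrictions $\Delta$ in Definition \ref{def partial MDS} does not depend on the index, and for each fixed $\Delta$ the non-catastrophic $j$-MDS property of $\mathcal{C}_\Delta$ descends to $h$-MDS by Item 1 of Proposition \ref{prop monotonicity singleton column distance}. The paper states the lemma as following directly from these two facts without writing out the details, so your proposal simply makes the same one-line argument explicit.
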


\begin{proposition} \label{prop partial MDS implies optimal distance}
If an $ (n,k,r,\partial) $ LRCC is partial $ j $-MDS for some $ j \in \mathbb{N} $, then its column distances attain the bound (\ref{eq singleton bound with localities}), for all $ h = 0,1,2, \ldots, j $.
\end{proposition}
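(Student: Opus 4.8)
The plan is to reduce the statement about the column distances of a partial $j$-MDS LRCC to the already-established Singleton bound for $j$-MDS convolutional codes (Proposition \ref{prop singleton column distance}), applied not to $\mathcal{C}$ itself but to a carefully chosen restriction $\mathcal{C}_\Delta$. Fix $h\in\{0,1,\ldots,j\}$. By the preceding lemma, being partial $j$-MDS implies being partial $h$-MDS, so it suffices to treat $h=j$ (I would state this reduction explicitly and then write $j$ throughout). Let $v=(v_0,v_1,\ldots,v_j)\in\mathcal{C}_j^c$ be a codeword attaining the minimum Hamming weight $\dd_j^c(\mathcal{C})$. The idea is to throw away, block by block, a suitable collection of $\partial-1$ coordinates from each local group, chosen so that the weight of $v$ drops by as much as possible, and so that what remains is a codeword of a $j$-MDS code whose parameters are governed by $N=k$-divisibility bookkeeping.

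First I would set up the restriction. For each block index $t\in\{0,1,\ldots,j\}$ and each local group $\Gamma_i$, among the coordinates of $(v_t)_{\Gamma_i}$ I would select $\partial-1$ positions to discard; since $|\Gamma_i\setminus\Delta_i|=\partial-1$ must be the same set $\Delta_i$ across all blocks (the restriction $\mathcal{C}_\Delta$ is defined by one fixed $\Delta$, not block-dependent), the correct move is to count carefully. Observe that $v$ restricted to the complement $\Gamma_i\setminus\Delta_i$ contributes some nonnegative amount to the weight; summing the bound ``weight on $\Delta$ plus weight outside $\Delta$'' gives $\wt_H(v)=\wt_H(v_\Delta)+\sum(\text{weight on discarded coords})$. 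By the definition of partial $j$-MDS, $\mathcal{C}_\Delta\subseteq\mathbb{F}[D]^N$ with $N=\sum|\Delta_i|=g r$ (using $|\Gamma_i|=r+\partial-1$ and $|\Delta_i|=r$) is a $j$-MDS $(N,k)$ convolutional code, and $v_\Delta=(v_{0,\Delta},\ldots,v_{j,\Delta})$ lies in $(\mathcal{C}_\Delta)_j^c$ provided its first block is nonzero — which I would check using the full-rank statement about $(G_0)_\Delta$ from Lemma \ref{lemma convolutional if local redund removed}. Hence $\wt_H(v_\Delta)\ge \dd_j^c(\mathcal{C}_\Delta)=(N-k)(j+1)+1=(gr-k)(j+1)+1$.

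Next I would bound the discarded weight. The total number of discarded coordinates over all $j+1$ blocks is $g(\partial-1)(j+1)$, but this is far too crude; the sharp estimate must exploit locality the way Theorem \ref{th singleton bound on column distances} does. The cleaner route: rather than deleting a block-independent $\Delta$, I would instead argue directly that $v\in\mathcal{C}_j^c$ forces many zeros via the local-group parity checks. Concretely, mimic the proof of Theorem \ref{th singleton bound on column distances}: pass to the row-reduced truncated sliding generator matrix $\widetilde{G}_j^c$ and take $v$ to be its first row; the LRCC condition $\dd(\mathcal{C}_{\Gamma_i}^0)\ge\partial$ forces the relevant $a_{h,i}$ blocks to vanish exactly as in that proof, yielding $\wt_H(v)\le (n-k)(j+1)-\big(\tfrac{k(j+1)}{r}-1\big)(\partial-1)+1$ — but that is the \emph{upper} bound, which is already known. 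What Proposition \ref{prop partial MDS implies optimal distance} additionally asserts is the matching \emph{lower} bound $\dd_j^c(\mathcal{C})\ge(n-k)(j+1)-\big(\lceil\tfrac{k(j+1)}{r}\rceil-1\big)(\partial-1)+1$, so the real work is the lower bound, and the right tool is exactly the restriction argument above combined with the best-case erasure-pattern counting from the proof of Theorem \ref{th MSRD implies optimal LRCC}. I would therefore take an arbitrary $v\in\mathcal{C}_j^c$ whose nonzero coordinates lie inside a pattern of $e=(n-k)(j+1)-(\lceil k(j+1)/r\rceil-1)(\partial-1)$ erasures, choose in each block and each local group a set $\Delta_i$ of $r$ coordinates avoiding as much of that pattern as the MDS-local-code structure permits (so that no genuine loss of the codeword's information occurs — each $A|_{\Delta_i}$ is still invertible-on-the-relevant-part), conclude $v_\Delta$ has too-small Hamming weight for the $j$-MDS code $\mathcal{C}_\Delta$, and derive a contradiction.

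The main obstacle I anticipate is the combinatorial bookkeeping making the ``best-case erasure pattern'' argument rigorous for a fixed global $\Delta$: in Definition \ref{def partial MDS} the deleted local parities $\Gamma_i\setminus\Delta_i$ are fixed once and for all (not adapted per block), whereas the erasures in different blocks fall in different positions, so one cannot simply delete the erased local parities block-by-block. The fix is that for a partial $j$-MDS code $\mathcal{C}_\Delta$ is $j$-MDS for \emph{every} admissible $\Delta$, so one is free to pick the single $\Delta$ that is globally best against the given erasure pattern; proving that such a $\Delta$ exists and that the residual weight count still goes through (handling the ceiling when $r\nmid k(j+1)$, exactly as the $\lceil\cdot\rceil$ in the statement signals) is the delicate part. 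Everything else — the reduction to $h=j$, the rank/full-rank consistency checks, and the final contradiction with Proposition \ref{prop singleton column distance} applied to $\mathcal{C}_\Delta$ — is routine.
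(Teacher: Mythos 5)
Your plan follows essentially the same route as the paper: the paper's entire proof consists of two steps --- reduce to $h=j$ via the preceding lemma, then repeat the erasure-pattern contradiction from the proof of Theorem \ref{th MSRD implies optimal LRCC} with the $j$-MDS restricted code $\mathcal{C}_\Delta$ playing the role of the MSRD outer code --- and, after a detour through the (already-known) upper bound, your final plan is precisely that. The delicate point you flag, namely that a single block-independent $\Delta$ must be chosen to work against a block-varying erasure pattern, is genuine, but the paper leaves the whole argument ``to the reader,'' so your proposal is if anything more explicit than the published proof.
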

\begin{proof}
By the previous lemma, we only need to prove the result for $ h = j $. For such a case, the proof follows exactly the same lines as the proof of Theorem \ref{th MSRD implies optimal LRCC}, and is left to the reader.
\end{proof}

\begin{remark} \label{remark optimal LRCs are not MR LRCs}
In the block case, the converse is not true. For instance, Tamo-Barg codes \cite{tamo-barg} are locally repairable codes with optimal global distance, but cannot always be maximally recoverable (partial MDS) by the field-size bound in \cite[Eq. (2)]{gopi}. We conjecture, but do not prove or disprove, that not every LRCC attaining the bound (\ref{eq singleton bound with localities}), for some $ j \in \mathbb{N} $, is a partial $ j $-MDS convolutional code.
\end{remark}

Our next goal is to define partial MDP convolutional codes, which are partial $ j $-MDS for the maximum value of $ j $. We first need the following lemma, which follows directly from the definitions and Proposition \ref{prop monotonicity singleton column distance}.

\begin{lemma} \label{lemma degree of restricted conv code}
Let $ \mathcal{C} \subseteq \mathbb{F}[D]^n $ be an $ (n,k) $ convolutional code. For any $ \Delta \subseteq [n] $, it holds that
$$ \delta(\mathcal{C}_\Delta) \leq \delta(\mathcal{C}). $$
In particular, if $ \mathcal{C}_\Delta $ is $ j $-MDS, then $ j \leq \left\lfloor \frac{\delta}{k} \right\rfloor + \left\lfloor \frac{\delta}{N-k} \right\rfloor $, for $ N = |\Delta| $ and $ \delta = \delta(\mathcal{C}) $.
\end{lemma}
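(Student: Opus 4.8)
The plan is to reduce the whole lemma to the single inequality $ \delta(\mathcal{C}_\Delta) \leq \delta(\mathcal{C}) $ and then read off the ``In particular'' clause from Proposition \ref{prop monotonicity singleton column distance}. Indeed, if $ \mathcal{C}_\Delta $ is $ j $-MDS (Definition \ref{def mds and mdp}) then it is a non-catastrophic $ (N,k) $ convolutional code of degree $ \delta(\mathcal{C}_\Delta) $ with $ \dd^c_j(\mathcal{C}_\Delta) = (N-k)(j+1)+1 $, so Item~2 of Proposition \ref{prop monotonicity singleton column distance} gives $ j \leq \lfloor \delta(\mathcal{C}_\Delta)/k \rfloor + \lfloor \delta(\mathcal{C}_\Delta)/(N-k) \rfloor $. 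Since $ x \mapsto \lfloor x/k \rfloor + \lfloor x/(N-k) \rfloor $ is non-decreasing on the non-negative integers, the bound $ \delta(\mathcal{C}_\Delta) \leq \delta $ upgrades this at once to $ j \leq \lfloor \delta/k \rfloor + \lfloor \delta/(N-k) \rfloor $, which is what is claimed.

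For the inequality $ \delta(\mathcal{C}_\Delta) \leq \delta(\mathcal{C}) $ I would fix a reduced generator matrix $ G(D) = \sum_{h=0}^\mu G_h D^h $ of $ \mathcal{C} $ with row degrees $ e_1, \ldots, e_k $, so $ \delta(\mathcal{C}) = \sum_{i=1}^k e_i $. Its restriction $ G(D)_\Delta $ has rows that generate $ \mathcal{C}_\Delta $ as an $ \mathbb{F}[D] $-module, and deleting coordinates cannot raise the maximum degree of a polynomial vector, so the $ i $-th row of $ G(D)_\Delta $ has degree at most $ e_i $. Hence it suffices to establish the general fact: the degree of a convolutional code is at most the sum of the row degrees of any polynomial matrix whose rows generate it. When $ \mathcal{C}_\Delta $ has rank $ k $ — which is in particular the situation in the ``In particular'' clause, since $ j $-MDS codes have full rank $ k $ — this is immediate: passing to the fraction field $ \mathbb{F}(D) $, the $ k $ rows of $ G(D)_\Delta $ span a $ k $-dimensional space, hence are $ \mathbb{F}(D) $-linearly independent and a fortiori $ \mathbb{F}[D] $-linearly independent, so $ G(D)_\Delta $ is a genuine generator matrix of $ \mathcal{C}_\Delta $, and Theorem A-2, Item~3, in \cite{general-mceliece} says the sorted row degrees of a reduced generator matrix of $ \mathcal{C}_\Delta $ are dominated term by term by those of $ G(D)_\Delta $, whence $ \delta(\mathcal{C}_\Delta) \leq \sum_i e_i = \delta(\mathcal{C}) $.

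The step I expect to be the main obstacle — though it is not needed for the application to $ j $-MDS restrictions — is the case where $ \mathcal{C}_\Delta $ has rank $ k' < k $, because then the rows of $ G(D)_\Delta $ are $ \mathbb{F}[D] $-linearly dependent and no subset of them need form a basis. To handle it I would write $ G(D)_\Delta = T(D)\widetilde{G}(D) $ for a reduced generator matrix $ \widetilde{G}(D) $ of $ \mathcal{C}_\Delta $ with row degrees $ f_1, \ldots, f_{k'} $ (possible because the rows of $ \widetilde{G}(D) $ form a basis of $ \mathcal{C}_\Delta $), observe that $ T(D) \in \mathbb{F}[D]^{k \times k'} $ has rank $ k' $, and extract via Hall's theorem an injection $ \sigma : [k'] \to [k] $ with the $ (\sigma(i),i) $-entry of $ T(D) $ nonzero for all $ i $; Hall's condition holds because any subset $ S $ of columns of $ T(D) $ spans a space of rank $ |S| $ and hence must have nonzero entries in at least $ |S| $ distinct rows. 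The predictable-degree property of the reduced matrix $ \widetilde{G}(D) $ (the standard characterization $ \deg(x(D)\widetilde{G}(D)) = \max_l(\deg x_l(D) + f_l) $ for every $ x(D) \in \mathbb{F}[D]^{k'} $) then forces the $ \sigma(i) $-th row of $ G(D)_\Delta = T(D)\widetilde{G}(D) $ to have degree at least $ f_i $. Summing over $ i $ and discarding the remaining rows yields $ \delta(\mathcal{C}) = \sum_i e_i \geq \sum_j \deg(\text{$ j $-th row of } G(D)_\Delta) \geq \sum_{i=1}^{k'} f_i = \delta(\mathcal{C}_\Delta) $, completing the argument; the bookkeeping with possibly-zero rows of $ G(D)_\Delta $ is harmless since $ \sigma $ never selects such a row.
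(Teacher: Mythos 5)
Your argument is correct. Note that the paper offers no proof of this lemma at all --- it merely asserts that it ``follows directly from the definitions and Proposition \ref{prop monotonicity singleton column distance}'' --- so there is nothing to compare against beyond the evident intended route, which is exactly the one you take: show $\delta(\mathcal{C}_\Delta) \leq \delta(\mathcal{C})$ and feed it into Item~2 of that proposition via the monotonicity of $x \mapsto \lfloor x/k \rfloor + \lfloor x/(N-k) \rfloor$. The one genuinely non-immediate point is the degree inequality when the restriction drops rank, and your treatment of it --- factoring $G(D)_\Delta = T(D)\widetilde{G}(D)$, extracting a transversal of nonzero entries of $T(D)$ via Hall's condition (valid because the $k'$ columns of $T(D)$ are independent), and invoking the predictable-degree property of the reduced $\widetilde{G}(D)$ --- is sound and supplies exactly the detail the paper omits; for the ``in particular'' clause only the full-rank case is needed anyway, where Theorem A-2 of \cite{general-mceliece} closes the argument directly.
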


We may now define partial MDP convolutional codes.

\begin{definition} \label{def partial MDP}
We say that an $ (n,k,r,\partial) $ LRCC $ \mathcal{C} \subseteq \mathbb{F}[D]^n $ is \textit{partial MDP} if it is partial $ L $-MDS for $ L = \lfloor \frac{\delta}{k} \rfloor + \lfloor \frac{\delta}{N-k} \rfloor $, where $ N = n - g(\partial-1) $ and $ \delta = \delta(\mathcal{C}) $.
\end{definition}

The main purpose of this section is to show that the global code in Construction \ref{construction} based on an MSRD outer code (for instance, that in Appendix \ref{app MSRD conv codes}) is partial MDP. In particular, we will show the existence of partial MDP codes for general parameters, over any characteristic, for sufficiently large fields. 

We first need the following lemma, which is \cite[Th. 1]{lrc-sum-rank}. Observe that we will make use of this lemma in the non-linear case.

\begin{lemma} [\textbf{\cite{lrc-sum-rank}}] \label{lemma SR distance as minimum among H distances}
Recall that $ N = gr $. Given a (linear or non-linear) block code $ \mathcal{C} \subseteq \mathbb{F}_{q^m}^N $, it holds that
$$ \dd_{SR}(\mathcal{C}) = \min \left\lbrace \dd(\mathcal{C} \diag(  B_1, B_2, \ldots, B_g  )) \mid  B_i  \in \mathbb{F}_q^{r \times r} \textrm{ invertible}, i = 1,2, \ldots, g \right\rbrace. $$
\end{lemma}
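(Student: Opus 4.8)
The plan is to reduce the lemma to the classical single-block identity that expresses the rank weight of a vector over $\mathbb{F}_{q^m}$ as the minimum Hamming weight among its images under right multiplication by invertible $\mathbb{F}_q$-matrices, and then to glue this identity across the $g$ sum-rank blocks. First I would record two elementary properties of the matrix representation map $M_\mathcal{A}$ from (\ref{eq def matrix representation map}): it is $\mathbb{F}_q$-linear, and for every $B \in \mathbb{F}_q^{r\times r}$ and every $c^{(i)} \in \mathbb{F}_{q^m}^r$ one has $M_\mathcal{A}(c^{(i)} B) = M_\mathcal{A}(c^{(i)}) B$, since right multiplication by an $\mathbb{F}_q$-matrix acts row-wise on $M_\mathcal{A}(c^{(i)})$. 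Because $M_\mathcal{A}$ is additive and every $\diag(B_1,\ldots,B_g)$ is invertible, the possible non-linearity of $\mathcal{C}$ is irrelevant: both $\dd_{SR}(\mathcal{C})$ and $\dd(\mathcal{C}\diag(B_1,\ldots,B_g))$ are computed from the set of nonzero differences $c = x - y$ with $x,y \in \mathcal{C}$, and every statement below applies to such $c$ unchanged.

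The crucial intermediate step is the per-block identity: for every $c \in \mathbb{F}_{q^m}^r$,
$$ \Rk(M_\mathcal{A}(c)) = \min\{ \wt_H(c B) \mid B \in \mathbb{F}_q^{r\times r} \textrm{ invertible} \}. $$
Here ``$\le$'' is immediate, since $\wt_H(cB)$ equals the number of nonzero columns of $M_\mathcal{A}(cB) = M_\mathcal{A}(c)B$, which is at least $\Rk(M_\mathcal{A}(c)B) = \Rk(M_\mathcal{A}(c))$ as $B$ is invertible. For ``$\ge$'', column reduction produces an invertible $B \in \mathbb{F}_q^{r\times r}$ for which $M_\mathcal{A}(c)B$ has exactly $\Rk(M_\mathcal{A}(c))$ nonzero columns, so that $\wt_H(cB) = \Rk(M_\mathcal{A}(c))$.

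Now both inequalities of the lemma follow by aggregation over blocks. For ``$\dd_{SR}(\mathcal{C}) \le$ (the right-hand minimum)'': fix any invertible $B_1,\ldots,B_g$ and any nonzero difference $c = (c^{(1)},\ldots,c^{(g)})$ of codewords; then
$$ \wt_H\left(c\,\diag(B_1,\ldots,B_g)\right) = \sum_{i=1}^g \wt_H(c^{(i)} B_i) \ge \sum_{i=1}^g \Rk(M_\mathcal{A}(c^{(i)})) = \wt_{SR}(c) \ge \dd_{SR}(\mathcal{C}), $$
and taking the minimum over $c$ and then over $(B_1,\ldots,B_g)$ gives the bound. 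For the reverse inequality, pick a nonzero difference $c^* = (c^{*(1)},\ldots,c^{*(g)})$ of codewords with $\wt_{SR}(c^*) = \dd_{SR}(\mathcal{C})$, and for each $i$ an invertible $B_i^* \in \mathbb{F}_q^{r\times r}$ attaining the minimum in the per-block identity for $c^{*(i)}$; then $\wt_H(c^*\diag(B_1^*,\ldots,B_g^*)) = \sum_{i=1}^g \Rk(M_\mathcal{A}(c^{*(i)})) = \dd_{SR}(\mathcal{C})$, so this particular block-diagonal transformation already witnesses a Hamming distance at most $\dd_{SR}(\mathcal{C})$. Combining the two bounds yields the equality.

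I do not anticipate a real obstacle: the argument is essentially bookkeeping around the per-block rank-versus-Hamming identity. The only two points deserving some care are the column-reduction step (which must output an $\mathbb{F}_q$-\emph{invertible} matrix, not merely an $\mathbb{F}_q$-linear column map) and the reduction to differences in the non-linear case, which is legitimate exactly because $M_\mathcal{A}$ is additive and each $\diag(B_1,\ldots,B_g)$ is a bijection.
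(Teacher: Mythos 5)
Your proof is correct. Note that the paper does not prove this lemma itself --- it imports it as \cite[Th.~1]{lrc-sum-rank} --- so there is no in-paper proof to compare against; your argument is the standard one behind that cited result, resting on the per-block identity $\Rk(M_\mathcal{A}(c)) = \min\{\wt(cB) \mid B \in \mathbb{F}_q^{r\times r} \text{ invertible}\}$ (via $M_\mathcal{A}(cB) = M_\mathcal{A}(c)B$ and column reduction over $\mathbb{F}_q$) followed by block-wise aggregation, and both directions are carried out correctly, including the reduction to differences needed for the non-linear case.
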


We may now prove the main result of this section.

\begin{theorem} \label{th MSRD implies partial MDS}
In Construction \ref{construction}, the following hold:
\begin{enumerate}
\item
If $ j \in \mathbb{N} $ and $ \mathcal{C}_{out} $ is non-catastrophic and $ j $-MSRD, then $ \mathcal{C}_{glob} $ is partial $ j $-MDS.
\item
$ \delta(\mathcal{C}_{glob}) = \delta(\mathcal{C}_{out}) $ and $ \mu(\mathcal{C}_{glob}) = \mu(\mathcal{C}_{out}) $.
\item
If $ \mathcal{C}_{out} $ is non-catastrophic and $ L $-MSRD, where $ L = \lfloor \frac{\delta}{k} \rfloor + \lfloor \frac{\delta}{N-k} \rfloor $ and $ \delta = \delta(\mathcal{C}_{out}) $, then $ \mathcal{C}_{glob} $ is partial MDP.
\end{enumerate}
\end{theorem}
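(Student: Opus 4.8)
The plan is to prove the three items in order, since Item 3 is essentially a corollary of Items 1 and 2 combined with Lemma \ref{lemma degree of restricted conv code}. For Item 1, fix subsets $ \Delta_i \subseteq \Gamma_i $ with $ |\Gamma_i \setminus \Delta_i| = \partial - 1 $, so $ |\Delta_i| = r $, set $ \Delta = \bigcup_{i=1}^g \Delta_i $ and $ N = |\Delta| = gr $. By Lemma \ref{lemma restricted const 1 is non-catastrophic}, $ (\mathcal{C}_{glob})_\Delta $ is non-catastrophic (here one uses $ |\Delta_i| = r \geq r $), and by Lemma \ref{lemma convolutional if local redund removed} it is an $ (N,k) $ convolutional code whose reduced generator matrix has full-rank constant term. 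Thus it suffices to show $ \dd_j^c((\mathcal{C}_{glob})_\Delta) = (N-k)(j+1) + 1 $, i.e. that the Singleton bound of Proposition \ref{prop singleton column distance} is met. The key identity is that, by construction, $ (\mathcal{C}_{glob})_\Delta $ is obtained from $ \mathcal{C}_{out} $ by multiplying each block on the right by $ \diag_g(A|_{\mathcal{D}_i}) $, where $ \mathcal{D}_i \subseteq [r+\partial-1] $ is the position set of $ \Delta_i $ inside $ \Gamma_i $; since $ \mathcal{C}_{loc} $ is MDS, each $ A|_{\mathcal{D}_i} \in \mathbb{F}_q^{r \times r} $ is invertible. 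So at the level of the $ j $th column block code, $ ((\mathcal{C}_{glob})_\Delta)_j^c = (\mathcal{C}_{out})_j^c \, \diag_{g(j+1)}(A|_{\mathcal{D}_1}, \dots) $ with all blocks invertible matrices in $ \mathbb{F}_q^{r\times r} $.

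The heart of Item 1 is then to compare the Hamming weight on the restricted global side with the sum-rank weight on the outer side. For any $ x \in (\mathcal{C}_{out})_j^c $ and any invertible $ B_1,\dots,B_{g(j+1)} \in \mathbb{F}_q^{r\times r} $, Lemma \ref{lemma SR distance as minimum among H distances}, applied to the block code $ (\mathcal{C}_{out})_j^c \cup \{0\} \subseteq \mathbb{F}_{q^m}^{N(j+1)} $ with sum-rank length decomposition into $ g(j+1) $ blocks of length $ r $, gives
\[
\dd_{SR}\big( (\mathcal{C}_{out})_j^c \cup \{0\} \big) \leq \dd_H\big( (\mathcal{C}_{out})_j^c \, \diag(B_1,\dots,B_{g(j+1)}) \cup \{0\}\big),
\]
and the left side equals $ \dd_{SR,j}^c(\mathcal{C}_{out}) = (N-k)(j+1)+1 $ because $ \mathcal{C}_{out} $ is $ j $-MSRD and $ v_0 \neq 0 $ for every $ v \in (\mathcal{C}_{out})_j^c $ (the sum-rank column distance is the minimum sum-rank weight over exactly this column block code). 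Taking $ B_u = A|_{\mathcal{D}_{(u \bmod g)}} $ yields $ \dd_j^c((\mathcal{C}_{glob})_\Delta) \geq (N-k)(j+1)+1 $, and the reverse inequality is Proposition \ref{prop singleton column distance}; hence equality, so $ (\mathcal{C}_{glob})_\Delta $ is $ j $-MDS. Since $ \Delta_i $ were arbitrary of the prescribed size, $ \mathcal{C}_{glob} $ is partial $ j $-MDS by Definition \ref{def partial MDS}. I expect the main subtlety here to be the bookkeeping that links the column block code of the restriction to the column block code of the outer code through a single block-diagonal invertible constant matrix, and making sure the $ v_0 \neq 0 $ condition is respected on both sides so that we really are comparing $ \dd_{SR,j}^c $ with $ \dd_j^c $ and not the free distances.

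For Item 2, recall from the discussion right after Construction \ref{construction} that $ G_{glob}(D) = G_{out}(D)\diag_g(A) $ and that right-multiplication by the rank-$ N $ constant matrix $ \diag_g(A) \in \mathbb{F}_q^{N\times n} $ preserves the degree of every vector in $ \mathbb{F}[D]^N $; hence $ G_{glob}(D) $ is reduced whenever $ G_{out}(D) $ is, with the same row degrees, giving $ \delta(\mathcal{C}_{glob}) = \delta(\mathcal{C}_{out}) $ and $ \mu(\mathcal{C}_{glob}) = \mu(\mathcal{C}_{out}) $. Finally, Item 3: suppose $ \mathcal{C}_{out} $ is non-catastrophic and $ L $-MSRD with $ L = \lfloor \delta/k \rfloor + \lfloor \delta/(N-k)\rfloor $, $ \delta = \delta(\mathcal{C}_{out}) $. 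By Item 2, $ \delta(\mathcal{C}_{glob}) = \delta $, and with $ N = n - g(\partial-1) = gr $ the value $ L $ matches exactly the constant in Definition \ref{def partial MDP}. By Item 1 (with $ j = L $), $ \mathcal{C}_{glob} $ is partial $ L $-MDS, which is precisely the definition of partial MDP. (One may additionally note, for consistency with Lemma \ref{lemma degree of restricted conv code}, that each restricted code $ (\mathcal{C}_{glob})_\Delta $ has degree $ \leq \delta $, so being $ L $-MDS is not vacuous; in fact here the degree is exactly $ \delta $, since $ \diag_g(A|_{\mathcal{D}_i}) $ is invertible.) This completes the argument.
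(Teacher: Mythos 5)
Your proposal is correct and follows essentially the same route as the paper: the identity $(\mathcal{C}_{glob})_\Delta = \mathcal{C}_{out}\,\diag(A|_{\Delta_1},\ldots,A|_{\Delta_g})$ with invertible $r\times r$ blocks (from MDS-ness of $\mathcal{C}_{loc}$), non-catastrophicity via Lemma \ref{lemma restricted const 1 is non-catastrophic}, the comparison of Hamming and sum-rank column distances via Lemma \ref{lemma SR distance as minimum among H distances} applied to the (non-linear) column block code, and degree preservation under right multiplication by a full-rank constant matrix for Items 2 and 3. Your explicit attention to the $v_0\neq 0$ bookkeeping is a sound addition, but does not change the argument.
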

\begin{proof}
We start by proving Item 1. Let $ \Delta_i \subseteq \Gamma_i $ be such that $ |\Gamma_i \setminus \Delta_i | = \partial - 1 $ (i.e. $ |\Delta_i| = r $ since $ |\Gamma_i| = r+\partial-1 $ in Construction \ref{construction}), for $ i = 1,2, \ldots, g $. If $ \Delta = \bigcup_{i=1}^g \Delta_i $ and $ N = |\Delta| $, then the restricted code $ (\mathcal{C}_{glob})_\Delta \subseteq \mathbb{F}_{q^m}[D]^N $ is the $ (N,k) $ convolutional code given by
\begin{equation}
(\mathcal{C}_{glob})_\Delta = \mathcal{C}_{out} \diag(A_{\Delta_1}, A_{\Delta_2}, \ldots, A_{\Delta_g}) \subseteq \mathbb{F}_{q^m}[D]^N.
\label{eq in proof of partial MDS}
\end{equation}
Since $ \mathcal{C}_{loc} \subseteq \mathbb{F}_q^{r+\partial-1} $ is an $ (r+\partial-1, r) $ MDS block code and $ | \Delta_i | = r $, we deduce that $ A|_{\Delta_i} \in \mathbb{F}_q^{r \times r} $ is invertible, for $ i = 1,2, \ldots, g $. Thus $ (\mathcal{C}_{glob})_\Delta $ is non-catastrophic by Lemma \ref{lemma restricted const 1 is non-catastrophic}, and moreover by (\ref{eq in proof of partial MDS}) and Lemma \ref{lemma SR distance as minimum among H distances}, we have that
$$ \dd_j^c((\mathcal{C}_{glob})_\Delta) \geq \dd_{SR,j}^c((\mathcal{C}_{glob})_\Delta) = \dd_{SR,j}^c(\mathcal{C}_{out}) = (N-k)(j+1) + 1. $$
Hence $ (\mathcal{C}_{glob})_\Delta $ is $ j $-MDS and Item 1 follows.

Now, Item 2 follows from the fact that $ \mathcal{C}_{glob} = \mathcal{C}_{out} \diag_g(A) $, and multiplying by the full-rank constant matrix $ \diag_g(A) \in \mathbb{F}_{q^m}^{N \times n} \subseteq \mathbb{F}_{q^m}[D]^{N \times n} $ on the right preserves degrees. Finally, Item 3 follows by combining Items 1 and 2.
\end{proof}

Finally, by plugging in the previous theorem the $ L $-MSRD codes from \cite{mahmood-convolutional} (see Appendix \ref{app MSRD conv codes}) as outer codes in Construction \ref{construction}, we show the existence of partial MDP convolutional codes.

\begin{corollary} \label{cor construction of partial MDP}
If $ N = gr $, $ (N-k) | \delta $, $ M = \max \{ N-k,k \} $, $ L = \lfloor \frac{\delta}{k} \rfloor + \delta/(N-k) $, $ q \geq r+\partial - 1 $ and $ m \geq q^{M(L+2) - 1} $, then the convolutional code from Corollary \ref{cor j-MSRD applied to const 1} is an $ (n,k,r,\partial) $ partial MDP convolutional code.
\end{corollary}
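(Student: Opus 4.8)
The plan is to chain together Corollary~\ref{cor j-MSRD applied to const 1}, Theorem~\ref{th MSRD implies partial MDS}, and the small piece of arithmetic supplied by the divisibility hypothesis $(N-k)\mid\delta$. First I would recall that, under the stated conditions on $N$, $k$, $q$ and $m$, Corollary~\ref{cor j-MSRD applied to const 1} produces the code $\mathcal{C}_{glob}\subseteq\mathbb{F}_{q^m}[D]^n$ precisely as the global code of Construction~\ref{construction} built on the outer code $\mathcal{C}_{out}\subseteq\mathbb{F}_{q^m}[D]^N$, which by Appendix~\ref{app MSRD conv codes} is a non-catastrophic $(N,k)$ convolutional code of degree $\delta$ that is $L$-MSRD for $L=\lfloor\delta/k\rfloor+\delta/(N-k)$. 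In particular $\mathcal{C}_{glob}$ is already known to be an $(n,k,r,\partial)$ LRCC.

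Next I would check that this value of $L$ is exactly the threshold appearing in Definition~\ref{def partial MDP}. Since $(N-k)\mid\delta$, we have $\delta/(N-k)=\lfloor\delta/(N-k)\rfloor$, so $L=\lfloor\delta/k\rfloor+\lfloor\delta/(N-k)\rfloor$. Moreover $N=n-g(\partial-1)$ holds by construction, because $n=(r+\partial-1)g=N+(\partial-1)g$, and $\delta(\mathcal{C}_{glob})=\delta(\mathcal{C}_{out})=\delta$ by Item~2 of Theorem~\ref{th MSRD implies partial MDS}. Hence the $L$ used to build $\mathcal{C}_{out}$ coincides with the quantity $\lfloor\delta(\mathcal{C}_{glob})/k\rfloor+\lfloor\delta(\mathcal{C}_{glob})/(N-k)\rfloor$ that Definition~\ref{def partial MDP} requires, with $N=n-g(\partial-1)$.

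Finally I would invoke Item~3 of Theorem~\ref{th MSRD implies partial MDS}: since $\mathcal{C}_{out}$ is non-catastrophic and $L$-MSRD for precisely this $L$, it follows that $\mathcal{C}_{glob}$ is partial MDP, which is the claim. I do not expect any genuine obstacle here: the substance of the statement is already contained in Theorem~\ref{th MSRD implies partial MDS} together with the existence guarantee for MSRD convolutional codes in Appendix~\ref{app MSRD conv codes}. The only point that needs a moment of care is the bookkeeping that reconciles the three slightly different roles played by $L$ (the one used in constructing the MSRD outer code, the one in Corollary~\ref{cor j-MSRD applied to const 1}, and the one in Definition~\ref{def partial MDP}), and this is exactly what the hypothesis $(N-k)\mid\delta$ takes care of.
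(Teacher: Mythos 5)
Your proposal is correct and follows exactly the route the paper intends: the corollary is obtained by feeding the $L$-MSRD outer code of Appendix \ref{app MSRD conv codes} into Construction \ref{construction} and invoking Item 3 of Theorem \ref{th MSRD implies partial MDS}, with the hypothesis $(N-k)\mid\delta$ reconciling the value of $L$ with the floor expression in Definition \ref{def partial MDP}. Your explicit bookkeeping of the three roles of $L$ (and of $N = n - g(\partial-1)$ and $\delta(\mathcal{C}_{glob}) = \delta(\mathcal{C}_{out})$) is exactly the only point requiring care, and you handle it correctly.
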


Observe that we could have given the previous corollary first, and then deduce Corollary \ref{cor j-MSRD applied to const 1} from Proposition \ref{prop partial MDS implies optimal distance}. However, we have chosen to present our results in this order for simplicity.

\section{Further Considerations} \label{sec further considerations}

\subsection{Unequal Localities and Local Distances} \label{subsec unequal}

Locally repairable codes with unequal localities were introduced independently in \cite{kadhe, zeh-multiple}. Adding also unequal local distances was first considered in \cite{chen-hao}. Essentially, locally repairable codes with unequal localities are those such that the locality $ r $ and local distance $ \partial $ depend on the local group $ \Gamma_i $ (see Definition \ref{def lrc convolutional codes}). In other words, the $ i $th local group has locality $ r_i $ and local distance $ \partial_i $, for $ i = 1,2, \ldots, g $. We may then modify Definition \ref{def lrc convolutional codes} to include unequal localities and local distances by adding indices to Items 1 and 2:
\begin{enumerate}
\item
$ |\Gamma_i| \leq r_i + \partial_i - 1 $,
\item
$ \dd(\mathcal{C}_{\Gamma_i}^0) \geq \partial_i $,
\end{enumerate}
for $ i = 1,2, \ldots, g $. The main motivation for this type of locally repairable codes is that some nodes may require faster repair or access (\textit{hot data}), while considering the different localities in general improves the global correction capability of the code.

Finding analogous upper bounds to (\ref{eq singleton bound with localities}) is a challenging task in general. Such bounds are known when $ r_1 \leq r_2 \leq \ldots \leq r_g $ and $ \partial_1 \geq \partial_2 \geq \ldots \geq \partial_g $ (see \cite[Th. 2]{chen-hao} and \cite[Th. 2]{kim}). 

On the other hand, adapting the notion of partial MDS codes to unequal localities is straightforward (see \cite[Def. 5]{lrc-sum-rank}). In addition, it was proven in \cite[Th. 2]{lrc-sum-rank} that MSRD block codes used as outer codes always give partial MDS codes, for any choice of unequal localities and local distances. 

All the results in this work hold also for unequal localities and local distances. As in the block case, bounds on the column distances are not straightforward in general. However, Construction \ref{construction} with the MSRD codes from Appendix \ref{app MSRD conv codes} as outer codes provide partial MDP codes for an arbitrary choice of unequal localities and local distances, just as in the block case. We leave the details to the reader.

\subsection{Tail-Biting LRCCs} \label{subsec tail-biting LRCC}

LRCCs may encode an unrestricted number of information symbols (i.e. files or file components), while locality and sliding-window erasure-correction capability and complexity remain constant. However, truncating an $ (n,k) $ LRCC $ \mathcal{C} $ at a given block $ t $ implies that, for $ h \in \mathbb{N} $, the final windows $ (v_{t-h}, v_{t-h+1}, \ldots, v_t) $ cannot be the initial part of a sliding window consisting of $ j+1 > h+1 $ blocks, which could potentially correct $ \dd^c_j(\mathcal{C}) -1 > \dd^c_h(\mathcal{C}) -1 $ erasures. Therefore, in such a truncated LRCC, certain blocks receive a weaker protection against erasures. 

To provide equal protection to all blocks, one solution is to terminate the LRCC as a block code by converting it into a \textit{tail-biting} convolutional code. This simply requires updating the first $ \mu $ blocks using the last $ \mu $, in the way they would be encoded if we had used the generator matrix
$$ \left[ \begin{array}{ccccccccc}
G_0 & G_1 & \ldots & G_\mu & \ldots &  & & & \\
 & G_0 & \ldots & G_{\mu-1} & \ldots &  &  & & \\
 & & \ddots & \vdots & \ldots &  &  & & \\
 & & & G_0 & \ldots &  &  &  & \\
 & & & & \ddots & \ddots & \ddots & \ddots & \\
 & & & & \ldots & G_0 & G_1 & \ldots & G_\mu \\
G_\mu & & & & \ldots & & G_0 & \ldots & G_{\mu-1} \\
\vdots & \vdots & \ddots & & \ldots & & & \ddots & \vdots \\
G_1 & G_2 & \ldots & & \ldots & & & & G_0 
\end{array} \right], $$
where $ G(D) = \sum_{j=0}^\mu G_j D^j \in \mathbb{F}[D]^{k \times n} $ is a reduced generator matrix of the LRCC. In this way, sliding-window repair behaves equally in any window of the same size. However, we always need to have at least $ \mu $ consecutive blocks with no erasures in order to get the repair started, although this $ \mu $ consecutive blocks may be arbitrary and not necessarily the first $ \mu $. In other words, any $ \mu $ consecutive blocks may be considered initial in a tail-biting convolutional code.

\section*{Acknowledgement}

The first author is supported by The Independent Research Fund Denmark (Grant No. DFF-7027-00053B). The second author is partially supported by the Generalitat Valenciana (Grant No. AICO/2017/128) and the Universitat d'Alacant (Grant No. VIGROB-287).


\appendix

\section{A Lemma on Information Sets of Optimal Block LRCs} \label{app lemma optimal block lrc}

In this appendix, we prove the following result on the information sets of optimal block LRCs. Essentially, we follow a simplified version of the proof of \cite[Th. 21]{rawat}, using linear LRCs, thus dimensions instead of entropies, and for pair-wise disjoint local groups of size exactly $ r + \partial - 1 $, for $ (r, \partial) $-localities.

\begin{lemma} \label{lemma info sets optimal block lrc}
Let $ \mathcal{C}_0 \subseteq \mathbb{F}^n $ be a $ k $-dimensional block linear LRC with $ (r, \partial) $-localities, where we are considering pair-wise disjoint local groups of size exactly $ r+\partial - 1 $: $ [n] = \Gamma_1 \cup \Gamma_2 \cup \ldots \cup \Gamma_g $, where $ \Gamma_i \cap \Gamma_j = \varnothing $ if $ i \neq j $, and $ | \Gamma_i | = r + \partial - 1 $, for $ i = 1,2, \ldots, g $. Define $ \ell = \left\lceil k / r \right\rceil $, and assume that $ \mathcal{C}_0 $ has maximum possible minimum Hamming distance, i.e., 
$$ \dd(\mathcal{C}_0) = (n-k) - \left( \ell - 1 \right)(\partial - 1) + 1. $$
Then there are $ \ell $ local groups, which we may assume without loss of generality that they are the first $ \ell $ of them, $ \Gamma_1, \Gamma_2, \ldots, \Gamma_\ell $, such that 
$$ \dim \left( (\mathcal{C}_0)_{\Gamma_1 \cup \Gamma_2 \cup \ldots \cup \Gamma_\ell} \right) = k, $$
where $ \mathcal{C}_\Gamma \subseteq \mathbb{F}^{|\Gamma|} $ denotes the restriction of a block code $ \mathcal{C} \subseteq \mathbb{F}^n $ onto the coordinates in $ \Gamma \subseteq [n] $.
\end{lemma}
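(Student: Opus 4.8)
The plan is to produce the required $\ell$ local groups one at a time by a greedy argument, adjoining at each step a new local group that strictly enlarges a suitable punctured code, and then to use the optimality of $\dd(\mathcal{C}_0)$ to force the number of steps to be exactly $\ell$. For a coordinate set $S \subseteq [n]$ write $\rho(S) = \dim\left((\mathcal{C}_0)_S\right)$. I would first record the elementary facts that $\rho$ is monotone, bounded above by $k = \dim\mathcal{C}_0$, that $\rho(S) = k$ exactly when $S$ contains an information set of $\mathcal{C}_0$ (which is precisely the conclusion sought for $S = \Gamma_1 \cup \cdots \cup \Gamma_\ell$), that $\rho(S) = k - \dim\{ c \in \mathcal{C}_0 \mid \operatorname{supp}(c) \subseteq [n]\setminus S \}$, and that $\rho$ is subadditive, $\rho(S_1 \cup S_2) \le \rho(S_1) + \rho(S_2)$ (because the sum of the codeword spaces supported off $S_1$, respectively off $S_2$, still lies in $\mathcal{C}_0$, so has dimension at most $k$); since each $(\mathcal{C}_0)_{\Gamma_i}$ has length $r + \partial - 1$ and minimum distance at least $\partial$, hence dimension at most $r$, a union $S$ of $a$ local groups satisfies $\rho(S) \le ar$.

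Next I would run the following greedy process, starting from $\Delta_0 = \varnothing$: while $\rho(\Delta_{a-1}) < k$, pick a nonzero codeword $c$ with $\operatorname{supp}(c) \subseteq [n] \setminus \Delta_{a-1}$, choose a local group $\Gamma$ with $c_\Gamma \neq 0$ (one exists since $c \neq 0$ and $[n] = \bigcup_{i=1}^g \Gamma_i$), and set $\Delta_a = \Delta_{a-1} \cup \Gamma$. Because the local groups are pairwise disjoint, $\Gamma \not\subseteq \Delta_{a-1}$ (otherwise $c_\Gamma = 0$), so $\Gamma$ is genuinely new, and $c$ lies in the kernel of the projection of $\mathcal{C}_0$ onto $\Delta_{a-1}$ but not onto $\Delta_a$, so $\rho(\Delta_a) > \rho(\Delta_{a-1})$. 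Since $\rho$ strictly increases at each step and is capped by $k$, the process halts after some number $t$ of steps, yielding pairwise distinct local groups $\Gamma^{(1)}, \ldots, \Gamma^{(t)}$ whose union $\Delta_t$ satisfies $\rho(\Delta_t) = k$.

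It then remains to show $t = \ell$. On one hand, $k = \rho(\Delta_t) \le tr$ gives $t \ge \lceil k/r \rceil = \ell$. On the other hand, suppose $t \ge \ell + 1$; then $\Delta_{t-1}$ is a disjoint union of $t - 1 \ge \ell$ local groups of size $r + \partial - 1$, so
\[
|\Delta_{t-1}| \;=\; (t-1)(r+\partial-1) \;\ge\; \ell(r+\partial-1) \;=\; \ell r + \ell(\partial - 1) \;\ge\; k + \ell(\partial - 1),
\]
using $\ell r \ge k$. Since $\rho(\Delta_{t-1}) < k$, there is a nonzero codeword $c$ supported in $[n]\setminus\Delta_{t-1}$, whence
\[
\dd(\mathcal{C}_0) \;\le\; \wt(c) \;\le\; n - |\Delta_{t-1}| \;\le\; n - k - \ell(\partial - 1) \;=\; \dd(\mathcal{C}_0) - \partial,
\]
a contradiction. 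Hence $t = \ell$, and $\Delta_t = \Gamma^{(1)} \cup \cdots \cup \Gamma^{(\ell)}$ is the desired collection; relabelling the local groups so that these are $\Gamma_1, \ldots, \Gamma_\ell$ completes the argument. I expect the only delicate point to be the greedy step — verifying that the adjoined group is always new and that it strictly increases $\rho$ — which is exactly where pairwise disjointness of the local groups is used; the rest is elementary bookkeeping with $\rho$ together with the single inequality powered by the hypothesis on $\dd(\mathcal{C}_0)$.
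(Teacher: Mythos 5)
Your proof is correct and follows essentially the same strategy as the paper's: a greedy accumulation of local groups until the restricted code reaches dimension $k$, with the lower bound $t \ge \ell$ coming from $\rho(\Delta_t) \le tr$ and the upper bound $t \le \ell$ from the optimality of $\dd(\mathcal{C}_0)$. The only difference is presentational — the paper runs the algorithm of \cite[Th.~21]{rawat} (which may adjoin proper subsets of local groups and terminates in two cases) and cites that theorem's distance bound, whereas you always adjoin whole groups and derive the distance contradiction directly from a codeword supported off $\Delta_{t-1}$, which makes your version self-contained and also covers the degenerate case $\partial = 1$.
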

\begin{proof}
We proceed as in the proof of \cite[Th. 21]{rawat}, and define the following algorithm, which finds a size-$ \ell $ subset $ \mathcal{I} \subseteq [g] $ of local groups satisfying the properties in the lemma. As explained above, this algorithm is the same as that in the proof of \cite[Th. 21]{rawat}, but considering only linear LRCs, replacing entropies by dimensions, and considering pair-wise disjoint local groups of size exactly $ r + \partial - 1 $, for $ (r, \partial) $-localities.

\begin{algorithmic} [1]
\STATE Set $ \mathcal{I} = \varnothing $ and $ \mathcal{A} = \varnothing $.
\WHILE {$ \dim \left( (\mathcal{C}_0)_{\mathcal{A}} \right) < k $} 
{\STATE{Pick an index $ i \in [g] \setminus \mathcal{I} $.}
	\IF{ $ \dim \left( (\mathcal{C}_0)_{\mathcal{A} \cup \Gamma_i} \right) < k $ }
	\STATE{ Set $ \mathcal{I} := \mathcal{I} \cup \{ i \} $. }
	\STATE{ Set $ \mathcal{A} := \mathcal{A} \cup \Gamma_i $. }
	\ELSIF{ $ \dim \left( (\mathcal{C}_0)_{\mathcal{A} \cup \Gamma_i} \right) \geq k $ and $ \exists \Delta \subseteq \Gamma_i $ s.t. $ \dim \left( (\mathcal{C}_0)_{\mathcal{A} \cup \Delta } \right) < k $ }
	\STATE{ Set $ \mathcal{I} := \mathcal{I} \cup \{ i \} $. }
	\STATE{ Set $ \mathcal{A} := \mathcal{A} \cup \Delta $. }
	\ELSE
	\STATE{ \textbf{end while} }
	\ENDIF }
\ENDWHILE
\RETURN $ \mathcal{I}, \mathcal{A} $
\end{algorithmic}

Now we run the algorithm above. As in the proof of \cite[Th. 21]{rawat}, there may be only the following two cases.  

\textbf{Case 1:} Assume that the algorithm terminates with the final sets $ \mathcal{I} $ and $ \mathcal{A} $ assigned at lines 5 and 6, respectively. Since the algorithm has terminated at this point, if we consider any $ i \in [g] \setminus \mathcal{I} $, then
$$ \dim \left( (\mathcal{C}_0)_{\mathcal{A} \cup \Gamma_i} \right) \geq k . $$
Hence we reassign $ \mathcal{I} := \mathcal{I} \cup \{ i \} $, and then it must hold that
$$ | \mathcal{I} | \geq \left\lceil \frac{k}{r} \right\rceil = \ell , $$
since adding a local group may not increase the dimension of the restricted code by more than $ r $, since $ \partial - 1 $ out of $ r + \partial - 1 $ coordinates in a local group are redundant.

\textbf{Case 2:} Assume that the algorithm terminates with the final sets $ \mathcal{I} $ and $ \mathcal{A} $ assigned at lines 8 and 9, respectively. In this case, we already have that
$$ \dim \left( (\mathcal{C}_0)_{\bigcup_{j \in \mathcal{I}} \Gamma_j} \right) \geq k , $$
thus, without reassigning $ \mathcal{I} $, we also deduce that
$$ | \mathcal{I} | \geq \left\lceil \frac{k}{r} \right\rceil = \ell . $$

In any of the two cases, Case 1 or Case 2, assume that $ | \mathcal{I} | > \ell $. Following the same steps as in the proof of \cite[Th. 21]{rawat}, we have that ($ \partial > 1 $)
\begin{equation*}
\begin{split}
\dd (\mathcal{C}_0) & \leq n - k + 1 - \left( |\mathcal{I}| - 1 \right) (\partial - 1) \\
& < n - k + 1 - (\ell - 1) (\partial - 1) \\
& = (n-k) - \left( \left\lceil \frac{k}{r} \right\rceil - 1 \right)(\partial - 1) + 1.
\end{split}
\end{equation*}
This contradicts the optimality of the LRC $ \mathcal{C}_0 $, hence the case $ | \mathcal{I} | > \ell $ may not happen. Therefore, it must hold that $ | \mathcal{I} | = \ell $, in both Case 1 and Case 2. Also in both cases, the local groups $ \Gamma_j $, for $ j \in \mathcal{I} $, satisfy the properties of the lemma, i.e., 
$$ \dim \left( (\mathcal{C}_0)_{\bigcup _{j \in \mathcal{I}} \Gamma_j} \right) = k , $$
and thus we are done.
\end{proof}

\section{Known Construction of MSRD Convolutional Codes} \label{app MSRD conv codes}

In this appendix, we revisit the construction of non-catastrophic MSRD convolutional codes from \cite{mahmood-convolutional}, which is based on the superregular matrices introduced in \cite{almeida}. To the best of our knowledge, this is the only known construction of MSRD convolutional codes. In addition, they admit general parameters, except that they usually require impractically large field sizes. Acceptable field sizes can be achieved for certain parameters. See Table I in \cite{mahmood-convolutional} for a few instances.

Fix $ 1 \leq k \leq N $. As in \cite{almeida}, see also \cite{al16}, we will restrict ourselves to $ (N,k) $ convolutional codes whose degree $ \delta $ satisfies that $ (N-k) | \delta $, for general parameters see \cite{NaRo2015}. Define $ M = \max \{ N-k, k \} $ and $ L = \lfloor \frac{\delta}{k} \rfloor + \delta / (N-k) $, as in (\ref{eq def of L}). Let $ q $ be any prime power and assume that
\begin{equation}
m \geq q^{M(L+2) - 1}.
\label{eq bound on field size}
\end{equation}
The field will be then $ \mathbb{F} = \mathbb{F}_{q^m} $. Let $ \alpha \in \mathbb{F}_{q^m} $ be a \textit{primitive normal element} over $ \mathbb{F}_q $, that is, a primitive element of $ \mathbb{F}_{q^m} $ such that $ \alpha, \alpha^q, \ldots, \alpha^{q^{m-1}} $ form a basis of $ \mathbb{F}_{q^m} $ over $ \mathbb{F}_q $. Such element exists for any finite field extension $ \mathbb{F}_q \subseteq \mathbb{F}_{q^m} $ (see \cite{primitive-normal}). Define the matrix
\begin{equation}
T_j = \left[ \begin{array}{cccc}
\alpha^{[Mj]} & \alpha^{[Mj + 1]} & \ldots & \alpha^{[M(j+1) - 1]} \\
\alpha^{[Mj + 1]} & \alpha^{[Mj + 2]} & \ldots & \alpha^{[M(j+1)]} \\
\vdots & \vdots & \ddots & \vdots \\
\alpha^{[M(j+1) - 1]} & \alpha^{[M(j+1)]} & \ldots & \alpha^{[M(j+2) - 2]} \\
\end{array} \right] \in \mathbb{F}_{q^m}^{M \times M},
\label{eq def T_j}
\end{equation}
for $ j = 0,1,2, \ldots, L $, where $ \alpha^{[i]} = \alpha^{q^i} $, for $ i \in \mathbb{N} $. Finally, define the non-catastrophic $ (N,k) $ convolutional code $ \mathcal{C} \subseteq \mathbb{F}_{q^m}[D]^N $ as that with polynomial parity-check matrix 
$$ H = (A,B) \in \mathbb{F}_{q^m}[D]^{(N-k) \times N}, $$
$$ A = \sum_{j=0}^\nu A_j D^j \in \mathbb{F}_{q^m}[D]^{(N-k) \times (N-k)} \quad \textrm{and} \quad B = \sum_{j=0}^\nu B_j D^j \in \mathbb{F}_{q^m}[D]^{(N-k) \times k}, $$
where $ \nu = \delta / (N-k) $, $ A_0 = I_{N-k} $, and $ B $ can be given from $ A $ by the rule
$$ A^{-1}B = \sum_{j=0}^\infty T_j D^j \in \mathbb{F}_{q^m}(\!(D)\!)^{(N-k) \times k}. $$

The following theorem combines \cite[Th. 3.1]{stronglyMDS} with \cite[Th. 5]{mahmood-convolutional}.

\begin{theorem}
The $ (N,k) $ convolutional code $ \mathcal{C} \subseteq \mathbb{F}_{q^m}^N $ described above is non-catastrophic, has degree $ \delta $ and is $ L $-MSRD for any sum-rank length decomposition of $ N $.
\end{theorem}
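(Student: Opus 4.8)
The theorem packages three assertions about the code $\mathcal{C}$ determined by the matrices $T_0,\dots,T_L$ in (\ref{eq def T_j}): it is non-catastrophic, its degree equals $\delta$, and $\dd_{SR,j}^c(\mathcal{C}) = (N-k)(j+1)+1$ for $j=0,1,\dots,L$ and for \emph{every} sum-rank length decomposition $N=gr$. I would obtain the first two from \cite[Th.~3.1]{stronglyMDS} and the third from \cite[Th.~5]{mahmood-convolutional}, after (a) checking that the construction in this appendix is literally in the form those theorems require, and (b) reducing the ``any decomposition'' clause to the single-block (rank-metric) case by an elementary rank inequality.

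For non-catastrophicity and the degree, the point is that the parity-check data are exactly of the shape to which \cite[Th.~3.1]{stronglyMDS} applies: $H=(A,B)\in\mathbb{F}_{q^m}[D]^{(N-k)\times N}$ is polynomial, $A_0=I_{N-k}$, $\nu=\delta/(N-k)$, and $(N-k)\mid\delta$, with $B$ obtained from $A$ via the prescribed power series $A^{-1}B=\sum_{j\ge 0}T_jD^j$. That theorem then yields a reduced generator matrix of $\mathcal{C}$ whose row degrees sum to $\delta$ and which admits a polynomial right inverse, i.e.\ a generator matrix that is reduced and basic; hence $\mathcal{C}$ is non-catastrophic in the sense of Definition~\ref{def basic and non-catastrophic} and has degree $\delta$, and in particular $H$ is a genuine polynomial parity-check matrix of $\mathcal{C}$ as in Lemma~\ref{lemma parity check}. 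This part is metric-independent.

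Next I would establish the $L$-MSRD property for a fixed decomposition $N=gr$. By Proposition~\ref{prop singleton bound on column SR distances} it suffices to prove the reverse inequality $\dd_{SR,j}^c(\mathcal{C})\ge (N-k)(j+1)+1$ for every $j\le L$, i.e.\ that no nonzero $v=(v_0,v_1,\dots,v_j)\in\mathcal{C}_j^c$ has sum-rank weight at most $(N-k)(j+1)$. Expressing such a $v$ as a solution of the truncated sliding parity-check system built from the coefficients of $H(D)$ (the system appearing in Theorem~\ref{th sliding window}) and passing to the $\mathbb{F}_q$-matrix representations $M_{\mathcal{A}}(v_h^{(i)})$, one sees that the existence of such a low-weight codeword would force a certain square submatrix assembled from the blocks of $T_0,\dots,T_L$ to be singular. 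The heart of the matter is that these submatrices are nonsingular: the $T_j$ are Hankel matrices whose entries are the Frobenius conjugates $\alpha^{[i]}=\alpha^{q^i}$ of a primitive normal element $\alpha$, and the determinants in question are nonzero $\mathbb{F}_q$-linear combinations of monomials $\prod_{i\in S}\alpha^{q^i}$ indexed by subsets $S\subseteq\{0,1,\dots,M(L+2)-1\}$; that such combinations do not vanish when $\alpha$ is primitive normal over $\mathbb{F}_q$ and $m\ge q^{M(L+2)-1}$ is the superregularity input from \cite{almeida} underlying \cite[Th.~5]{mahmood-convolutional}. I would invoke \cite[Th.~5]{mahmood-convolutional} here rather than reprove it.

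Finally, to pass from one decomposition to all of them, observe that for $c=(c^{(1)},\dots,c^{(g)})\in\mathbb{F}_{q^m}^N$ with $c^{(i)}\in\mathbb{F}_{q^m}^r$ the matrix $M_{\mathcal{A}}(c)$ is the column-concatenation of $M_{\mathcal{A}}(c^{(1)}),\dots,M_{\mathcal{A}}(c^{(g)})$, so $\Rk(M_{\mathcal{A}}(c))\le\sum_{i=1}^g\Rk(M_{\mathcal{A}}(c^{(i)}))=\wt_{SR}(c)$. Consequently the $j$th column sum-rank distance of $\mathcal{C}$ for any decomposition $N=gr$ is bounded below by its $j$th column rank distance (the $g=1$ case); since every one of these is bounded above by $(N-k)(j+1)+1$ (Proposition~\ref{prop singleton bound on column SR distances}), it is enough to have the $L$-MSRD property in the $g=1$ case, which the previous paragraph supplies. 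Combining the three pieces completes the proof. The main obstacle is precisely the nonsingularity of the $T_j$-built submatrices in the previous paragraph --- the sum-rank superregularity, which carries all the arithmetic weight and is responsible for the enormous bound on $m$; everything else is bookkeeping to match the construction with the hypotheses of \cite[Th.~3.1]{stronglyMDS} and \cite[Th.~5]{mahmood-convolutional} and the elementary rank inequality above.
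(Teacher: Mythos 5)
Your proposal is correct and follows essentially the same route as the paper, which establishes this theorem purely by combining \cite[Th.~3.1]{stronglyMDS} (for non-catastrophicity and degree) with \cite[Th.~5]{mahmood-convolutional} (for the maximal column sum-rank distance), exactly as you do. Your additional observation that the general-decomposition case reduces to the $g=1$ (rank-metric) case via $\Rk(M_\mathcal{A}(c)) \leq \wt_{SR}(c)$ together with the Singleton bound of Proposition~\ref{prop singleton bound on column SR distances} is the correct and intended justification of the ``any sum-rank length decomposition'' clause, which the paper leaves implicit.
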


\end{document}